\documentclass[11pt]{article}
\usepackage[letterpaper,margin=1in]{geometry}
\usepackage[T1]{fontenc}
\usepackage{amsmath,amsthm,mathtools}
\usepackage{newtxtext,newtxmath}
\usepackage{microtype,graphicx,booktabs,array}
\usepackage[round,authoryear]{natbib}
\usepackage{enumitem}
\usepackage{algorithm,algpseudocode}
\usepackage{caption,placeins,needspace}

\usepackage[hidelinks]{hyperref}
\usepackage[nameinlink,noabbrev]{cleveref}
\hypersetup{pdftitle={Anchored multiple testing: a simple use of e-closure to improve procedures},pdfsubject={Research draft on FDR, PRDS, Gaussian scores, and exact computation},pdfkeywords={FDR, e-closure, PRDS, Gaussian correlation, BH}}
\usepackage{fancyhdr}
\setlist{topsep=3pt,itemsep=2pt}
\newtheorem{theorem}{Theorem}[section]
\newtheorem{lemma}[theorem]{Lemma}
\newtheorem{proposition}[theorem]{Proposition}
\newtheorem{corollary}[theorem]{Corollary}
\theoremstyle{definition}

\theoremstyle{remark}
\newtheorem{remark}[theorem]{Remark}
\newcommand{\E}{\mathbb E}
\newcommand{\Prob}{\mathbb P}
\newcommand{\R}{\mathbb R}
\newcommand{\ind}{\mathbf 1}
\newcommand{\FDP}{\operatorname{FDP}}
\newcommand{\FDR}{\operatorname{FDR}}
\newcommand{\BH}{\operatorname{BH}}
\newcommand{\BY}{\operatorname{BY}}
\newcommand{\cR}{\mathcal R}
\newcommand{\phibar}{\overline\Phi}

\newcommand{\given}{\,\middle|\,}
\DeclareMathOperator{\Unif}{Unif}
\DeclareMathOperator{\Beta}{Beta}
\title{\vspace{-1.3em}Anchored multiple testing:\\a transparent use of e-closure to improve FDR procedures}
\author{Aaditya Ramdas\\ \texttt{aramdas@stanford.edu}}
\date{\today}
\begin{document}
\maketitle
\vspace{-2em}
\begin{abstract}
The recent e-closure method can recover every procedure that controls FDR (and other expectation losses). But the recovered e-collection is ``self-referential'' and gives no insight on how to improve the procedure (if improvable), and some recent improvements have been somewhat opaque.
We introduce an elementary new technique called anchoring that exploits looseness in existing FDR proofs to enlarge a baseline multiple-testing procedure's self-referential local e-value. The resulting e-closure thus transparently retains every baseline discovery (and usually adding more) and controlling the false discovery rate under the same conditions as the baseline. 
To show that this principle is broadly applicable, we use it to improve a large suite of multiple testing procedures:
(i) Anchored-BH dominates the Benjamini-Hochberg (BH) procedure under PRDS while being incomparable to Goeman's recent closed-BH, (ii) Anchored-BY dominates the Benjamini-Yekutieli (BY) procedure under arbitrary dependence while being incomparable to closed-BY, (iii) For two-sided Gaussian p-values (under appropriate covariance conditions), Anchored-2BH dominates running BH twice at half the level on two one-sided p-values,
(iv) Anchored-dBH dominates dependence-adjusted BH, (v) Anchored e-BH dominates e-BH and is incomparable to closed e-BH,
(vi) Anchored SeqStep+ improves the original (including selective and adpative variants) while preserving ordered rejection structures. All of these are accomplished in sorting or quadratic time. 
The appendix shows how to dominate Shifted-BH (for two-sided arbitrarily correlated Gaussians) and NDBH (for negatively dependent p-values), IndBH (for graph-dependent p-values), conformal multiple testing and the support line procedure for boundary FDR control. 
\end{abstract}
\noindent\textbf{Keywords:} false discovery rate; e-values; ordered testing; conditional calibration; closure.

\section{Introduction}
We test $m$ hypotheses $H_1,\ldots,H_m$ using p-values $p_1,\ldots,p_m$. Write $[m]=\{1,\ldots,m\}$, $I_0\subseteq[m]$ for the true nulls, and $m_0=|I_0|$. For a rejection set $R$ and a hypothetical null set $S$, let $\FDP_S(R)=|R\cap S|/(|R|\vee1)$; the false discovery rate is $\FDR(R)=\E\FDP_{I_0}(R)$. Our target is $\FDR\le\alpha$.

An e-value for $H_S=\bigcap_{i\in S}H_i$ is a nonnegative random variable with expectation at most one under $H_S$. The e-closure method begins by constructing an e-collection, which is a collection of local e-values $(E_S)_{S \subseteq [m]}$. E-closure then constructs the collection
\begin{equation}\label{eq:introclosure}
 \cR_\alpha(E)=\left\{R\subseteq[m]:
 \FDP_S(R)\le\alpha E_S\text{ for every nonempty }S\subseteq[m]\right\}.
\end{equation}
One can then pick any  element $R$ of $\cR_\alpha(E)$ and call that the rejected set. 
any such choice controls FDR, since by definition of $ \cR_\alpha(E)$, the FDP of $R$ must be bounded by $\alpha E_{I_0}$ \citep{XuClosure}; taking expectations yields the claim.

E-closure can recover \emph{any} level-$\alpha$ FDR-controlling procedure; denoting the procedure's rejection set by $A$, one can choose local e-values:
\begin{equation}\label{eq:canonical}
 E_S^A=\frac{\FDP_S(A)}\alpha
      =\frac{|A\cap S|}{\alpha(|A|\vee1)}.
\end{equation}
These are valid local e-values because $S\subseteq I_0$ implies $\FDP_S(A)\le\FDP_{I_0}(A)$. Their closure is exactly $\{\varnothing,A\}$: singleton constraints exclude indices outside $A$, and the constraint $S=R$ excludes every nonempty proper subset of $A$.

We call these local e-values as ``self-referential'' because they use the rejected set $A$ in order to define the local e-values such that e-closure eventually reproduces the set $A$.
Recovering a procedure is therefore automatic; improving it is not. 

One direct route to improve procedures is to find \emph{pointwise larger, still-valid local e-values}, with strict increases that unlock new rejection sets. Such local-evidence improvements yield the closed Benjamini-Yekutieli (BY) \citep{XuClosure,SunWang2026} method that outperforms the original.

The Benjamini-Hochberg (BH) case is subtler: canonical closed BH uses a horizon matrix and coupled adaptive thresholds \citep{Goeman2026}. It is a pointwise uniform improvement of BH, but the construction is highly nontrivial and by no means elementary.

Our aim in this paper is to provide a very simple and very broadly applicable technique that can deliver improvements to some of the best known FDR procedures over the past decade or two. We first write it down for BH before we isolate the key ingredients that allow us to apply it in many settings.

\paragraph{Anchored-BH.}
Let $B=\BH_\alpha(p)$ for superuniform p-values that are positively regression dependent on the true-null coordinates (PRDS). BH controls FDR at $\alpha m_0/m$ \citep{BH1995,BY2001}. First, for each $S \subseteq [m]$, choose any local e-values $F_S$ for $H_S$ (this is a composite null hypothesis testing problem; not a multiple testing problem and thus can be done ignoring $H_{[m]\backslash S}$). Then define the local e-value as a particular average of the self-referential e-value and $F_S$:
\begin{equation}\label{eq:introanchor}
 E_S=E_S^B+\left(1-\frac{|S|}{m}\right)F_S
 =\frac{|B\cap S|}{\alpha(|B|\vee1)}+
       \left(1-\frac{|S|}{m}\right)F_S,
\end{equation}
where the second term is omitted for $S=[m]$. It is not hard to verify that this is indeed a valid local e-value for every $S$. When these are fed into the e-closure rule \eqref{eq:introclosure}, we note that $B$ itself lies in $ \cR_\alpha(E)$, because for every $S$,
\[
 \FDP_S(B)=\alpha E_S^B\le\alpha E_S.
\]
Thus $B$ is always in $ \cR_\alpha(E)$. Define $\bar B$ as a measurable largest-cardinality certified \emph{superset} of $B$, with any fixed tie rule. This procedure is \emph{Anchored-BH}; the choice of $(F_S)$ specifies a member of the family.

\begin{theorem}[Anchored-BH]\label[theorem]{thm:anchor}
Suppose the null p-values are superuniform and the entire vector is PRDS on the true-null coordinates. For any family of local e-values $F_S$ for $H_S$, the values in \eqref{eq:introanchor} are local e-values and
\[
 \bar B\supseteq B\quad\text{for every data vector},\qquad
 \FDR(\bar B)\le\alpha.
\]
No independence between $B$ and $(F_S)$ is required.
\end{theorem}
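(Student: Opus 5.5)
The proof has three parts: the validity of each $E_S$ as a local e-value, the containment $\bar B\supseteq B$, and FDR control. The last two follow from the e-closure principle once validity is in hand. So the real work is to show that for every nonempty $S\subseteq I_0$,
\[
 \E\left[\frac{|B\cap S|}{\alpha(|B|\vee1)}\right]+\left(1-\frac{|S|}{m}\right)\E[F_S]\le 1 .
\]
The expectations are taken under the true distribution, which satisfies $H_S$. By linearity of expectation, no independence or joint structure between $B$ and $F_S$ is needed; this is where the final remark of the theorem comes from.

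For the first term I would reuse the standard PRDS proof of BH, keeping track of its exact constant. Write $|B\cap S|/(|B|\vee1)=\sum_{i\in S}\ind\{i\in B\}/(|B|\vee1)$, and decompose each summand over the events $\{|B|=k\}$. Each summand is then bounded by $\Prob(p_i\le \alpha k/m,\ |B|=k)/k$. Summing over $k$ and using superuniformity of $p_i$ together with the PRDS property on $i\in S\subseteq I_0$, the usual Benjamini--Yekutieli argument gives $\E[\ind\{i\in B\}/(|B|\vee1)]\le \alpha/m$ for each null $i$.

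Summing over $i\in S$ yields $\E[E_S^B]\le |S|/m$. The key point is that the PRDS argument is applied only to indices in $S$, and all of them are nulls. Since $\E[F_S]\le1$ under $H_S$, the total expectation is at most $|S|/m+(1-|S|/m)=1$. For $S=[m]$ the second term is omitted and the bound is $\E[E_{[m]}^B]\le1$ directly.

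Next comes the containment $\bar B\supseteq B$. For every nonempty $S$, $\FDP_S(B)=\alpha E_S^B\le\alpha E_S$, because $F_S\ge0$ and $1-|S|/m\ge0$. Hence $B\in\cR_\alpha(E)$, so the family of certified supersets of $B$ is nonempty, and $\bar B$, a largest-cardinality member under a fixed measurable tie rule, exists and contains $B$. I would note briefly that this family is finite and determined by finitely many measurable inequalities, so the selection is measurable.

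Finally, for FDR control: $\bar B\in\cR_\alpha(E)$, so if $I_0\neq\varnothing$ we have $\FDP_{I_0}(\bar B)\le\alpha E_{I_0}$ pointwise. Taking expectations and applying validity at $S=I_0$ gives $\FDR(\bar B)\le\alpha$. If $I_0=\varnothing$, the FDP is zero.

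The main obstacle is the first step: checking that the PRDS bound $\E[\ind\{i\in B\}/(|B|\vee1)]\le\alpha/m$ holds for each null $i$ individually and uses PRDS only with respect to that coordinate. This is what lets the slack $1-|S|/m$ be spent on $F_S$. I would follow the Blanchard--Roquain formulation, which uses the self-consistency condition $B\subseteq\{i:p_i\le\alpha|B|/m\}$ and the nonincreasing map $p\mapsto|B|$. That approach gives the per-coordinate bound cleanly, with no assumption on the non-null coordinates beyond PRDS.
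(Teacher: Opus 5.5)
Your proposal is correct and follows the same route as the paper: local validity from the per-coordinate PRDS contribution bound $\E[\ind\{i\in B\}/(|B|\vee1)]\le\alpha/m$ summed over $S$ together with linearity, then pointwise certification of $B$ via $\FDP_S(B)=\alpha E_S^B\le\alpha E_S$, then the closure constraint at $I_0$. The paper cites the BY/BR bound rather than re-deriving it, and also records the simultaneous guarantee $\E[\max_{R\in\cR_\alpha}\FDP_{I_0}(R)]\le\alpha$, which your final step already gives.
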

The expectation calculation behind the theorem is
\[
 \E E_{I_0}\le\frac{m_0}{m}+
       \left(1-\frac{m_0}{m}\right)\E F_{I_0}\le1.
\]
Section~\ref{sec:anchor} gives the complete proof and the simultaneous guarantee. In particular, both $B$ and $\bar B$ remain in one certified collection: the researcher may choose to retain BH after seeing the augmentation. This is a distinction from the published closed-BH e-collection, which need not simultaneously certify the original BH set \citep[Section~6]{Goeman2026}. It is also a stronger requirement than pointwise containment alone.

Based on the preceding display, the gain from anchoring comes from the positive term $\left(1-\frac{m_0}{m}\right)\E F_{I_0}$ that helps to exhaust the FDR budget (when the first term leaves some slack). Thus, we may interpret anchoring as a generic form of \emph{adaptivity to the unknown null proportion}, but it is a technique that works even under dependence, as is evident above. It is also clear that if $m_0 \approx m$, the second additive term is negligible, and we should not expect to see any gains (as is the case with all adaptive procedures); we call this the sparse regimes. When there are a larger fraction of non-nulls (dense regime), the gains of anchoring are more evident.

\paragraph{A generic $F_S$ based on a larger pilot level.}
The auxiliary evidence $F_S$ need not come from a new test statistic. Run the baseline method again at a fixed level $\beta>\alpha$, such as $\beta=2\alpha<1$. For BH, let $C=\BH_\beta(p)$. Its coordinatewise contribution bound gives the individual e-values and their local averages
\[
 x_i^{(\beta)}=\frac{m\ind\{i\in C\}}{\beta(|C|\vee1)},
 \qquad
 F_S^{(\beta)}=\frac1{|S|}\sum_{i\in S}x_i^{(\beta)}
             =\frac{m|C\cap S|}{\beta|S|(|C|\vee1)}.
\]
Substituting $F_S^{(\beta)}$ into \eqref{eq:introanchor} uses the liberal run to support additions while retaining the original level-$\alpha$ guarantee. The two runs use the same data; no sample split or independence between the runs is needed. This single-pilot anchor already gives strict improvements in examples and simulations, and its complete e-closure computation takes $O(m\log m)$ time. Fixed convex mixtures of pilot levels provide a more flexible alternative. Section~\ref{sec:pilots} develops these choices.

The principle is generic, but its normalization must follow the available guarantee. If a procedure $A_\beta$ has known coordinatewise bounds
$\E[\ind\{i\in A_\beta\}/(|A_\beta|\vee1)]\le v_i(\beta)$ under $H_i$, then dividing that contribution by a positive $v_i(\beta)$ gives an individual e-value. Section~\ref{sec:coordinate} isolates this reusable property, which appears before the final sum over true nulls in many FDR proofs \citep{BY2001,BR2008,FithianLei2022,SarkarZhang2025}. 

The paper derives several other e-values $F_S$ under various assumptions like PRDS or negatively dependent p-values, Gaussian data with unknown arbitrary covariance, and so on, and these need to be designed and used well in order to actual deliver a nontrivial improvement over the baseline procedure.


\paragraph{Summary of main contributions.}
We continue our treatment of BH in Section~\ref{sec:anchor}; Sections~\ref{sec:prds} and \ref{sec:gaussian} provide further score designs for PRDS p-values and Gaussian statistics. \emph{Anchored-BH} dominates\footnote{
Throughout, ``dominates'' means pointwise containment with the same level-$\alpha$ guarantee, and a strict addition on some inputs. Since the new procedure is \emph{anchored} to the original one, we call the broader technique as \emph{anchored e-closure} or \emph{anchored multiple testing}, and name the new procedure by appending \emph{Anchored-} to the original. 
} BH under PRDS and is incomparable in general with canonical closed BH. \emph{Anchored-2BH} (Section~\ref{sec:twobh}) studies the case of two-sided Gaussian p-values, and it dominates the union of BH at level $\alpha/2$ in each Gaussian tail, under the appropriate covariance restrictions validating those runs. 
\emph{Anchored-BY} (Section~\ref{sec:by}) dominates BY under arbitrary dependence and is incomparable in general with closed BY. \emph{Anchored e-BH} (Section~\ref{sec:anchored-ebh}) dominates e-BH and is incomparable to closed e-BH. \emph{Anchored-dBH} (Section~\ref{sec:dbh}) improves the conditional-calibration procedure of \citet{FithianLei2022}.

\emph{Anchored ordered testing} (Section~\ref{sec:ordered}) exploits a different looseness. SeqStep+, Selective SeqStep+, and Adaptive SeqStep need not have BH's $m_0/m$ bound. We retain each realized FDP contribution and replace its unused conditional budget by a conditional e-value. 

Appendices~\ref{app:negative} and \ref{app:shifted} derive anchored procedures that dominate the BH procedure under negative-dependence~\cite{CRW2024}, and dominate shifted-BH that works on the two-sided Gaussian p-values under arbitrary covariances, with their algorithms and experiments \citep{SarkarZhang2025,GhoshSarkar2025,GhoshSarkar2026}.  Appendix~\ref{app:anchored-indbh} dominates ind-BH~\citep{NguyenFithian2025}, and Appendix~\ref{app:anchored-sl} dominates the support line procedure for boundary FDR control~\citep{XiangSoloffFithian2025} and Appendix~\ref{app:anchored-conformal} dominates existing conformal FDR procedures.

Section~\ref{sec:general-losses} records a coordinatewise extension to general expected losses; the procedures and experiments studied here concern FDR only. The mechanisms behind the above improvements also directly extend to a plethora of procedures not covered in this paper, like the DAGGER procedure (a directed acyclic graph of hypotheses), p-filter (multilayer FDR control), and QuTE (decentralized multiple testing on networks via quantization), but finding computational shortcuts to anchored closure in these structured settings has not been explored, so these have been omitted and we refer to the original papers for more details~\citep{pfilter2017,pfilter2019,qute2017,dagger2019}.


Appendix~\ref{sec:computation} gives quadratic algorithms for ordered BH/BY baselines, the $O(m\log m)$ single-pilot shortcut, and a near-quadratic extension to arbitrary baseline masks. Ordered procedures have separate structural scans in Section~\ref{sec:ordered}. The simulations in Section~\ref{sec:simulations} follow the five main families in order: averaged-score BH anchors help mainly in dense regimes, whereas ordered conditional anchors can help with few signals. Appendix~\ref{app:sparse} explains the former limitation. Related building blocks include e-value calibration, e-BH, conditional calibration and donation \citep{VovkWang2021,WangRamdas2022,FithianLei2022,LeeRen2024,XuFischerRamdas2026}.

\paragraph{Caveat.} The pointwise dominance of our anchored methods, along with instances showing strict dominance, prove that all the original methods are strictly inadmissible. (BH, BY and e-BH were already known to be inadmissible, using e-closure, with the BH improvement being highly nontrivial and arguably opaque.) This does not mean that our anchored procedures are admissible, or good/optimal in any other sense, but it does mean that they can safely replace the earlier methods whenever those were employed. Our aim here is simplicity and generality: we  anticipate that studying particular settings in more detail may help develop even ``better'' improvements via e-closure, and those may be preferable in practice, but we anticipate that our method will remain a useful pedagogical tool and an easy first pass as at improving other procedures.

\paragraph{AI disclosure.} The main anchoring idea was developed by the author without AI. The author also immediately recognized the key ingredients that made the technique generally applicable, and designed the various e-value families $(F_S)$ found in the paper. After that, AI helped flesh out some details in each of the examples, and also ran all the experiments. The author used AI iteratively to write and revise the paper, and then manually edited for clarity. The author takes responsibility for all content.

\section{Anchored-BH under PRDS}\label{sec:anchor}
\subsection{Notation and the BH guarantee}
Write $[m]=\{1,\ldots,m\}$, $I_0\subseteq[m]$ for the true-null indices, and $m_0=|I_0|$. For a rejection set $R$ and a hypothetical null set $S$, define
\[
 \FDP_S(R)=\frac{|R\cap S|}{|R|\vee1},\qquad
 \FDR(R)=\E\FDP_{I_0}(R).
\]
An intersection e-value $F_S$ is nonnegative and has expectation at most one whenever all hypotheses in $S$ are true, within the stated model class. No independence between different $F_S$ is required.

Throughout this section, the p-values are superuniform under their nulls and PRDS on the true-null coordinates; Section~\ref{sec:prds} states the dependence condition precisely. Let $p_{(1)}\le\cdots\le p_{(m)}$ and
\[
 b=\max\left(\{k:p_{(k)}\le\alpha k/m\}\cup\{0\}\right),
 \qquad B=\BH_\alpha(p)=\{i:p_i\le\alpha b/m\},
\]
with $B=\varnothing$ when $b=0$. The usual BH guarantee is
\begin{equation}\label{eq:bhbudget}
 \FDR(B)\le\alpha\frac{m_0}{m}.
\end{equation}
The same BH proof also bounds the contribution of any subset $S$ of true nulls by $\E\FDP_S(B)\le\alpha|S|/m$ \citep{BY2001,BR2008}.

\subsection{Proof and simultaneous retention}
For later use, write the collection in \eqref{eq:introanchor} as
\begin{equation}\label{eq:anchor}
 \begin{aligned}
 E_S&=\frac{|B\cap S|}{\alpha(b\vee1)}+
       \left(1-\frac{|S|}{m}\right)F_S
          &&(\varnothing\ne S\subsetneq[m]),\\
 E_{[m]}&=\frac{\ind\{b>0\}}\alpha.
 \end{aligned}
\end{equation}
The notation
\begin{equation}\label{eq:cert}
 \cR_\alpha=\cR_\alpha(E),\qquad
 \bar B\in\mathop{\arg\max}_{R\in\cR_\alpha:\,R\supseteq B}|R|
\end{equation}
fixes the Anchored-BH output, with ties resolved measurably. It does not define $\bar B$ as the union of all certified sets: such a union need not itself be certified.

\begin{proof}[Proof of \cref{thm:anchor}]
For any true intersection $S$, the BH contribution bound following \eqref{eq:bhbudget} and linearity give
\[
 \E E_S\le\frac{|S|}{m}+
       \left(1-\frac{|S|}{m}\right)\E F_S\le1.
\]
For $S=[m]$ this is BH's global-null guarantee. For the actual null set $I_0\ne\varnothing$, every certified $R$ satisfies $\FDP_{I_0}(R)\le\alpha E_{I_0}$, so in fact
\begin{equation}\label{eq:simulfdr}
 \E\left[\max_{R\in\cR_\alpha}\FDP_{I_0}(R)\right]\le\alpha.
\end{equation}
When $I_0=\varnothing$ the FDP is zero. Pointwise,
$\FDP_S(B)=\alpha E_S^B\le\alpha E_S$ for every $S$, so $B\in\cR_\alpha$ and the superset search in \eqref{eq:cert} is nonempty. Thus $\bar B\supseteq B$ and $\FDR(\bar B)\le\alpha$.
\end{proof}

No independence between $B$ and $F_S$ is used. Both may depend on the full data, provided the auxiliary e-value bound is justified. The theorem permits post hoc choice between $B$ and $\bar B$, because both belong to the same collection in \eqref{eq:simulfdr}. It does not say that every arbitrary set lying between them is certified.

The full-intersection constraint reveals an important limitation. If $B=\varnothing$, then $E_{[m]}=0$ and no nonempty rejection set is certified. Thus Anchored-BH has exactly BH's event of making any rejection, and exactly BH's global-null FDR. It is an augmentation, not a new detector of configurations missed entirely by BH. Appendix~\ref{sec:sparse} gives a stronger limitation for a broad class of averaged scores.

\subsection{A larger-level pilot and adaptive staircases}\label{sec:pilots}
Fix $\beta\in(\alpha,1)$, let $C=\BH_\beta(p)$, and write $c=|C|$. A default is $\beta=2\alpha$ when $\alpha<1/2$. The coordinatewise BH bound in Section~\ref{sec:coordinate}, applied at level $\beta$, implies that
\begin{equation}\label{eq:pilotscore}
 x_i^{(\beta)}=\frac{m\ind\{i\in C\}}{\beta(c\vee1)}
\end{equation}
obeys $\E x_i^{(\beta)}\le1$ under $H_i$. Hence, for $n=|S|$,
\begin{equation}\label{eq:pilotlocal}
 F_S^{(\beta)}=\frac1n\sum_{i\in S}x_i^{(\beta)}
              =\frac{m|C\cap S|}{\beta(c\vee1)n}
\end{equation}
is a local e-value under $H_S$, simply by linearity. The resulting anchor is
\begin{equation}\label{eq:pilotanchor}
 E_S=\frac{|B\cap S|}{\alpha(b\vee1)}+
 \left(1-\frac n m\right)
 \frac{m|C\cap S|}{\beta(c\vee1)n},\qquad B=\BH_\alpha(p).
\end{equation}
The second term is omitted for $S=[m]$. The pilot supplies evidence, not a final level-$\alpha$ rejection set: the output is the largest superset of $B$ certified by \eqref{eq:pilotanchor}, not automatically $C$. Every added hypothesis must belong to $C$, since a singleton outside $C$ has zero evidence. Appendix~\ref{sec:pilotfast} reduces the full closure to two scalar inequalities and gives an $O(m\log m)$ implementation, including both BH runs.

A fixed mixture of pilot levels gives an adaptive staircase:
\begin{equation}\label{eq:adaptivestair}
 F_S=\frac1n\sum_{i\in S}\sum_{\ell=1}^L
                w_{n,\ell}x_i^{(\beta_\ell)},
 \qquad w_{n,\ell}\ge0,\quad \sum_{\ell=1}^Lw_{n,\ell}=1.
\end{equation}
The levels and weights may depend on $n$, but must be prespecified or independently trained. Each mixture remains a local e-value even though all pilot runs use the same testing data. Selecting the most favorable observed pilot or mixture without further calibration is not covered. A larger pilot level offers more candidate discoveries but lowers their normalized scores, so increasing $\beta$ need not increase the anchor's rejection set.

These constructions already produce useful gains. In the independent, one-sided Gaussian experiment with $m=100$, $\alpha=0.05$, $90\%$ nonnulls and mean $3$, the single $\beta=0.10$ pilot raises power from $89.53\%$ for BH to $94.88\%$; it adds no power in the corresponding $10\%$ and $50\%$ nonnull settings (Section~\ref{sec:sim-bh}). Thus a liberal rerun is a simple reusable score, not a promise of improvement in every configuration. Appendix~\ref{app:incomparability} gives pointwise comparisons with canonical closed BH. Section~\ref{sec:prds} develops fixed-score and adaptive-hinge alternatives.

\subsection{The coordinatewise inequality behind the improvement}\label{sec:coordinate}
The useful property of BH is more precise than its final FDR guarantee. For every fixed $\gamma\in(0,1)$, its proof under PRDS gives
\begin{equation}\label{eq:bhcoordinate}
 \E\left[\frac{\ind\{i\in\BH_\gamma(p)\}}
                   {|\BH_\gamma(p)|\vee1}\right]
       \le\frac\gamma m,\qquad i\in I_0.
\end{equation}
This is the coordinatewise dependency-control bound of \citet{BY2001,BR2008}. Equivalently, use the threshold $T_\gamma=\gamma|\BH_\gamma(p)|/m$ in \eqref{eq:adaptiveindicator}; its ratio equals $m/\gamma$ times the contribution in \eqref{eq:bhcoordinate}, with both zero when the BH set is empty.

The same inequality does two jobs. Summing it over $i\in S$ at level $\alpha$ leaves the unused local budget $1-|S|/m$ in \eqref{eq:anchor}. Normalizing each contribution at level $\beta$ gives the individual pilot e-values in \eqref{eq:pilotscore}. This is why a proof that bounds each null contribution separately can be reused without deriving a new test statistic.

More generally, suppose a family $A_\gamma$ has deterministic positive bounds $v_i(\gamma)$ satisfying
\begin{equation}\label{eq:coordinatefamily}
 \E\left[\frac{\ind\{i\in A_\gamma\}}{|A_\gamma|\vee1}\right]
       \le v_i(\gamma)\quad(i\in I_0),
 \qquad \sum_{i=1}^m v_i(\gamma)\le\gamma.
\end{equation}
For a fixed pilot level $\beta>\alpha$, the scores
\begin{equation}\label{eq:genericpilotscore}
 \widetilde x_i^{(\beta)}=
 \frac{\ind\{i\in A_\beta\}}
      {v_i(\beta)(|A_\beta|\vee1)}
\end{equation}
are individual e-values; their averages over $S$ and prespecified convex mixtures supply auxiliary local e-values. Meanwhile, the baseline $A_\alpha$ admits the weighted anchor
\begin{equation}\label{eq:coordinateanchor}
 E_S=\frac{\FDP_S(A_\alpha)}\alpha+
       \left(1-\frac1\alpha\sum_{i\in S}v_i(\alpha)\right)F_S.
\end{equation}
Its expectation is at most one by \eqref{eq:coordinatefamily}, and its first term retains the baseline certificate pointwise. Neither independence between the runs nor nesting of $A_\alpha$ and $A_\beta$ is needed for this argument. The nesting of BH runs is used only for the especially simple pilot shortcut.

Such bounds are available for BY under arbitrary dependence (Section~\ref{sec:by}), for dependence-adjusted BH after conditional calibration (Section~\ref{sec:dbh}), and with unequal $v_i$ for shifted BH (Appendix~\ref{app:shifted}). The relevant contribution inequalities are extracted from their original proofs, rather than inferred from the final FDR statement. A total guarantee $\FDR(A_\gamma)\le\gamma$ alone gives the weaker local e-value $\FDP_S(A_\gamma)/\gamma$, but does not justify the individual normalization \eqref{eq:genericpilotscore} or the coefficient $1-|S|/m$. Ordered testing therefore receives a separate conditional-budget treatment in Section~\ref{sec:ordered}.

\subsection{Other losses under coordinatewise bounds}\label{sec:general-losses}
The anchoring calculation is not specific to FDP. 
The following specialization identifies when the same coordinatewise budgets can enlarge a baseline for another error criterion.

\begin{proposition}[Coordinatewise anchoring for an expected loss]\label[proposition]{prop:general-loss}
Let $L(R,S)\ge0$, with $L(R,\varnothing)=0$, be a fixed loss. Suppose nonnegative contributions $d_i(A)$ and deterministic weights $w_i\ge0$ satisfy
\begin{equation}\label{eq:loss-coordinate}
 L(A,S)\le\sum_{i\in S}d_i(A)\quad\text{for every }S,
 \qquad \E_{H_i}d_i(A)\le\alpha w_i,
 \qquad\sum_{i=1}^m w_i\le1.
\end{equation}
For any intersection e-values $F_S$, set
\begin{equation}\label{eq:loss-anchor}
 E_S^L=\frac{L(A,S)}\alpha+
 \left(1-\sum_{i\in S}w_i\right)F_S.
\end{equation}
Then these are valid local e-values, $A$ belongs to
\[
 \cR_\alpha^L(E)=\{R:L(R,S)\le\alpha E_S^L\text{ for every nonempty }S\},
\]
and every measurable certified superset $\bar A\supseteq A$ satisfies
$\E L(\bar A,I_0)\le\alpha$. In fact the loss is controlled simultaneously over the certified collection.
\end{proposition}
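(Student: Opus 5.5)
The argument follows the proof of \cref{thm:anchor}, with FDP replaced by $L$ and the BH contribution bound replaced by \eqref{eq:loss-coordinate}. It has three steps: validity of the local e-values, membership of $A$ in the certified collection, and simultaneous loss control. The only points needing care are the sign of the coefficient $1-\sum_{i\in S}w_i$ and what exactly is meant by ``$E_{H_i}$'' in the contribution bound.

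\emph{Validity.} Nonnegativity of $E_S^L$ follows from $L\ge0$, $F_S\ge0$, and $\sum_{i\in S}w_i\le\sum_{i=1}^m w_i\le1$. Now fix a nonempty $S$ and work under $H_S$. Apply the first inequality in \eqref{eq:loss-coordinate} at this $S$, then linearity:
\[
 \E E_S^L\le\frac1\alpha\sum_{i\in S}\E d_i(A)+\Bigl(1-\sum_{i\in S}w_i\Bigr)\E F_S
 \le\sum_{i\in S}w_i+\Bigl(1-\sum_{i\in S}w_i\Bigr)=1.
\]
The step I expect to need the most care is the second inequality. It requires reading $\E_{H_i}d_i(A)\le\alpha w_i$ as holding under every distribution in the model where $H_i$ is true, and in particular under every distribution satisfying $H_S$ with $i\in S$. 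This is the same convention the paper uses for \eqref{eq:coordinatefamily}, and I would state it explicitly. No independence between $A$ and $F_S$ is used, since only linearity of expectation enters.

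\emph{Retention and simultaneous control.} For every nonempty $S$ we have pointwise
\[
 L(A,S)=\alpha\cdot\frac{L(A,S)}{\alpha}\le\alpha E_S^L,
\]
because the added term is nonnegative. Hence $A\in\cR_\alpha^L(E)$, so the set of certified supersets of $A$ is nonempty.

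For control, consider two cases.
\begin{itemize}
\item If $I_0=\varnothing$, then $L(R,\varnothing)=0$ for every $R$, and the claim is trivial.
\item If $I_0\ne\varnothing$, every $R\in\cR_\alpha^L(E)$ satisfies $L(R,I_0)\le\alpha E_{I_0}^L$ by taking $S=I_0$ in the definition of the collection. Therefore $\max_{R\in\cR_\alpha^L(E)}L(R,I_0)\le\alpha E_{I_0}^L$ pointwise. Taking expectations and using validity at $S=I_0$ under the true distribution gives $\E\bigl[\max_{R\in\cR_\alpha^L(E)}L(R,I_0)\bigr]\le\alpha$.
\end{itemize}

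Since $\cR_\alpha^L(E)$ is a finite collection, the maximum is measurable. Any measurable certified $\bar A\supseteq A$ satisfies $L(\bar A,I_0)$ at most this maximum, so $\E L(\bar A,I_0)\le\alpha$. This also yields the simultaneous statement, and it covers post hoc choice between $A$ and $\bar A$.
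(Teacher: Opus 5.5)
Your proof is correct and follows the paper's argument: the two expectation budgets $\sum_{i\in S}w_i$ and $1-\sum_{i\in S}w_i$ add to one, the nonnegative auxiliary term certifies $A$ pointwise, and the bound $\max_{R\in\cR_\alpha^L(E)}L(R,I_0)\le\alpha E_{I_0}^L$ gives simultaneous control. Your extra remarks cover details the paper leaves implicit: nonnegativity of the coefficient, the case $I_0=\varnothing$, measurability of the maximum, and reading $\E_{H_i}$ as holding throughout the null model.
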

\begin{proof}
Under $H_S$, the expectation of the first term in \eqref{eq:loss-anchor} is at most $\sum_{i\in S}w_i$ and that of the second is at most its complementary weight. Pointwise, $L(A,S)\le\alpha E_S^L$, so $A$ is certified. Finally,
$\max_{R\in\cR_\alpha^L(E)}L(R,I_0)\le\alpha E_{I_0}^L$.
\end{proof}

We record the general statement for context, but study only FDR-controlling procedures and FDR experiments in this paper and leave other losses for future work.

\section{Scores for PRDS p-values}\label{sec:prds}
A null p-value is superuniform if $\Prob(p_i\le t)\le t$. We use the lower-tail formulation of PRDS: for every increasing measurable set $D$ and true-null $i$, the function $t\mapsto\Prob(p\in D\mid p_i\le t)$ is nondecreasing wherever defined. The usual point-conditioned PRDS assumption implies this formulation. All statements involving an adaptive threshold concern PRDS of the entire vector on its true-null coordinates, not only dependence among the nulls.

\subsection{A capped-sum comparison without convexity}
\begin{lemma}[PRDS capped sums]\label[lemma]{lem:prds}
Let $S$ index true nulls in a superuniform PRDS vector. For fixed nonnegative nonincreasing functions $g_i$ and independent $U_i\sim\Unif(0,1)$,
\begin{equation}\label{eq:prdscap}
 \E\min\left\{\sum_{i\in S}g_i(p_i),C\right\}
 \le
 \E\min\left\{\sum_{i\in S}g_i(U_i),C\right\},\qquad C>0.
\end{equation}
Independence with uniform margins attains the upper bound. Integrability of the uncapped scores is not needed for this bounded comparison.
\end{lemma}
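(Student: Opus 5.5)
We prove \eqref{eq:prdscap} by induction on $|S|$. The induction runs over the larger class of test functions $h:[0,\infty)\to\R$ that are concave, nondecreasing and bounded. The cap $h(x)=\min\{x,C\}$ is one member of this class, but the inductive step produces other members, so the class must be closed under the operation below.

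Label $S=\{1,\ldots,n\}$ and put $T_k=\sum_{j\le k}g_j(p_j)$, with $T_0=0$. The claim to prove is
\[
\E h(T_n)\le\E h\Bigl(\sum_{j\le n}g_j(U_j)\Bigr).
\]
Boundedness of $h$ makes every expectation finite even when the scores are not integrable. This is where the remark that integrability is not needed comes from.

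\textbf{One-step replacement.} Write the last increment as a layer-cake integral,
\[
h(T_{n-1}+g_n(p_n))-h(T_{n-1})=\int_0^\infty \ind\{g_n(p_n)>u\}\,h'_+(T_{n-1}+u)\,du,
\]
where $h'_+$ is the right derivative. It is nonnegative and nonincreasing because $h$ is concave and nondecreasing. Since $g_n$ is nonincreasing, $\{g_n(p_n)>u\}$ is a lower-tail event $\{p_n<t_u\}$ or $\{p_n\le t_u\}$. Since every $g_j$ is nonincreasing, $T_{n-1}$ is a coordinatewise nonincreasing function of $p$. Hence $h'_+(T_{n-1}+u)$ is a bounded nonnegative nondecreasing function of $p$. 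Writing it as an integral of indicators of increasing sets, the lower-tail PRDS formulation of Section~\ref{sec:prds}, applied to the true null $n$, shows that
\[
t\mapsto\E[h'_+(T_{n-1}+u)\mid p_n\le t]
\]
is nondecreasing. Its value at $t=1$ is the unconditional mean, so the conditional mean is at most the unconditional one. Superuniformity gives $\Prob(p_n\le t_u)\le t_u=\Prob(U_n\le t_u)$. Multiplying the two inequalities and integrating in $u$ via Tonelli yields
\[
\E h(T_n)\le\E\tilde h(T_{n-1}),\qquad \tilde h(y)=\E_{U_n}h\bigl(y+g_n(U_n)\bigr).
\]
This step uses only the marginal law of $U_n$.

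\textbf{Induction.} The function $\tilde h$ is again concave, nondecreasing and bounded, being an average of translates of $h$. The induction hypothesis applied to $S\setminus\{n\}$ then gives
\[
\E\tilde h(T_{n-1})\le\E\tilde h\Bigl(\sum_{j<n}g_j(U_j)\Bigr).
\]
By independence of $U_n$ from the other $U_j$, the right-hand side equals the target. The base case $n=1$ is the one-step argument with $T_0=0$. For the attainment statement: when the $p_i$ are independent and exactly uniform, the two sides of \eqref{eq:prdscap} coincide, and such vectors are PRDS.

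The main obstacle is the PRDS conditioning step. The event may be strict, $\{p_n<t_u\}$, rather than $\{p_n\le t_u\}$, and the conditioning event may be null. I would handle both by monotone limits in $t$, taking left limits of the nondecreasing conditional mean for the strict case. I would also check that the integrand is a genuinely increasing function of the whole vector, including non-null coordinates; this is exactly why the hypothesis is PRDS of the entire vector rather than of the nulls only. The induction preserves this, because each step conditions only on a true-null coordinate.
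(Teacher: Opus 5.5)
Your proof is correct. It uses the same engine as the paper's proof but organizes the induction differently.

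\textbf{How the paper argues.} It first turns PRDS into positive quadrant dependence between $X=g_i(p_i)$ and $Y=\sum_{j\in S\setminus\{i\}}g_j(p_j)$. It then uses the cap identity $\min(x+y,C)=\min(x,C)+\min(y,C)-\int_0^C\ind\{x>u\}\ind\{y>C-u\}\,du$ to replace $X$ by an independent copy with the same superuniform law. It inducts within the family of caps, conditioning on that copy and passing the residual cap $(C-X)_+$ to the remaining sum. Only at the very end does it couple the now-independent superuniform p-values above uniforms.

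\textbf{How your route differs.} For $h=\min\{\cdot,C\}$ you have $h'_+=\ind\{\cdot<C\}$, so your increment formula is that same cap identity up to a null set of $u$. The differences are threefold. You apply PRDS directly to the nondecreasing functional $h'_+(T_{n-1}+u)$ rather than to indicator events. You absorb superuniformity into the same one-step inequality instead of a final coupling. You push the replaced coordinate into the test function $\tilde h$ rather than conditioning on it. That last move is why you need the larger class of test functions.

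\textbf{What each buys.} Your version needs a single inequality per step, and it shows that concavity, not the particular shape of the cap, is what the argument uses. It is not a stronger result, though. Every continuous bounded concave nondecreasing $h$ on $[0,\infty)$ is a constant plus a nonnegative mixture of caps $\min\{x,C\}$, so the paper's cap version implies yours by Tonelli.

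\textbf{Continuity at $0$.} As stated, your class admits concave nondecreasing $h$ with an upward jump at $0$. For example, take $h(0)=-1$ and $h=0$ on $(0,\infty)$. For such $h$ the increment formula fails at $y=0$. This is not harmless, since $T_{n-1}=0$ always holds in the base case and has positive probability for indicator-type scores. The fix is to add continuity at $0$ to the class. The cap has it, and $h\mapsto\tilde h$ preserves it by bounded convergence, so the induction is otherwise unchanged.

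\textbf{PRDS of the null subvector suffices.} Your integrand depends only on the coordinates in $S$, so PRDS of the null subvector already suffices for this lemma. The whole-vector formulation is what the paper's data-adaptive thresholds require, not this step.
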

\begin{proof}
Write $X=g_i(p_i)$ and $Y=\sum_{j\in S\setminus\{i\}}g_j(p_j)$. The event $\{Y\le v\}$ is increasing in the remaining p-values. PRDS therefore implies positive quadrant dependence:
\begin{equation}\label{eq:pqd}
 \Prob(X>u,Y>v)\ge\Prob(X>u)\Prob(Y>v).
\end{equation}
To obtain it, apply PRDS to the lower-tail event $\{X>u\}$ for $p_i$, using limits for open endpoints, and compare to the unconditional probability.
For $x,y\ge0$,
\begin{equation}\label{eq:capidentity}
 \min(x+y,C)=\min(x,C)+\min(y,C)
 -\int_0^C\ind\{x>u\}\ind\{y>C-u\}\,du.
\end{equation}
Thus replacing $X,Y$ by independent variables with the same margins can only increase the expected cap. Induction decouples the remaining sum: conditional on an independent $X$, use the residual cap $(C-X)_+$. Finally, independent superuniform p-values may be coupled above independent uniforms, which only increases the decreasing scores on the right. Truncation and monotone convergence cover infinite score values.
\end{proof}
The independence comparison is particular to the capped additive functional; it does not assert that independence is least favorable for every PRDS testing procedure. The calibration in Section~\ref{sec:cap} can therefore use arbitrary decreasing scores, not only Gaussian-derived or convex scores.

\subsection{Fixed hinges and staircases}
A basic normalized indicator score is
\[
 s_t(p)=\frac{\ind\{p\le t\}}t,\qquad 0<t\le1.
\]
For fixed thresholds and simplex weights, the staircase
\begin{equation}\label{eq:staircase}
 g_n(p)=\sum_{k=1}^K w_{n,k}s_{t_{n,k}}(p),\qquad
 w_{n,k}\ge0,\quad \sum_kw_{n,k}=1,
\end{equation}
has uniform mean one. Mixtures of such indicators provide a general basis for decreasing p-to-e calibrators \citep{VovkWang2021}. Their sum distribution under independent uniforms is discrete, making cap calibration accessible by convolution.

A continuous alternative is the p-value hinge
\begin{equation}\label{eq:phinge}
 g_t(p)=\frac{2(t-p)_+}{t^2}
       =\int_0^1 2u\,s_{ut}(p)\,du.
\end{equation}
The integral representation proves unit uniform mean. Fixed mixtures may depend on $n=|S|$. They must be selected independently of the testing data, for example using a prespecified training distribution; unrestricted same-data weight selection is not covered by \cref{lem:prds}.

A useful bridge between the two model classes is the normal-quantile transform. If $f$ is any nonnegative even function with $\E f(G)=1$ for $G\sim N(0,1)$, then
\begin{equation}\label{eq:quantile}
 g(p)=f\{\Phi^{-1}(1-p/2)\}
\end{equation}
has uniform mean one. If $f$ increases in $|z|$, then $g$ decreases in $p$. Using \eqref{eq:quantile} as a calibrator does not assume Gaussian data. The Gaussian hinges of Section~\ref{sec:gaussian} can therefore also be calibrated under PRDS.

\paragraph{Calibrated local evidence.}\label{sec:cap}
For $n=|S|<m$, a scaled score average can be capped at
\begin{equation}\label{eq:cap}
 C_n=\frac1{\alpha(1-n/m)},\qquad
 F_S=\min\left\{\frac{\eta_n}{n}\sum_{i\in S}f_n(X_i),C_n\right\}.
\end{equation}
Here $f_n(X_i)=g_n(p_i)$; Section~\ref{sec:gaussian} uses $f_n(Z_i)$ instead. Choose a finite $\eta_n\ge1$ such that
\begin{equation}\label{eq:calibration}
 \E_\star\min\{\eta_n\overline f_n,C_n\}\le1,
\end{equation}
where $\overline f_n$ is the mean of $n$ scores under the appropriate independent reference: uniform p-values here, and standard normals in Section~\ref{sec:gaussian}. The capped-sum lemma then verifies local validity. The unboosted choice $\eta_n=1$ is valid for unit-mean scores; equality in \eqref{eq:calibration} is neither required nor always attainable. Capping loses no certifications: once the cap is reached, $\alpha(1-n/m)F_S=1$ and the intersection constraint is automatic. Thus a valid factor $\eta_n\ge1$ weakly enlarges the unboosted anchor. Appendix~\ref{app:calibration} records the numerical construction. This calibration concerns fixed scores, not automatically the data-adaptive pilots.

\subsection{Adaptive thresholds and hinges}\label{sec:adaptivehinges}
The PRDS dependency-control inequality of \citet{BR2008} states that, for a coordinatewise nonincreasing threshold $T(p)$,
\begin{equation}\label{eq:adaptiveindicator}
 \E\left[\frac{\ind\{p_i\le T(p)\}}{T(p)}\right]\le1,
 \qquad i\in I_0,
\end{equation}
with ratio zero when $T=0$. Lowering any p-value may increase, but may not decrease, the threshold. This permits genuinely data-adaptive evidence. For $T_\beta=\beta|\BH_\beta(p)|/m$, the ratio is precisely the pilot score in \eqref{eq:pilotscore}; its mixtures were developed in Section~\ref{sec:pilots}.

The same threshold yields a continuous adaptive hinge:
\begin{equation}\label{eq:adaptivehinge}
 h_{i,T}(p)=\frac{2(T(p)-p_i)_+}{T(p)^2}
     =\int_0^1 2u\,\frac{\ind\{p_i\le uT(p)\}}{uT(p)}\,du.
\end{equation}
Each $uT$ is a valid threshold, so Tonelli's theorem and \eqref{eq:adaptiveindicator} give $\E h_{i,T}\le1$. Averages over $S$ therefore provide local e-values, as do prespecified convex mixtures. No Gaussianity, covariance estimate or sample split is needed for these specific adaptive scores. Their validity does not justify arbitrary data-selected mixture weights. Nor does \cref{lem:prds}, which concerns fixed scores, automatically permit independent-null cap recalibration of these data-adaptive functions.

\section{Scores under arbitrary Gaussian dependence}\label{sec:gaussian}
Suppose $Z\sim N_m(\mu,\Sigma)$ with $\Sigma_{ii}=1$, and test $H_i:\mu_i=0$. The matrix $\Sigma$ is unknown and otherwise unrestricted, including singular matrices. We use two-sided p-values $p_i=2\Phi(-|Z_i|)$.

\subsection{A Gaussian capped-sum comparison}
\begin{lemma}[Gaussian capped sums]\label[lemma]{lem:gaussian}
If $Z_S$ is centered Gaussian with unit marginal variances and $f_i:\R\to[0,\infty)$ are convex and even, then, for independent $G_i\sim N(0,1)$,
\begin{equation}\label{eq:gaussiancap}
 \E_\Sigma\min\left\{\sum_{i\in S}f_i(Z_i),C\right\}
 \le
 \E_I\min\left\{\sum_{i\in S}f_i(G_i),C\right\},\qquad C>0.
\end{equation}
The upper bound is attained at independence.
\end{lemma}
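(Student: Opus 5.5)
The plan is a Gaussian interpolation (Slepian–Kahane type) between $\Sigma_{SS}$ and the identity, after smoothing. The equality case is immediate, since at $\Sigma_{SS}=I$ both sides coincide; so the content is the inequality. Reduce first to smooth objects. Approximate $h(s)=\min\{s,C\}$ by smooth concave nondecreasing functions $h_\epsilon$ with $h_\epsilon''\le0$. Approximate each convex even $f_i$ by smooth convex even functions of polynomial growth. This can be done by Gaussian mollification and, if necessary, truncation of the slope at large $|z|$. Since $f_i\ge0$ and $h$ is bounded by $C$, monotone/dominated convergence recovers \eqref{eq:gaussiancap} from the smooth case; this is the same truncation device used at the end of \cref{lem:prds}. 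One may also assume $\Sigma_{SS}$ nonsingular and pass to singular matrices by continuity of the law in $\Sigma$.

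For $\theta\in[0,1]$, let $Z^\theta=\sqrt\theta\,Z_S+\sqrt{1-\theta}\,G$ with $G\sim N(0,I)$ independent of $Z_S$. Then $Z^\theta$ has covariance $\Sigma_\theta=\theta\Sigma_{SS}+(1-\theta)I$, which still has unit diagonal. Set $\varphi(z)=h(\sum_{i\in S}f_i(z_i))$. The standard Gaussian integration-by-parts formula gives
\[
 \frac{d}{d\theta}\E\varphi(Z^\theta)=\frac12\sum_{i\ne j}(\Sigma_{SS})_{ij}\,\E\bigl[h''(T_\theta)f_i'(Z^\theta_i)f_j'(Z^\theta_j)\bigr],\qquad T_\theta=\sum_{k\in S}f_k(Z^\theta_k).
\]
The diagonal terms drop because the variances do not move. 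It therefore suffices to show that this derivative is $\le0$ for every $\theta$, i.e.\ that $\E_{\Sigma_\theta}[h''(T)\,a^\top(\Sigma_\theta-I)a]\ge0$ with $a_i=f_i'(Z_i)$. This uses $\Sigma_\theta-I=\theta(\Sigma_{SS}-I)$ and $\theta>0$.

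The structural facts I would exploit are as follows. $T$ depends only on $(|Z_i|)_{i\in S}$, because the $f_i$ are even. Each $f_i'$ is odd and has the sign of $z_i$, because the $f_i$ are convex. And $h''\le0$. Conditioning on the moduli $(|Z_i|)$, the only remaining randomness is the sign vector $\sigma$. The claim becomes: given the moduli, the conditional law of $\sigma$ under $\Sigma_\theta$ satisfies $\E[\sum_{i\ne j}(\Sigma_\theta)_{ij}\sigma_i\sigma_j\mid |Z|]\ge0$. This weight is multiplied by the nonnegative factor $-h''(T)|f_i'||f_j'|$, giving the required sign. The intuition is that under a Gaussian law the sign pattern is tilted toward configurations aligned with the correlations. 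The conditional density of $\sigma$ is proportional to $\exp\{-\tfrac12(\sigma\circ|z|)^\top\Sigma_\theta^{-1}(\sigma\circ|z|)\}$, so one wants a covariance-type inequality between $\sigma^\top(\Sigma_\theta-I)\sigma$ and the log-density.

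I expect this conditional sign inequality to be the main obstacle. The weights $|f_i'||f_j'|$ vary across pairs, so a pairwise argument (trivial for $|S|=2$) does not obviously sum up in general. If it resists, the fallback is a layer-cake reformulation. Writing $\E\min\{T,C\}=\int_0^C\Prob(T>u)\,du$ reduces the lemma to showing $\Prob_\Sigma(Z_S\in K_u)\ge\Prob_I(G\in K_u)$ for each $u$. Here $K_u=\{z:\sum_i f_i(z_i)\le u\}$ is convex and unconditional (symmetric under each coordinate sign flip). I would then try to derive this comparison from Royen's Gaussian correlation inequality, applied coordinate by coordinate. This would decouple one coordinate at a time, in the spirit of the induction with residual caps in \cref{lem:prds}. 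The residual set $\{z_{-i}:\sum_{j\ne i}f_j(z_j)\le u-f_i(z_i)\}$ remains symmetric convex for every fixed $z_i$.
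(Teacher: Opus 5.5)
\textbf{There is a genuine gap.} Each of your two routes reduces the lemma to an intermediate claim that is false, so neither closes.

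\textbf{The interpolation route.} The derivative formula is fine. Note, though, that you need $\E[h''(T)\,a^\top(\Sigma_\theta-I)a]\le0$, not $\ge0$. The problem is the sufficient condition you then propose: nonnegativity of $\E[\sum_{i\ne j}(\Sigma_\theta)_{ij}\sigma_i\sigma_j\mid |Z|]$ pointwise in the moduli, weighted or not. This fails once $|S|\ge3$. Take $\Sigma_{12}=\Sigma_{13}=0.6$ and $\Sigma_{23}=0.1$. Then $(\Sigma^{-1})_{23}=0.26/0.342>0$, i.e.\ the partial correlation of coordinates $2$ and $3$ given $1$ is negative. Given moduli $r$, the sign law is $\propto\exp\{-\sum_{i<j}(\Sigma^{-1})_{ij}r_ir_j\sigma_i\sigma_j\}$, an Ising model with an antiferromagnetic $(2,3)$ coupling. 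Letting $r_1\to0$ decouples $\sigma_1$, and $\E[\sigma_2\sigma_3\mid r]\to-\tanh\{(\Sigma^{-1})_{23}r_2r_3\}<0$ while $\Sigma_{23}>0$. So the conditional sum is negative on a set of moduli of positive probability, whatever positive weights $|f_i'||f_j'|$ you attach. Any repair must average over the moduli with weight $-h''(T)$, which is content of Gaussian-correlation strength that you have not supplied.

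\textbf{The layer-cake fallback.} This one is worse. The claim $\Prob_\Sigma(Z_S\in K_u)\ge\Prob_I(G\in K_u)$ for every $u$ says $\sum_i f_i(Z_i)$ is stochastically smaller than its independent counterpart. But when the $f_i$ are integrable, the two sums have the same mean because the margins agree, so stochastic dominance would force equality in law. Concretely, take $f_1=f_2=z^2$ and $\rho=1$ (the lemma allows singular $\Sigma$, or take $\rho$ near one). Then $\Prob(2Z^2>10)\approx0.025>e^{-5}\approx0.0067=\Prob(\chi^2_2>10)$. Separately, you cannot apply Gaussian correlation to the residual set after fixing $z_i$: conditionally on $Z_i=z_i$, the vector $Z_{-i}$ is not centered.

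\textbf{The missing idea.} Only the capped mean is comparable, not tail probabilities. Gaussian correlation should be used unconditionally, to produce positive quadrant dependence rather than a set-probability comparison. The paper applies it to the two symmetric convex sets $\{z:f_i(z_i)\le u\}$ and $\{z:\sum_{j\ne i}f_j(z_j)\le v\}$. This gives \eqref{eq:pqd} for $X=f_i(Z_i)$ and $Y=\sum_{j\ne i}f_j(Z_j)$.

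The identity \eqref{eq:capidentity} then writes $\E\min\{X+Y,C\}$ as marginal terms minus $\int_0^C\Prob(X>u,Y>C-u)\,du$. Positive quadrant dependence can therefore only lower it relative to an independent pair with the same margins. This is exactly where concavity of the cap, i.e.\ submodularity of $(x,y)\mapsto\min\{x+y,C\}$, enters.

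Finally, with $X$ made independent, condition on $X$ and apply the induction hypothesis to the remaining centered unit-variance block, using the constant residual cap $(C-X)_+$. Your last sentence gestures at this induction, but it must be run on capped means, not on $\Prob(K_u)$.
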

\begin{proof}
Set $X=f_i(Z_i)$ and $Y=\sum_{j\ne i}f_j(Z_j)$. Both $\{X\le u\}$ and $\{Y\le v\}$ are centrally symmetric convex sets in Gaussian space. The Gaussian correlation inequality \citep{Royen2014,LatalaMatlak2015} gives positive quadrant dependence \eqref{eq:pqd}, equivalently by complementing the two lower events. Apply \eqref{eq:capidentity}, decouple $X,Y$, and induct exactly as in \cref{lem:prds}. Gaussian marginals already match the independent reference. Singular covariances are covered directly by the Gaussian correlation inequality, or by a limiting argument.
\end{proof}
The nonnull means outside $S$ play no role: under $H_S$, the marginal $Z_S$ is centered. Positive signs of the correlations are not needed. Convexity and evenness are essential to this proof and should not be removed by analogy with the PRDS result.

The lemma constructs auxiliary evidence $F_S$ under arbitrary covariance; it does not validate an uncorrected BH baseline by itself. Section~\ref{sec:gaussiantwosided} identifies covariance restrictions that permit BH-based improvements, while Section~\ref{sec:by} uses a BY baseline when no such restriction is imposed.

\subsection{Normalized hinges and a complete score basis}
For $t\ge0$, put
\begin{equation}\label{eq:gaussianhinge}
 h_t(z)=\frac{(|z|-t)_+}{2\{\phi(t)-t\phibar(t)\}}.
\end{equation}
Since $\E(|G|-t)_+=2\{\phi(t)-t\phibar(t)\}$, each $h_t$ has unit Gaussian mean. It is convex, even and increasing in $|z|$. We use
\begin{equation}\label{eq:hingemixture}
 f_n(z)=\sum_{k=1}^K w_{n,k}h_{t_{n,k}}(z),
 \qquad w_{n,k}\ge0,\quad\sum_kw_{n,k}=1.
\end{equation}
The capped means in \eqref{eq:cap} may then be calibrated by \cref{lem:gaussian}. Without boosting, the local mean is valid using only its marginal expectation, even beyond joint Gaussianity when the null margins remain standard normal.

\begin{proposition}[Hinge representation]\label[proposition]{prop:hingebasis}
Every nonnegative even convex $f$ with $\E f(G)=1$ admits
\begin{equation}\label{eq:hingebasis}
 f(z)=w_0+\int_{[0,\infty)}h_t(z)\,w(dt),\qquad
 w_0\ge0,\quad w\ge0,\quad w_0+\int w(dt)=1.
\end{equation}
\end{proposition}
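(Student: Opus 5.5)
We prove the representation by the standard second-derivative decomposition of convex functions on a half-line. Since $f$ is even and convex, it suffices to work on $[0,\infty)$, where $f$ is convex and (by evenness plus convexity) nondecreasing, with $f'_+(0)\ge 0$. The right derivative $f'_+$ is nondecreasing and right-continuous on $[0,\infty)$. Let $\nu$ be the associated Stieltjes measure on $[0,\infty)$, placing mass $f'_+(0)$ at $t=0$ together with $d f'_+$ on $(0,\infty)$, so that $f'_+(s)=\nu([0,s])$.

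\textbf{Step 1 (hinge expansion).} For $z\ge 0$ we expect
\[
 f(z)=f(0)+\int_0^z f'_+(s)\,ds=f(0)+\int_{[0,\infty)}(z-t)_+\,\nu(dt),
\]
by Tonelli. Evenness then gives $f(z)=f(0)+\int(|z|-t)_+\,\nu(dt)$ for all $z$. The atom at $0$ is harmless because $(|z|-0)_+=|z|=2\phi(0)\,h_0(z)$, and $h_0$ is a legitimate hinge with normalizer $2\phi(0)>0$.

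\textbf{Step 2 (normalization).} Write $c(t)=2\{\phi(t)-t\phibar(t)\}=\E(|G|-t)_+>0$. Set $w_0=f(0)\ge 0$ and $w(dt)=c(t)\,\nu(dt)\ge 0$. Then $(|z|-t)_+\,\nu(dt)=h_t(z)\,w(dt)$, which yields \eqref{eq:hingebasis}. Taking Gaussian expectations and applying Tonelli again (all integrands are nonnegative) gives
\[
 1=\E f(G)=w_0+\int \E h_t(G)\,w(dt)=w_0+\int w(dt).
\]
In particular $w$ is a finite measure, even though $\nu$ need not be.

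\textbf{Main obstacle.} The only delicate points are bookkeeping at the boundary and interchange of integrals. For the boundary, one must check that $f(0)$ is finite and that $f'_+(0)\ge 0$. The first holds because $\E f(G)<\infty$, and a convex function finite on a set of positive Gaussian measure is finite everywhere on $\R$. The second holds because evenness forces $f'_-(0)=-f'_+(0)$, while convexity gives $f'_-(0)\le f'_+(0)$. The interchanges are justified by nonnegativity, so no integrability hypotheses beyond $\E f(G)=1$ are required.
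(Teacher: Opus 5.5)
Your proof is correct and takes the paper's route: you integrate the nonnegative, nondecreasing right derivative against its Stieltjes measure $\nu$, with the atom at zero carrying the initial slope. Tonelli then gives $1=f(0)+\int 2\{\phi(t)-t\phibar(t)\}\,\nu(dt)$, and you set $w_0=f(0)$ and $w(dt)=2\{\phi(t)-t\phibar(t)\}\,\nu(dt)$, exactly as the paper does.

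One slip in your boundary remark: a convex function finite on a set of positive Gaussian measure need not be finite everywhere (take $0$ on $[-1,1]$ and $+\infty$ outside). The conclusion still holds, because $\E f(G)<\infty$ makes the effective domain an interval with Gaussian-null complement, which must be all of $\R$. In the paper's setting $f$ is real-valued anyway.
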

\begin{proof}
On $[0,\infty)$ the right derivative of $f$ is nonnegative and nondecreasing. Integrating it represents $f(z)=f(0)+\int(|z|-t)_+\,\nu(dt)$; an atom at zero represents a nonzero initial slope. Tonelli gives $1=f(0)+\int 2\{\phi(t)-t\phibar(t)\}\nu(dt)$. Take $w_0=f(0)$ and $w(dt)=2\{\phi(t)-t\phibar(t)\}\nu(dt)$.
\end{proof}
Thus hinge mixtures, together with the constant score one, span the entire normalized convex-even class, rather than only a particular likelihood-ratio subfamily. For comparison, the symmetric Gaussian likelihood ratio $e^{-\lambda^2/2}\cosh(\lambda z)$ is another member of this class.

The geometry also provides a threshold-design heuristic. To make a normalized hinge cross a prescribed score level $T>1$ as early as possible, minimize $t+2T\{\phi(t)-t\phibar(t)\}$. Differentiating gives $1-2T\phibar(t)$, so the minimizer and crossing point are
\begin{equation}\label{eq:hingeoptimal}
 t_T=\phibar^{-1}\{1/(2T)\},\qquad
 z_T=t_T+2T\{\phi(t_T)-t_T\phibar(t_T)\}
     =\frac{\phi(t_T)}{\phibar(t_T)}.
\end{equation}
This is a one-score threshold calculation, not an optimality theorem for the whole closure. Different intersections can require different evidence levels; a high-threshold hinge may preclude moderate individual discoveries. Externally fitted mixtures can trade off those constraints.

\section{Anchored-2BH for Gaussian two-sided testing}\label{sec:twobh}\label{sec:gaussiantwosided}
The Gaussian scores from Section~\ref{sec:gaussian} improve an uncorrected baseline when its covariance structure supplies the required FDR bound. Our main construction here is \emph{Anchored-2BH}: an augmentation of the union of two one-sided BH runs for testing the two-sided nulls $H_i:\mu_i=0$. We distinguish this baseline from ordinary pooled BH on $p_i=2\phibar(|Z_i|)$, which requires a different condition.

Suppose $Z\sim N_m(\mu,\Sigma)$ with $\Sigma_{ii}=1$ and
\begin{equation}\label{eq:tailcov}
 \Sigma_{ij}\ge0\qquad(i\in I_0,\ j\in[m]).
\end{equation}
Entrywise nonnegative $\Sigma$ ensures this for every mean vector. The condition makes each one-sided p-value family PRDS on $I_0$, even when the nonnull means have mixed signs \citep[Section~3.1, Case~1]{BY2001}. Define the \emph{2BH} baseline by
\begin{equation}\label{eq:twotailbaseline}
 \begin{aligned}
 B^+&=\BH_{\alpha/2}\{\phibar(Z_i):i\in[m]\},\qquad
 B^-=\BH_{\alpha/2}\{\Phi(Z_i):i\in[m]\},\\
 B^{\pm}&=B^+\cup B^-.
 \end{aligned}
\end{equation}
Both runs use all $m$ p-values. For $\alpha<1$, the rejection sets are disjoint, since complementary one-sided p-values cannot both be at most $\alpha/2$. 

\subsection{The Anchored-2BH guarantee}
Use any fixed normalized convex-even scores $f_n$, including the hinge mixtures in \eqref{eq:hingemixture}, and choose $\eta_n$ by independent-Gaussian calibration as in \eqref{eq:calibration}. For $n=|S|<m$, set
\begin{equation}\label{eq:twotailanchor}
 E_S^{\pm}=
 \frac{|B^{\pm}\cap S|}{\alpha(|B^{\pm}|\vee1)}
 +\left(1-\frac nm\right)
 \min\left\{\frac{\eta_n}{n}\sum_{i\in S}f_n(Z_i),
                    \frac1{\alpha(1-n/m)}\right\}.
\end{equation}
Set $E_{[m]}^{\pm}=\ind\{|B^{\pm}|>0\}/\alpha$. The largest certified superset $\bar B^{\pm}$, with a fixed measurable tie rule, is \emph{Anchored-2BH}.

\begin{corollary}[Anchored-2BH]\label[corollary]{cor:twotail}
Under \eqref{eq:tailcov}, the values in \eqref{eq:twotailanchor} are local e-values and
\[
 \bar B^{\pm}\supseteq B^{\pm}\quad\text{for every data vector},\qquad
 \FDR(\bar B^{\pm})\le\alpha.
\]
More generally, every certified superset has these properties, and the simultaneous guarantee \eqref{eq:simulfdr} holds. No PRDS assumption on the folded two-sided p-values is required.
\end{corollary}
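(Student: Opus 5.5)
The argument follows the template of the proof of \cref{thm:anchor}: obtain a coordinatewise contribution bound for the baseline $B^{\pm}$, check that the capped Gaussian term is a valid auxiliary e-value, and then invoke the general e-closure argument.

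\textbf{Step 1: the coordinatewise baseline bound.} Fix $i\in I_0$. Under \eqref{eq:tailcov}, each one-sided family $\{\phibar(Z_j)\}$ and $\{\Phi(Z_j)\}$ is PRDS on $I_0$, by \citet[Section~3.1]{BY2001}. The null margins are exactly uniform. The coordinatewise inequality \eqref{eq:bhcoordinate} at level $\alpha/2$ then gives, for each tail,
\[
\E\left[\frac{\ind\{i\in B^{+}\}}{|B^{+}|\vee1}\right]\le\frac{\alpha}{2m},
\qquad
\E\left[\frac{\ind\{i\in B^{-}\}}{|B^{-}|\vee1}\right]\le\frac{\alpha}{2m}.
\]
For $\alpha<1$ the sets $B^+$ and $B^-$ are disjoint. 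Hence $\ind\{i\in B^{\pm}\}=\ind\{i\in B^+\}+\ind\{i\in B^-\}$, and $|B^{\pm}|\ge|B^{+}|$, $|B^{\pm}|\ge|B^{-}|$. It follows that
\[
\frac{\ind\{i\in B^{\pm}\}}{|B^{\pm}|\vee1}\le\frac{\ind\{i\in B^{+}\}}{|B^{+}|\vee1}+\frac{\ind\{i\in B^{-}\}}{|B^{-}|\vee1}.
\]
Taking expectations gives the bound $\alpha/m$ for every $i\in I_0$. Summing over $i\in S\subseteq I_0$ yields $\E\,\FDP_S(B^{\pm})/\alpha\le |S|/m$. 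For $S=[m]$, the same bound with $|S|=m$ shows that $E^{\pm}_{[m]}=\ind\{|B^{\pm}|>0\}/\alpha$ has expectation at most one. On the event $|B^{\pm}|>0$ we have $\FDP_{[m]}(B^{\pm})=1$, so $E^{\pm}_{[m]}$ coincides with $\FDP_{[m]}(B^{\pm})/\alpha$.

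\textbf{Step 2: validity of the capped term.} Under $H_S$ the marginal $Z_S$ is centered Gaussian with unit variances, and its covariance is arbitrary. The functions $f_n$ are convex, even and nonnegative. Applying \cref{lem:gaussian} with $f_i=\eta_n f_n/n$ and cap $C_n$ gives
\[
\E\min\{\eta_n\overline f_n(Z_S),C_n\}\le\E_I\min\{\eta_n\overline f_n(G),C_n\}\le1,
\]
where the last inequality is the calibration \eqref{eq:calibration}. Only the marginal of $Z_S$ enters, so the nonnull means and the off-$S$ covariance are irrelevant. No independence between this term and $B^{\pm}$ is needed, because the two expectations are combined only through linearity. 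Together with Step 1, this gives
\[
\E E^{\pm}_S\le\frac nm+\left(1-\frac nm\right)\cdot 1=1.
\]

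\textbf{Step 3: conclusion.} Pointwise we have $\FDP_S(B^{\pm})\le\alpha E^{\pm}_S$ for every nonempty $S$. For $S\ne[m]$ this holds because the second term is nonnegative; for $S=[m]$ it was checked in Step 1. Hence $B^{\pm}\in\cR_\alpha(E^{\pm})$, so the superset search is nonempty and $\bar B^{\pm}\supseteq B^{\pm}$. Every certified $R$ satisfies $\FDP_{I_0}(R)\le\alpha E^{\pm}_{I_0}$, which gives \eqref{eq:simulfdr} and therefore FDR control for any certified superset. When $I_0=\varnothing$ the FDP is zero.

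The folded p-values $2\phibar(|Z_i|)$ are never used in this argument; only the two one-sided PRDS families are. This is why no PRDS assumption on the folded p-values is required.

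The main obstacle is Step 1: combining the two level-$\alpha/2$ coordinatewise bounds into a single bound for the union with denominator $|B^{\pm}|$. The disjointness of $B^+$ and $B^-$, together with the monotonicity of the denominator, is what makes this work.
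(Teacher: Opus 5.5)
Your proof is correct and follows the paper's argument. The paper bounds $\FDP_S(B^{\pm})$ by the sum of the two tailwise contributions at the intersection level, while you do it coordinatewise and then sum; both then apply the level-$\alpha/2$ BH contribution bounds, \cref{lem:gaussian} for the capped term, and the e-closure argument of \cref{thm:anchor}. Disjointness of $B^+$ and $B^-$ is not actually needed, since $\ind\{i\in B^+\cup B^-\}\le\ind\{i\in B^+\}+\ind\{i\in B^-\}$ always holds.
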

\begin{proof}
Writing $b_+=|B^+|$ and $b_-=|B^-|$, every true intersection $S$ satisfies
\[
 \FDP_S(B^{\pm})
 \le \frac{|B^+\cap S|}{b_+\vee1}
       +\frac{|B^-\cap S|}{b_-\vee1}.
\]
The two tailwise BH contribution bounds give expectation at most
$\alpha|S|/(2m)+\alpha|S|/(2m)=\alpha|S|/m$.
No independence between the tails is used. The auxiliary capped mean has expectation at most one by \cref{lem:gaussian}, so $\E E_S^{\pm}\le1$. Pointwise, $\alpha E_S^{\pm}\ge\FDP_S(B^{\pm})$; the e-closure proof of \cref{thm:anchor} completes the argument.
\end{proof}
The same argument applies whenever a Gaussian covariance restriction guarantees both tailwise BH contribution bounds. Condition~\eqref{eq:tailcov} is an explicit sufficient restriction, not numerical information required by the algorithm. Anchored-2BH anchors the \emph{union once at level $\alpha$}; it does not split the auxiliary evidence between two level-$\alpha/2$ closures. Its even scores pool evidence from both signs. The guarantee concerns ordinary two-sided nulls, not directional errors.

\subsection{Ordinary two-sided BH under folded PRDS}
For comparison, let $B^{(2)}=\BH_\alpha(p)$ with $p_i=2\phibar(|Z_i|)$. An explicit sufficient restriction for PRDS of these p-values is
\begin{equation}\label{eq:foldedcov}
 \Sigma_{I_0,I_0^c}=0,\qquad
 \Sigma_{I_0,I_0}\text{ positive definite},\qquad
 (D_0\Sigma_{I_0,I_0}^{-1}D_0)_{ij}\le0\quad(i\ne j),
\end{equation}
for a diagonal sign matrix $D_0$. Conditions involving an empty null block are vacuous. The centered null block is independent of the nonnull block and has an M-matrix precision after sign changes. Under \eqref{eq:foldedcov}, $|Z_{I_0}|$ is multivariate totally positive of order two (MTP$_2$), hence PRDS \citep{KarlinRinott1981,Lauritzen2018}. The common decreasing transformation $x\mapsto2\phibar(x)$ preserves MTP$_2$; independence from the nonnull block then gives PRDS of the entire p-value vector on $I_0$ \citep[Section~3.1, Case~3]{BY2001}. An equicorrelated null block with $0\le\rho<1$, independent of an arbitrary Gaussian nonnull block, is one example. The testing rule does not use $I_0$, $D_0$ or the covariance entries.

\begin{corollary}[Anchored-BH for pooled two-sided p-values]\label[corollary]{cor:twosided}
Suppose \eqref{eq:foldedcov} holds, or more generally that the two-sided p-values are PRDS on $I_0$. Replacing $B^{\pm}$ by $B^{(2)}$ in \eqref{eq:twotailanchor}, including the full-intersection convention, gives valid local e-values. Their largest certified superset $\bar B^{(2)}$ satisfies
\[
 \bar B^{(2)}\supseteq B^{(2)}\quad\text{pointwise},\qquad
 \FDR(\bar B^{(2)})\le\alpha.
\]
The simultaneous certification guarantee also holds.
\end{corollary}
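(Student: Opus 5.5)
The plan follows the proof of \cref{thm:anchor} and \cref{cor:twotail}, with the baseline $B^{(2)}=\BH_\alpha(p)$ on the pooled p-values $p_i=2\phibar(|Z_i|)$.

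\emph{Step 1: validity of the local e-values.} Fix a true intersection $S$ with $n=|S|<m$. The first term of the modified \eqref{eq:twotailanchor} is $\FDP_S(B^{(2)})/\alpha$. The modification keeps BH's own normalization $\alpha(|B^{(2)}|\vee1)$, so that term is exactly BH's self-referential contribution. Under PRDS of $p$ on $I_0$, the coordinatewise bound \eqref{eq:bhcoordinate} at $\gamma=\alpha$, summed over $i\in S$, gives
\[
\E\,\FDP_S(B^{(2)})/\alpha\le n/m .
\]
Null p-values $2\phibar(|Z_i|)$ are exactly uniform under $H_i$, so superuniformity holds.

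\emph{Step 2: the auxiliary term.} We need
\[
\E\min\Bigl\{\tfrac{\eta_n}{n}\textstyle\sum_{i\in S}f_n(Z_i),\,C_n\Bigr\}\le1 .
\]
Here the important point is that the covariance restriction \eqref{eq:foldedcov} plays no role. Under $H_S$ the marginal $Z_S$ is centered Gaussian with unit variances and otherwise arbitrary covariance. Hence \cref{lem:gaussian}, applied to the convex even scores $f_n$ with cap $n C_n/\eta_n$ after rescaling, bounds the expectation by its independent-Gaussian value. The calibration \eqref{eq:calibration} then makes that value at most one. Adding the two pieces,
\[
\E E_S\le \frac nm+\Bigl(1-\frac nm\Bigr)\cdot1=1 .
\]
For $S=[m]$ the convention $E_{[m]}=\ind\{|B^{(2)}|>0\}/\alpha$ has expectation $\Prob(B^{(2)}\neq\varnothing)/\alpha$. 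Under the global null this equals $\FDR(B^{(2)})/\alpha\le1$, by BH's global-null guarantee for PRDS p-values.

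\emph{Step 3: closure conclusions.} Pointwise, $\FDP_S(B^{(2)})\le\alpha E_S$, since the auxiliary term is nonnegative and at $S=[m]$ the FDP is at most $\ind\{|B^{(2)}|>0\}$. Therefore $B^{(2)}\in\cR_\alpha(E)$ and the superset search is nonempty. Every certified $R$ then satisfies $\FDP_{I_0}(R)\le\alpha E_{I_0}$. This gives the simultaneous guarantee \eqref{eq:simulfdr} and hence $\FDR(\bar B^{(2)})\le\alpha$; the case $I_0=\varnothing$ is trivial.

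For the special case \eqref{eq:foldedcov}, PRDS has already been justified in the text via MTP$_2$ of $|Z_{I_0}|$ and independence from the nonnull block. That fact is taken as given, so the corollary reduces to the general PRDS statement.

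\emph{Main obstacle.} The only subtle point is confirming that the auxiliary term needs nothing beyond centered unit-variance Gaussianity of $Z_S$. In particular, its validity must not rely on the PRDS hypothesis or on a cross-covariance condition. \cref{lem:gaussian} supplies exactly this, and its proof already notes that nonnull means and correlation signs are irrelevant.
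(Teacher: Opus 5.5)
Your proposal is correct and follows the paper's own proof. The paper combines three ingredients: the folded-PRDS BH contribution bound $\E\FDP_S(B^{(2)})\le\alpha|S|/m$ for the baseline term, \cref{lem:gaussian} with independent-Gaussian cap calibration for the auxiliary term under arbitrary covariance, and the e-closure argument of \cref{thm:anchor}; you use the same three, and you additionally spell out the cap rescaling and the $S=[m]$ convention, which the paper leaves implicit.
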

\begin{proof}
For each true intersection $S$, folded PRDS gives
$\E\FDP_S(B^{(2)})\le\alpha|S|/m$.
Combine this with \cref{lem:gaussian} and \cref{thm:anchor}.
\end{proof}
This is an application of Anchored-BH to a valid two-sided input vector, not an additional procedure family. Nonnegative correlations alone do not imply PRDS of folded Gaussian p-values \citep{BY2001}. Although $B^{\pm}\subseteq B^{(2)}$ pointwise, retaining 2BH does not imply retaining pooled two-sided BH. Likewise, the centered-null precision condition should not be applied to a noncentral full vector without the independence clause in \eqref{eq:foldedcov}.

For ordinary $B^{(2)}$, the common $|Z|$ ordering permits the quadratic algorithm in Appendix~\ref{sec:quadratic}. The two-tail union need not be a prefix of that ordering; the smallest-sum constraints in Appendix~\ref{sec:computation} still apply, but the prefix-baseline monotone scan is not asserted for Anchored-2BH. Both constructions use the same independent-Gaussian calibration, with no covariance estimation. Since independence is allowed in both covariance classes, these restrictions improve the available \emph{baseline}, not the least-favorable cap factor for a fixed score. Section~\ref{sec:sim-2bh} evaluates Anchored-2BH under independence and positive equicorrelation, with balanced and one-direction signals; the ordinary pooled two-sided extension is not separately anchored in that experiment.

\section{Anchored-BY under arbitrary dependence}\label{sec:by}\label{sec:extensions}

Let $A=\BY_\alpha(p)=\BH_{\alpha/H_m}(p)$, where $H_m=\sum_{k=1}^m1/k$. For arbitrary dependence among superuniform null p-values, the coordinatewise bound
\begin{equation}\label{eq:bybudget}
 \E\left[\frac{\ind\{i\in A\}}{|A|\vee1}\right]\le\frac\alpha m,
 \qquad i\in I_0,
\end{equation}
follows directly from the BY staircase \citep{BY2001}. To see it, put $a=\alpha/(mH_m)$. If $p_i\in((k-1)a,ka]$ is rejected, then $|A|\ge k$, so its FDP contribution is at most $1/k$. The decreasing staircase with these heights and zero beyond $ma$ has uniform expectation $aH_m=\alpha/m$. Superuniformity can only reduce its expectation.

Summing \eqref{eq:bybudget} over a true intersection leaves the same residual coefficient as BH, now without a dependence restriction.

\begin{corollary}[Anchored-BY]\label[corollary]{cor:by}
Suppose the null p-values are superuniform, with arbitrary dependence. For any local e-values $F_S$ for $H_S$, define
\begin{equation}\label{eq:byanchor}
 E_S^{\rm BY}=\frac{|A\cap S|}{\alpha(|A|\vee1)}
                 +\left(1-\frac{|S|}{m}\right)F_S,
 \qquad A=\BY_\alpha(p),
\end{equation}
with the second term omitted for $S=[m]$. The largest certified superset $\bar A$, called \emph{Anchored-BY}, satisfies
\[
 \bar A\supseteq\BY_\alpha(p)\quad\text{pointwise},\qquad
 \FDR(\bar A)\le\alpha.
\]
Every certified superset enjoys the same conclusion, and the simultaneous guarantee holds.
\end{corollary}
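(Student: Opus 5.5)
The plan is to follow the template of the proof of \cref{thm:anchor}, with the BH contribution bound replaced by the BY bound \eqref{eq:bybudget}. The argument has three parts: validity of the local e-values, retention of $A$, and the closure step.

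\emph{Validity.} Fix a nonempty $S\subseteq I_0$ with $S\ne[m]$. Summing \eqref{eq:bybudget} over $i\in S$ and using linearity gives
\[
 \E\frac{|A\cap S|}{\alpha(|A|\vee1)}\le\frac{|S|}{m}.
\]
The second term has expectation at most $(1-|S|/m)\E F_S\le1-|S|/m$, because $F_S$ is an e-value for $H_S$. Only linearity of expectation is used, so no joint structure between $A$ and $F_S$ matters, and the sum is at most one.

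For $S=[m]$ (the global null) the second term is omitted. The full sum of \eqref{eq:bybudget} over $[m]$ gives $\E\,\FDP_{[m]}(A)/\alpha\le1$. The quantity $\FDP_{[m]}(A)/\alpha=\ind\{|A|>0\}/\alpha$ agrees with the full-intersection convention of \eqref{eq:anchor}.

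The only nonroutine ingredient is the coordinatewise bound \eqref{eq:bybudget} itself under arbitrary dependence. This is the step I would check most carefully, and I would re-derive it as sketched before the corollary. For $i\in I_0$, the staircase argument gives the pointwise bound
\[
 \ind\{i\in A\}/(|A|\vee1)\le g(p_i),
\]
where $g$ is the decreasing step function with $g=1/k$ on $((k-1)a,ka]$ and $g=0$ beyond $ma$. The bound holds because rejection of $p_i\in((k-1)a,ka]$ forces $|A|\ge k$: the BY threshold $a|A|m/m\cdot$ must be at least $p_i>(k-1)a$. Since $g$ depends on $p_i$ alone, dependence among coordinates is irrelevant. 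Superuniformity and monotonicity of $g$ then give
\[
 \E g(p_i)\le\int_0^1 g=a\sum_{k=1}^m1/k=\alpha/m.
\]

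\emph{Retention.} Pointwise, for every nonempty $S$, we have $\FDP_S(A)=\alpha\cdot|A\cap S|/(\alpha(|A|\vee1))\le\alpha E^{\rm BY}_S$, because the second term is nonnegative. Hence $A\in\cR_\alpha(E^{\rm BY})$, so the superset search is nonempty and $\bar A\supseteq A$ is well defined.

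\emph{Closure.} If $I_0=\varnothing$, the FDP is zero. Otherwise every certified $R$ satisfies $\FDP_{I_0}(R)\le\alpha E^{\rm BY}_{I_0}$. Hence
\[
 \E\max_{R\in\cR_\alpha}\FDP_{I_0}(R)\le\alpha,
\]
which is the simultaneous guarantee \eqref{eq:simulfdr}. It gives FDR control for $\bar A$ and for every certified superset.
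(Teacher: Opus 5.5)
Your proposal is correct and follows the paper's own argument: sum the BY staircase bound \eqref{eq:bybudget} over $S$, add the residual $(1-|S|/m)\E F_S\le 1-|S|/m$, note that $A$ is certified pointwise, and finish with the closure step of \cref{thm:anchor}. One fix: in your staircase step the BY rejection threshold should read $a|A|=\alpha|A|/(mH_m)$; the displayed fragment there is garbled, but the reasoning it supports is right.
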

\begin{proof}
The contribution bound gives $\E E_S^{\rm BY}\le |S|/m+(1-|S|/m)\E F_S\le1$. Its baseline certificate is retained pointwise, so the proof of \cref{thm:anchor} applies.
\end{proof}
Unboosted averages of normalized decreasing p-to-e scores provide $F_S$ under arbitrary dependence. In the Gaussian model, convex-even hinge mixtures can additionally use the independent-cap calibration of Section~\ref{sec:gaussian}, with no covariance restriction. Calibration based on Gaussianity is not part of the arbitrary-p-value guarantee.

Anchored-BY is incomparable in general with closed BY: Appendix~\ref{app:incomparability} gives pointwise witnesses in both directions for a fixed hinge-based variant, including the improved closed-BY comparator used in the simulations. Anchoring retains a corrected baseline; it does not remove that baseline's harmonic correction. This distinguishes its guarantee from a universally harmonic-free replacement for BY.

\begin{remark}
The broader principle used above is as follows.
Let $A$ be a baseline rejection set. For every nonempty $S$, suppose a deterministic $q_S\in[0,1]$ satisfies
 $\E\{\FDP_S(A)/\alpha\}\le q_S$
whenever all hypotheses in $S$ are true.
For intersection e-values $F_S$, define
 $E_S=\frac{\FDP_S(A)}\alpha+(1-q_S)F_S,$
omitting the auxiliary term when $q_S=1$. Then $E_S$ is an intersection e-value. Every measurable certified superset of $A$ controls FDR at $\alpha$ and contains all of $A$'s discoveries.
\end{remark}

\section{Anchored e-BH under arbitrary dependence}
\label{sec:anchored-ebh}



The anchoring principle also applies directly to e-BH
\cite{WangRamdas2022}. Let $e_1,\ldots,e_m$ be nonnegative
e-values, satisfying $\mathbb E[e_i]\le 1$ whenever $H_i$
is true, with no assumptions on their dependence.
Write $B=\mathrm{eBH}_\alpha(e)$ and $b=|B|$.
The self-consistency property of e-BH gives
\[
\frac{\mathbf 1\{i\in B\}}{\alpha(b\vee1)}
\le \frac{e_i}{m},
\qquad i\in[m].
\]
Consequently, for every $S\subseteq[m]$,
\[
\frac{\mathrm{FDP}_S(B)}{\alpha}
\le \frac1m\sum_{i\in S}e_i.
\]
This suggests the following particularly simple
local e-values:
\begin{equation}
\label{eq:anchored-ebh}
E_S
=
\frac1m\left(
\sum_{i\in S}e_i+m-|S|
\right),
\qquad \varnothing\ne S\subseteq[m].
\end{equation}
Indeed, under $H_S$,
\[
\mathbb E[E_S]
\le \frac{|S|}{m}+\frac{m-|S|}{m}=1.
\]
Moreover, these e-values pointwise dominate
the canonical certificates of $B$.

The anchoring interpretation can be made explicit:
\[
E_S
=
\frac{\mathrm{FDP}_S(B)}{\alpha}
+\left(1-\frac{|S|}{m}\right)
+\frac1m\sum_{i\in S}
\left(
e_i-\frac{m\mathbf 1\{i\in B\}}
{\alpha(b\vee1)}
\right).
\]
Both additional terms are nonnegative.
Thus we retain the baseline certificate and
reallocate the evidence left unused by e-BH.

\begin{proposition}[Anchored e-BH]
\label{prop:anchored-ebh}
Let $(E_S)$ be defined by
\eqref{eq:anchored-ebh}, and let $\bar B$ be
a measurable largest-cardinality certified
superset of $B$, satisfying
\[
\mathrm{FDP}_S(\bar B)\le\alpha E_S
\qquad\text{for every nonempty }S\subseteq[m].
\]
Under arbitrary dependence among the e-values,
\[
\bar B\supseteq B,
\qquad
\mathrm{FDR}(\bar B)\le\alpha.
\]
The containment can be strict.
\end{proposition}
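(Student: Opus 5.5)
The plan follows the template of the proof of \cref{thm:anchor}, in three steps. First, validity of the local e-values. Second, that $B$ is certified, so the superset search is nonempty. Third, the closure argument for FDR, plus an explicit witness for strict containment.

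\emph{Validity.} For nonempty $S$ with all $H_i$, $i\in S$, true, linearity gives $\E E_S=\frac1m\bigl(\sum_{i\in S}\E e_i+m-|S|\bigr)\le1$. Nonnegativity is immediate. No dependence assumption enters, since only marginal expectations are used.

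\emph{Certification of $B$.} Recall that e-BH rejects $B=\{i:e_i\ge m/(\alpha b)\}$, where $b$ is the largest $k$ with $e_{[k]}\ge m/(\alpha k)$. Hence every $i\in B$ has $e_i\ge m/(\alpha(b\vee1))$. This yields the displayed self-consistency inequality $\ind\{i\in B\}/(\alpha(b\vee1))\le e_i/m$; it is trivial when $i\notin B$. Summing over $i\in S$ gives $\FDP_S(B)/\alpha\le\frac1m\sum_{i\in S}e_i\le E_S$, because $m-|S|\ge0$. Thus $B\in\cR_\alpha(E)$. The argmax over certified supersets is therefore over a nonempty finite family, and a fixed tie rule makes $\bar B$ measurable.

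\emph{FDR.} If $I_0=\varnothing$, the FDP is zero. Otherwise take $S=I_0$ in the certification constraint, giving $\FDP_{I_0}(\bar B)\le\alpha E_{I_0}$. Taking expectations gives $\FDR(\bar B)\le\alpha$, and the same bound holds simultaneously over the whole certified collection, as in \eqref{eq:simulfdr}.

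\emph{Strictness.} This is the step I expect to need actual care, since a concrete witness is required. I would take $m=2$ and $\alpha=1/2$, so that e-BH rejects $k$ hypotheses when the $k$th largest e-value is at least $m/(\alpha k)=4/k$. Set $e_1=4$ and $e_2=0.5$. Then $B=\{1\}$: indeed $e_1\ge4$, while $e_{[2]}=0.5<2$. Now check $R=\{1,2\}$, whose constraint is $|R\cap S|/2\le E_S/2$, i.e.\ $|R\cap S|\le E_S$:
\begin{itemize}
\item $S=\{1\}$: $E_S=(4+1)/2=2.5\ge1$;
\item $S=\{2\}$: $E_S=(0.5+1)/2=0.75<1$, so this constraint fails.
\end{itemize}
Adjusting to $e_2=1$ gives $E_{\{2\}}=1$, which passes. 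However, e-BH then has $e_{[2]}=1<2$, so still $B=\{1\}$. With $e_2=1$ the full intersection has $E_{[2]}=(4+1)/2=2.5\ge2$, which also passes. Hence $\bar B=\{1,2\}\supsetneq B$, which shows strict containment.

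The main obstacle is only in choosing such numbers cleanly. The singleton constraint for the added index requires $e_i+m-1\ge m/\alpha$, which is much weaker than e-BH's requirement $e_i\ge m/(\alpha b)$ when $b$ is small relative to $m$. That gap is exactly what the example exploits.
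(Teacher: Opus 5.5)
Your proof is correct and follows the paper's argument (validity of $E_S$ by linearity, certification of $B$ via e-BH self-consistency, and the closure constraint at $I_0$). Your witness $(e_1,e_2)=(4,1)$ with $m=2$, $\alpha=1/2$ checks out; it differs harmlessly from the paper's $(39.5,1)$ at $\alpha=0.05$, where the e-BH baseline is even empty. Your closing heuristic has a slip: for a candidate of size $r$, the singleton constraint of an added index is $e_i+m-1\ge m/(\alpha r)$, not $m/\alpha$ (your own example has $e_2+1=2<4$). So strictness is established by your explicit constraint check, not by that remark.
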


\begin{proof}
The self-consistency inequality above shows that
$\mathrm{FDP}_S(B)\le\alpha E_S$ for every $S$,
so $B$ is certified. Hence a certified superset
$\bar B$ exists. Evaluating its closure constraint
at the true-null set $I_0$ gives
\[
\mathrm{FDR}(\bar B)
\le\alpha\mathbb E[E_{I_0}]
\le\alpha.
\]
The case $I_0=\varnothing$ is immediate.
\end{proof}

\paragraph{Incomparability with closed e-BH.}
Anchored e-BH and mean closed e-BH are pointwise incomparable.
Indeed, let $m=3$ and $\alpha=1/2$, and consider the local
e-values
\[
E_S^{\mathrm A}
=\frac{\sum_{i\in S}e_i+3-|S|}{3},
\qquad
E_S^{\mathrm C}
=\frac{1}{|S|}\sum_{i\in S}e_i,
\]
corresponding to Anchored e-BH and mean closed e-BH,
respectively. For $e=(4,4,1/10)$, direct verification of the
closure constraints yields
\[
\bar B^{\mathrm A}=\{1,2,3\}
\supsetneq \bar B^{\mathrm C}=\{1,2\}.
\]
Conversely, for $e=(4,2,4/5)$, the inequalities reverse:
\[
\bar B^{\mathrm C}=\{1,2,3\}
\supsetneq \bar B^{\mathrm A}=\{1,2\}.
\]
In the latter case, for example, Anchored e-BH cannot reject
all three hypotheses because
$E_{\{2,3\}}^{\mathrm A}=19/15<4/3$.
Thus neither procedure uniformly dominates the other.

\paragraph{Computation and connection to donation.}
The closure has a simple step-up characterization.
Put
\[
h_i=(e_i-1)_+,
\qquad
C=\sum_{i=1}^m(e_i\wedge1).
\]
For a nonempty candidate $R$ of size $r$,
the entire collection of closure constraints is
equivalent to
\begin{equation}
\label{eq:anchored-ebh-stepup}
\sum_{i\in R}
\left(\frac{m}{\alpha r}-h_i\right)_+
\le C.
\end{equation}
For fixed $r$, the left-hand side is minimized
by choosing the $r$ largest e-values.
After sorting, prefix sums and a binary search
for the positive summands evaluate each candidate
size in $O(\log m)$ time. Thus the largest
certified rejection set can be found in
$O(m\log m)$ time and $O(m)$ memory.

Condition \eqref{eq:anchored-ebh-stepup} is also
the feasibility condition obtained by allowing
each e-value to donate up to $e_i\wedge1$
units of evidence to other hypotheses.
Thus Anchored e-BH recovers the offline
donation interpretation of e-BH through a
particularly simple collection of local e-values.

For a strict example, take $m=2$, $\alpha=0.05$,
and $(e_1,e_2)=(39.5,1)$. Ordinary e-BH rejects
nothing. However,
\[
E_{\{1\}}=E_{\{1,2\}}=20.25>1/\alpha,
\]
so $\{1\}$ is certified by the anchored procedure.

\section{Anchored-dBH under conditional calibration}\label{sec:dbh}
Conditional calibration provides another baseline for which the remaining budget is explicit. The dependence-adjusted BH procedure of \citet{FithianLei2022}, denoted dBH, uses a specified conditional null model to calibrate each hypothesis's FDR contribution. This is a different assumption from PRDS: in the Gaussian example the covariance matrix is known, may have either sign, and is used by the baseline. Anchoring adds a second, inexpensive step without recalculating those conditional thresholds.

\subsection{The coordinatewise guarantee behind dBH}
Write $C_i$ for the conditioning statistic used for hypothesis $i$, and let $\widehat r_i\ge1$ be the rejection-count estimate used in calibration. With $\tau_i(c;X)$ a nondecreasing threshold family, the calibrated $\widehat c_i(C_i)$ satisfies
\begin{equation}\label{eq:dbhcalibration}
 \sup_{P\in H_i}\E_P\left[
 \frac{\ind\{p_i\le\tau_i(\widehat c_i;X)\}}{\widehat r_i}
 \given C_i\right]\le\frac\alpha m.
\end{equation}
For dBH$_\gamma$, $\tau_i(c;X)=c|\BH_c(p)|/m$ and
$\widehat r_i=|\BH_{\gamma\alpha}(p)\cup\{i\}|$, with fixed $0<\gamma\le1$.
Let $A^+=\{i:p_i\le\tau_i(\widehat c_i;X)\}$. If $|A^+|\ge\widehat r_i$ for every selected $i$, set $A=A^+$. Otherwise use the randomized pruning step prescribed by \citet{FithianLei2022}; it must not be silently omitted. Their proof gives the coordinatewise inequality
\begin{equation}\label{eq:dbhcoordinate}
 \E\frac{\ind\{i\in A\}}{|A|\vee1}
 \le \E\frac{\ind\{i\in A^+\}}{\widehat r_i}
 \le\frac\alpha m,\qquad i\in I_0.
\end{equation}
The first expectation includes the pruning randomness when present. Thus dBH retains the $\alpha m_0/m$ bound, even though its individual conditional calibration can be tight. The unused budget comes from summing only over true nulls; we do not assume that its fitted conditional thresholds can be increased for free.

\subsection{A no-loss augmentation without further conditional integration}
For any local e-values $F_S$ valid in the conditional-calibration model, define
\begin{equation}\label{eq:dbhanchor}
 E_S^{\rm dBH}=
 \frac{|A\cap S|}{\alpha(|A|\vee1)}
 +\left(1-\frac{|S|}{m}\right)F_S.
\end{equation}
\Needspace{12\baselineskip}
\begin{theorem}[Anchored-dBH]\label{thm:dbhanchor}
Assume the conditional calibration and any required pruning of \citet{FithianLei2022} are performed so that \eqref{eq:dbhcoordinate} holds. Then \eqref{eq:dbhanchor} is a local e-collection. Every measurable e-closure-certified superset $\bar A$ satisfies
\[
 \bar A\supseteq A\quad\text{for every realized data and pruning vector},
 \qquad \FDR(\bar A)\le\alpha.
\]
The auxiliary evidence need not be independent of the baseline. If the baseline is randomized, the FDR statement averages over that same randomness.
\end{theorem}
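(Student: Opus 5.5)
The proof will follow the template of \cref{thm:anchor}, with the BH contribution bound replaced by the dBH coordinatewise inequality \eqref{eq:dbhcoordinate}. The argument has three steps: validity of the local e-values, pointwise retention of $A$, and FDR control through the $I_0$ constraint.

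First, I will establish that \eqref{eq:dbhanchor} is a local e-collection. Fix a nonempty $S$ and a distribution under which every $H_i$, $i\in S$, is true. I will write
\[
 \frac{|A\cap S|}{\alpha(|A|\vee1)}=\frac1\alpha\sum_{i\in S}\frac{\ind\{i\in A\}}{|A|\vee1}.
\]
Linearity of expectation, applied jointly over the data and the pruning randomness, then gives expectation at most $\frac1\alpha\cdot|S|\alpha/m=|S|/m$. The key observation is that \eqref{eq:dbhcoordinate} is a per-coordinate statement valid under $H_i$ alone. It therefore applies to each $i\in S$ simultaneously under $H_S$, and no joint calibration across $S$ is required. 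Adding $(1-|S|/m)\E F_S\le 1-|S|/m$ gives $\E E_S^{\rm dBH}\le1$, since the coefficient is nonnegative. For $S=[m]$ the auxiliary term vanishes. No independence between $A$ and $F_S$ is used, because only the two marginal expectations enter.

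The step I expect to need the most care is making sure the model classes are compatible. Both \eqref{eq:dbhcoordinate} and the validity of $F_S$ must hold for the same $P\in H_S$, where $H_S$ here means the conditional-calibration model in which $H_S$ holds. If the model is specified coordinatewise, with $P\in H_i$ for each $i$, then every $P\in H_S$ lies in $\bigcap_{i\in S}H_i$. In that case the supremum in \eqref{eq:dbhcalibration}, after integrating over $C_i$, covers $P$, and the hypothesis that $F_S$ is valid "in the conditional-calibration model" closes the gap. When pruning is randomized, I will treat the pruning variable as part of the sample space, so that $E_S^{\rm dBH}$ is a random variable on the enlarged space. Validity then holds in expectation over data and pruning together, which is exactly what \eqref{eq:dbhcoordinate} provides.

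Next, retention. For every realized data and pruning vector, $\FDP_S(A)=\alpha\cdot|A\cap S|/\{\alpha(|A|\vee1)\}\le\alpha E_S^{\rm dBH}$, because the second term is nonnegative. Hence $A\in\cR_\alpha(E^{\rm dBH})$, and any certified superset $\bar A$ contains $A$ by definition. Finally, for $I_0\neq\varnothing$, the constraint at $S=I_0$ gives $\FDP_{I_0}(\bar A)\le\alpha E^{\rm dBH}_{I_0}$ pointwise. Taking expectations over data and pruning yields $\FDR(\bar A)\le\alpha$, and in fact the simultaneous bound \eqref{eq:simulfdr}. The case $I_0=\varnothing$ is trivial. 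Measurability of $\bar A$ is assumed in the statement, so the expectation is well defined.
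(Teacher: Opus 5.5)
Your proposal is correct and follows the paper's proof: you sum \eqref{eq:dbhcoordinate} over $i\in S$ to get $\E E_S^{\rm dBH}\le |S|/m+(1-|S|/m)\E F_S\le 1$, you certify $A$ pointwise because the auxiliary term is nonnegative, and you apply the closure constraint at $I_0$. Your extra remarks are consistent with how the paper treats these points: placing the pruning variable in the sample space matches the statement's clause that FDR averages over the baseline's randomness, and the model-compatibility paragraph only spells out what the paper leaves implicit.
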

\begin{proof}
Under $H_S$, summing \eqref{eq:dbhcoordinate} over $i\in S$ gives
$\E E_S^{\rm dBH}\le |S|/m+(1-|S|/m)\E F_S\le1$.
Pointwise, $\alpha E_S^{\rm dBH}\ge\FDP_S(A)$, so $A$ is certified. Apply e-closure at the actual null intersection.
\end{proof}
The result also applies to dBY and valid recursive refinements of dBH when their final coordinate contributions obey \eqref{eq:dbhcoordinate}. It does not assert an ordering between an anchor of first-round dBH and the published second-round procedure dBH$^{(2)}$. Conditional recalibration and anchoring exploit different sources of conservativeness.

In the known-covariance Gaussian model, unboosted Gaussian-hinge means or their frozen mixtures give $F_S$ immediately; \cref{lem:gaussian} permits additional fixed-score cap calibration. For non-Gaussian models, marginal p-to-e calibrators give a simpler alternative. These choices require no new conditional integrations after dBH has returned its rejection mask.

The generic pilot construction of Section~\ref{sec:coordinate} also applies. If $A_\lambda$ is dBH run at a fixed level $\lambda>\alpha$, then
\[
 x_i^{(\lambda)}=
 \frac{m\ind\{i\in A_\lambda\}}{\lambda(|A_\lambda|\vee1)}
\]
is an individual e-value by \eqref{eq:dbhcoordinate}, with $\lambda$ in place of $\alpha$. Means over $S$ and prespecified mixtures of such pilot scores are valid. Unlike PRDS pilot scores, these rely on conditional calibration and require the extra pilot computations. Our experiments use fixed hinges rather than these more expensive pilots.

\subsection{Gaussian implementation and scope}
For $Z\sim N_m(\mu,\Sigma)$ with known positive definite correlation matrix, take
$C_i=Z_{-i}-\Sigma_{-i,i}Z_i$. Under $H_i:\mu_i=0$, the conditional null path is
\[
 Z^{(i,t)}=Z-\Sigma_{\cdot i}Z_i+\Sigma_{\cdot i}t,
 \qquad t\sim N(0,1).
\]
If $q_i$ is the observed BH-adjusted p-value, maximal first-round dBH selects $i$ initially exactly when
\begin{equation}\label{eq:dbhintegral}
 \int_\R
 \frac{\ind\{i\in\BH_{q_i}(p(Z^{(i,t)}))\}}
 {|\BH_{\gamma\alpha}(p(Z^{(i,t)}))\cup\{i\}|}
 \phi(t)\,dt\le\frac\alpha m.
\end{equation}
The finite-knot integration is the homotopy construction of \citet[Section~4]{FithianLei2022}, not a new baseline algorithm. For two-sided Gaussian p-values, all possible knots solve
$|Z_j-\Sigma_{ji}Z_i+\Sigma_{ji}t|=\Phi^{-1}(1-cr/(2m))$
for $c\in\{q_i,\gamma\alpha\}$, $j,r\in[m]$ with $\Sigma_{ji}\ne0$. The integrand is constant between consecutive knots; its integral is a finite sum of normal probabilities. The denominator includes $i$ even when the reference BH run does not reject it.

The final mask need not be a prefix of the marginal p-value order. The three-list algorithm in Appendix~\ref{sec:arbitrarybasecompute} gives the \emph{additional} closure cost
$O((m-|A|)m\log m+m\log m)$ with finitely many common-order score profiles. This does not turn the entire dBH procedure into an $O(m\log m)$ algorithm: the baseline's conditional integrations are still required. The experiments separately time calibration and augmentation, retain any prescribed pruning, and use no numerical integration grid.

\section{Anchored ordered testing via conditional residuals}\label{sec:ordered}
The prescribed order can provide useful information beyond the p-values themselves. We now assume that null p-values are mutually independent exact uniforms and independent of the nonnull vector, and that the hypothesis order is fixed independently of the tested null p-values. These are \emph{offline ordered} procedures, not online tests making irrevocable decisions. SeqStep+ rejects a complete prefix; Selective SeqStep+ and Adaptive SeqStep reject small p-values within a prefix \citep{BarberCandes2015,LeiFithian2016}. We preserve these structures when augmenting them.

Their usual FDR bounds do not supply BH's coefficient $1-|S|/m$. Instead, we recycle a \emph{conditional coordinate budget}. The resulting closure is simpler than the general intersection search and can add discoveries even when the baseline is empty.

\subsection{Conditional residual certificates and structural closure}\label{sec:orderedresidual}
For a baseline $A$, write $d_i(A)=\ind\{i\in A\}/(|A|\vee1)$. Suppose a computable, nonnegative $\mathcal G_i$-measurable budget $\eta_i$ satisfies
\begin{equation}\label{eq:ordbudget}
 \E_{H_i}\{d_i(A)\mid\mathcal G_i\}\le\eta_i,
 \qquad \sum_{i\in I_0}\E\eta_i\le\alpha.
\end{equation}
The conditional inequality must hold throughout the null model. A sufficient source is a data-dependent budget $b_i\ge d_i(A)$ with $\sum_{i\in I_0}\E b_i\le\alpha$, setting $\eta_i=\E_{H_i}(b_i\mid\mathcal G_i)$. Define
\[
 \rho_i=\eta_i-\E_{H_i}\{d_i(A)\mid\mathcal G_i\}\ge0.
\]
\begin{theorem}[Conditional residual anchoring]\label[theorem]{thm:ordresidual}
For $f_i\ge0$ satisfying $\E_{H_i}(f_i\mid\mathcal G_i)\le1$, let
\begin{equation}\label{eq:ordcertificate}
 a_i=d_i(A)+\rho_i f_i,\qquad
 E_S=\frac1\alpha\sum_{i\in S}a_i.
\end{equation}
Then $(E_S)$ is a local e-collection and certifies $A$. Every measurable certified superset $\bar A\supseteq A$ retains the baseline and controls FDR at $\alpha$. For a nonempty candidate $R$, certification is equivalent to
\begin{equation}\label{eq:ordsingleton}
 a_i\ge\frac1{|R|}\quad\text{for every }i\in R.
\end{equation}
\end{theorem}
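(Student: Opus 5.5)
The proof has three parts: validity of each $E_S$, pointwise certification of $A$, and the reduction of all closure constraints to the singleton condition \eqref{eq:ordsingleton}.

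\emph{Validity.} Fix a nonempty $S$ of true nulls; it suffices to show $\E a_i\le\E\eta_i$ for each $i\in S$. Condition on $\mathcal G_i$. Since $\rho_i$ is $\mathcal G_i$-measurable and nonnegative,
\[
\E\{a_i\mid\mathcal G_i\}=\E\{d_i(A)\mid\mathcal G_i\}+\rho_i\E\{f_i\mid\mathcal G_i\}\le\E\{d_i(A)\mid\mathcal G_i\}+\rho_i=\eta_i .
\]
Taking expectations and summing over $i\in S\subseteq I_0$ gives $\E\sum_{i\in S}a_i\le\sum_{i\in I_0}\E\eta_i\le\alpha$, so $\E E_S\le1$. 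Nonnegativity is immediate.

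One point needs care here: the conditional identities are taken under the true joint law, where $i$ is null. The hypothesis in \eqref{eq:ordbudget} is stated ``throughout the null model'' precisely so that it applies when the other coordinates are arbitrary. I would note this explicitly, so that $\E_{H_i}$ may be read as the actual conditional expectation given $\mathcal G_i$ whenever $i\in I_0$.

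\emph{Certification of $A$ and FDR control.} Pointwise, $a_i\ge d_i(A)$, so
\[
\alpha E_S\ge\sum_{i\in S}d_i(A)=\FDP_S(A)
\]
for every $S$. Hence $A$ is certified. For any certified $\bar A\supseteq A$, the constraint at $S=I_0$ gives $\FDP_{I_0}(\bar A)\le\alpha E_{I_0}$; taking expectations yields $\FDR(\bar A)\le\alpha$, with the case $I_0=\varnothing$ trivial. This also gives the simultaneous version \eqref{eq:simulfdr}.

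\emph{Equivalence with \eqref{eq:ordsingleton}.} Let $R$ be nonempty with $r=|R|$. Necessity follows from the singleton constraints: for $S=\{i\}$ with $i\in R$, certification requires $1/r\le a_i$. For sufficiency, take any nonempty $S$. Then
\[
\FDP_S(R)=\frac{|R\cap S|}{r}=\sum_{i\in R\cap S}\frac1r\le\sum_{i\in R\cap S}a_i\le\sum_{i\in S}a_i=\alpha E_S,
\]
where the last inequality uses $a_i\ge0$. So the additive structure of $E_S$ collapses the exponentially many constraints to $r$ singleton checks. I expect no real obstacle in this part; the only delicate step is the conditional-expectation argument for validity discussed above.
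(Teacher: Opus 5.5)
Your proposal is correct and follows the paper's proof essentially step for step: conditioning on $\mathcal G_i$ gives $\E a_i\le\E\eta_i$ and hence $\E E_S\le1$; the pointwise bound $a_i\ge d_i(A)$ certifies $A$; the constraint at $S=I_0$ gives FDR control; and singleton necessity plus summation over $R\cap S$ (using $a_i\ge0$) gives the equivalence. Your remark about reading $\E_{H_i}$ as the actual conditional expectation when $i\in I_0$ is a reasonable clarification, which the paper handles by requiring the conditional inequality to hold throughout the null model.
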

\begin{proof}
Conditioning gives $\E_{H_i}a_i\le\E\eta_i$, so under $H_S$,
$\E E_S\le\alpha^{-1}\sum_{i\in I_0}\E\eta_i\le1$. Also $a_i\ge d_i(A)$ pointwise, and summing over $S$ certifies $A$. The actual-null constraint bounds the selected FDP by $\alpha E_{I_0}$. Finally, singleton intersections give necessity of \eqref{eq:ordsingleton}; summing these inequalities over $R\cap S$ and using $a_i\ge0$ proves sufficiency.
\end{proof}
No independence between $A$ and the scores is needed beyond the conditional guarantees. The unrestricted closure is e-BH on the compound e-values $ma_i/\alpha$, taking $O(m\log m)$ time after score construction \citep{WangRamdas2022}. The budget-and-score argument is related to conditional calibration \citep{FithianLei2022,LuoFithianLei2024,LeeRen2024}; its role here is to preserve the realized baseline certificate and permit a prescribed ordered structure.

For a complete-prefix baseline, the largest certified prefix has endpoint
\begin{equation}\label{eq:ordprefix}
 k^*=\max\left(\left\{k\ge |A|:\min_{i\le k}a_i\ge1/k\right\}\cup\{|A|\}\right).
\end{equation}
Running minima compute it in $O(m)$ time. For a fixed selective cutoff $s$, the same scan follows only the original-order indices with $p_i\le s$, using their cumulative rejection count. A more flexible family is
\begin{equation}\label{eq:ordrectangle}
 R(k,t)=\{i\le k:p_i\le t\},\qquad A\subseteq R(k,t).
\end{equation}
For each $k$, scan its indices in increasing p-value order and check \eqref{eq:ordsingleton} with running minima. A single global sort followed by these scans gives the largest certified rectangle in $O(m^2)$ time. This permits a \emph{certified} cutoff change; it is not an unsupported union after tuning the original rule.

\subsection{Residual evidence concentrated on near misses}\label{sec:orderedband}
Here $\mathcal G_i=\sigma(p_{-i})$, so the conditional null coordinate is $U_i=p_i\sim\Unif(0,1)$. With all probabilities and expectations conditional on $\mathcal G_i$, define
\[
 M_i=\Prob(i\notin A\mid\mathcal G_i),\qquad
 H_i(u)=\Prob(U_i\le u,\ i\notin A\mid\mathcal G_i),
\]
\[
 \kappa_i=\E\{|A|+1\mid\mathcal G_i,i\notin A\},\qquad
 \ell_i=\min\{M_i,\rho_i\kappa_i\}.
\]
Set the score to zero if $M_i=0$ or $\ell_i=0$; otherwise use the \emph{residual-band score}
\begin{equation}\label{eq:ordband}
 f_i=\frac{\ind\{i\notin A,\ H_i(p_i)\le\ell_i\}}{\ell_i}.
\end{equation}
Under the conditional null and nonselection, $H_i(U_i)$ is uniform on $[0,M_i]$. Therefore the numerator has conditional probability $\ell_i$ and the score has conditional expectation one. All quantities defining its width are functions of the conditioning information, not choices made after inspecting the tested coordinate.

The band spends the first $\ell_i$ units of nonselection probability on the strongest remaining p-values. If its width is not capped, $\rho_i/\ell_i=1/\kappa_i$ matches a conditional prospective rejection scale. This is a design choice, not an optimality claim. It exploits information discarded by categorical baseline decisions: two p-values in $[0,s]$ can have very different residual evidence. Prespecified mixtures of conditional scores also remain valid.

\subsection{Selective SeqStep+ and Adaptive SeqStep}\label{sec:orderedselective}
Fix $0<s\le\lambda<1$ and put
\[
 R_s(k)=\sum_{i\le k}\ind\{p_i\le s\},\qquad
 A_\lambda(k)=\sum_{i\le k}\ind\{p_i>\lambda\}.
\]
Adaptive SeqStep selects
\begin{equation}\label{eq:ordase}
 K=\max\left(\left\{k:
 \frac{s}{1-\lambda}\frac{1+A_\lambda(k)}{R_s(k)\vee1}\le\alpha
 \right\}\cup\{0\}\right),\qquad
 B=\{i\le K:p_i\le s\}.
\end{equation}
An empty eligible prefix produces no rejections. Selective SeqStep+ is the case $\lambda=s$ \citep{LeiFithian2016}. To retain a budget when the baseline is empty, extend the stopping index:
\begin{equation}\label{eq:ordkstar}
 K^*=\max\left\{K,\ \max\left(\left\{k:R_s(k)<
           \frac{s}{\alpha(1-\lambda)}\right\}\cup\{0\}\right)\right\}.
\end{equation}
The following budgets are deductions from the ordered stopping proofs, not procedures attributed to those papers:
\begin{equation}\label{eq:ordasebudget}
 b_i=\frac{\alpha(1-\lambda)}s
       \frac{\ind\{i\le K^*,\ p_i\le s\}}{1+A_\lambda(K^*)}.
\end{equation}
\begin{proposition}[Selective ordered budgets]\label[proposition]{prop:ordbudget}
Under the independent-null assumptions of this section, \eqref{eq:ordasebudget} satisfies $b_i\ge d_i(B)$ and $\sum_{i\in I_0}\E b_i\le\alpha$. Thus \cref{thm:ordresidual} applies with $A=B$ and $\eta_i=\E(b_i\mid p_{-i})$.
\end{proposition}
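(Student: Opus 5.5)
The plan has two parts, matching the two claims of \cref{prop:ordbudget}: a pointwise comparison $b_i\ge d_i(B)$, and an expectation bound that comes from the optional-stopping argument of \citet{LeiFithian2016}.

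\textbf{Pointwise domination.} If $i\notin B$, then $d_i(B)=0\le b_i$. If $i\in B$, then $K\ge1$, $i\le K$, $p_i\le s$, and $R_s(K)=|B|\ge1$. The stopping rule \eqref{eq:ordase} at $k=K$ gives
\[
 d_i(B)=\frac{1}{R_s(K)}\le\frac{\alpha(1-\lambda)}{s\{1+A_\lambda(K)\}}.
\]
It remains to show $K^*=K$ whenever $B\ne\varnothing$. The same inequality yields $R_s(K)\ge s\{1+A_\lambda(K)\}/\{\alpha(1-\lambda)\}\ge s/\{\alpha(1-\lambda)\}$. Since $k\mapsto R_s(k)$ is nondecreasing, the set $\{k:R_s(k)<s/(\alpha(1-\lambda))\}$ is an initial segment lying strictly below $K$. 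Its maximum is therefore smaller than $K$, so $K^*=K$. With this, $i\le K^*$ and $b_i=\alpha(1-\lambda)/[s\{1+A_\lambda(K)\}]\ge d_i(B)$.

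\textbf{Budget bound.} Write $V_s(k)=\sum_{i\in I_0,i\le k}\ind\{p_i\le s\}$ and $V_\lambda(k)=\sum_{i\in I_0,i\le k}\ind\{p_i>\lambda\}$. Since $A_\lambda(K^*)\ge V_\lambda(K^*)$,
\[
 \sum_{i\in I_0}b_i\le\frac{\alpha(1-\lambda)}{s}\,\frac{V_s(K^*)}{1+V_\lambda(K^*)}.
\]
It then suffices to prove $\E[V_s(K^*)/(1+V_\lambda(K^*))]\le s/(1-\lambda)$. This is the Selective SeqStep/Adaptive SeqStep lemma for a stopping time in reverse time. Define $\mathcal F_k$ to be generated by:
\begin{itemize}
\item all $p_j$ with $j>k$;
\item all nonnull p-values;
\item for null $j\le k$, the indicators $\ind\{s<p_j\le\lambda\}$;
\item the total $V_s(k)+V_\lambda(k)$.
\end{itemize}
Both defining conditions for $K^*$ in \eqref{eq:ordase} and \eqref{eq:ordkstar} depend only on $(R_s(k),A_\lambda(k))$, and these are $\mathcal F_k$-measurable. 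Scanning $k=m,m-1,\ldots$, the first $k$ at which either condition holds is $K^*$, so $K^*$ is a stopping time for the decreasing filtration $(\mathcal F_k)$.

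Given $\mathcal F_k$, the null indices $j\le k$ with $p_j\notin(s,\lambda]$ are independent Bernoulli draws. Each lands in $\{p\le s\}$ with probability $s/(s+1-\lambda)$, by independence and exact uniformity of the nulls and their independence from the nonnulls. A standard computation then shows that $M_k=V_s(k)/(1+V_\lambda(k))$ is a backward supermartingale. Optional stopping gives $\E M_{K^*}\le\E M_m\le s/(1-\lambda)$, where the last step is the binomial identity from \citet{BarberCandes2015}.

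\textbf{Conclusion and main obstacle.} Combining the two parts gives $\sum_{i\in I_0}\E b_i\le\alpha$. Setting $\eta_i=\E(b_i\mid p_{-i})$ then satisfies \eqref{eq:ordbudget} by the sufficient-source remark preceding \cref{thm:ordresidual}, since $d_i(B)\le b_i$. I expect the main obstacle to be the careful verification that $K^*$, including the extension branch in \eqref{eq:ordkstar}, is a stopping time for this masked reverse filtration, and that the supermartingale step survives p-values falling in $(s,\lambda]$. If the extension branch causes trouble, I would fall back on the fact that $K^*$ is the maximum of two such stopping times, and a maximum of reverse stopping times is again a reverse stopping time.
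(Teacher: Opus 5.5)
Your route is the paper's: pointwise domination through $K^*=K$ on a nonempty baseline, then a reverse-supermartingale bound for $V_s(K^*)/\{1+V_\lambda(K^*)\}$. In that bound the early-count crossing of \eqref{eq:ordkstar} is simply another reverse stopping rule, and nonnull upper-tail counts are discarded. Your domination half is correct, and more explicit than the paper's. The gap is in the filtration, which carries the entire expectation bound.

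\textbf{The filtration hides the counts you need.} Your $\mathcal F_k$ reveals only the sum $V_s(k)+V_\lambda(k)$. That sum is already determined by the middle-category indicators: it equals $|I_0\cap[k]|$ minus the number of null $j\le k$ with $s<p_j\le\lambda$. So $\mathcal F_k$ says nothing about how the informative nulls up to $k$ split between the two tails. Consequently $R_s(k)$ and $A_\lambda(k)$, which involve $V_s(k)$ and $V_\lambda(k)$ separately, are not $\mathcal F_k$-measurable, and neither is $M_k$. Both claims, that $K^*$ is a reverse stopping time and that $M_k$ is a backward supermartingale, therefore fail as stated. Your description of the conditional law is also telling. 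Given your $\mathcal F_k$, the informative nulls are indeed i.i.d.\ Bernoulli with success probability $\rho=s/(s+1-\lambda)$, but only because the split is hidden. So that claim and the adaptedness claim cannot both hold.

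\textbf{The repair.} Use the usual masking: add $V_s(k)$, hence both counts, to $\mathcal F_k$, hiding only \emph{which} informative nulls lie in which tail. Then $\mathcal F_k\subseteq\mathcal F_{k-1}$, and $R_s(k)$, $A_\lambda(k)$ and $M_k$ are adapted. Your downward-scan argument then shows $K^*$ is a stopping time, so the extension branch causes no trouble.

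\textbf{The one-step law changes.} It is no longer Bernoulli$(\rho)$. Conditionally on the enlarged $\mathcal F_k$, the tail labels are exchangeable given their sum. So if $k$ is an informative null, it lies in the lower tail with probability $V_s(k)/\{V_s(k)+V_\lambda(k)\}$. This gives $\E[M_{k-1}\mid\mathcal F_k]=M_k$ when $V_\lambda(k)>0$, and $M_{k-1}\le M_k$ pathwise otherwise.

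\textbf{Where $\rho$ enters.} It is used only at the terminal time. Conditionally on the middle indicators, $V_s(m)\sim\mathrm{Bin}(N,\rho)$ with $N=V_s(m)+V_\lambda(m)$. Hence $\E M_m=\{\rho/(1-\rho)\}\E(1-\rho^N)\le s/(1-\lambda)$, which is the bound you need.
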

\begin{proof}
A nonempty accepted prefix has $R_s(K)\ge s/\{\alpha(1-\lambda)\}$, so $K^*=K$ when $B$ is nonempty. The baseline inequality then gives $b_i\ge1/|B|$ on $B$; otherwise domination is automatic. The reverse null-ratio stopping argument for Adaptive SeqStep gives
\[
 \E\frac{V_s(K^*)}{1+V^-_\lambda(K^*)}\le\frac{s}{1-\lambda},
\]
where $V_s$ and $V^-_\lambda$ count null p-values below $s$ and above $\lambda$. The additional early-count crossing in \eqref{eq:ordkstar} is a reverse stopping rule for the same filtration. Conditional on the nulls in the two informative categories, the favorable probability is $s/(s+1-\lambda)$; the middle category contributes neither count. Removing nonnull upper-tail counts only increases the ratio. Summing \eqref{eq:ordasebudget} and applying the displayed bound proves the result. This is the reverse-supermartingale argument of \citet{LeiFithian2016}, with the early stop analogous to the budget stop in \citet{LuoFithianLei2024}.
\end{proof}

For each $i$, replacing $p_i$ by a conditional null uniform changes the baseline and budget only across $[0,s]$, $(s,\lambda]$, and $(\lambda,1]$. Each baseline evaluation is a linear prefix scan. Two or three category evaluations per coordinate therefore compute $\eta_i$, $\rho_i$, $M_i$, $H_i(p_i)$, and $\kappa_i$ exactly in $O(m^2)$ time. The same-cutoff scan is linear and the rectangle scan quadratic. No numerical quadrature or intersection enumeration is required. We call these the \emph{same-cutoff} and \emph{rectangle} ordered anchors, respectively.

\subsection{SeqStep+ and bounded accumulation tests}\label{sec:orderedprefix}
For $0<c<1$, SeqStep+ rejects $B=[K]$, where
\[
 K=\max\left(\left\{k:\frac{1+A_c(k)}{k+1}\le\alpha(1-c)\right\}\cup\{0\}\right).
\]
Extend to $K^*=\max\{K,\max(\{k:k+1<1/[\alpha(1-c)]\}\cup\{0\})\}$. For $K^*>0$, valid budgets are
\begin{equation}\label{eq:ordseqbudget}
 b_i=\alpha(1-c)\left(1+\frac1{K^*}\right)
       \frac{\ind\{i\le K^*\}}{1+A_c(K^*)};
\end{equation}
use zero if $K^*=0$. On a nonempty baseline, $K^*=K$ and the threshold inequality gives domination of $1/K$. Its null sum is bounded by
$\alpha(1-c)\{1+V(K^*)\}/\{1+A_c^0(K^*)\}$, whose expectation is at most $\alpha$ by the Bernoulli reverse-supermartingale bound \citep{BarberCandes2015}. The two-category conditional calculation and \eqref{eq:ordprefix} yield a quadratic, complete-prefix-preserving anchor.

A first conditional fallback calibration may instead be used as an intermediate baseline: choose a cutoff $t_i(p_{-i})$ such that
\[
 \E\left[
 \frac{\ind\{i\in B\}\vee\ind\{U_i\le t_i\}}{|B\cup\{i\}|}
 \ \middle|\ p_{-i}\right]\le\eta_i.
\]
Take its largest prefix retaining $B$, denoted $C$, and recycle the actual conditional contribution $d_i(C)$ in \cref{thm:ordresidual}. Prefix restriction is justified by the fallback certificate: a retained new coordinate has denominator at least $|B\cup\{i\}|$ and a baseline coordinate at least $|B|$. Arbitrary filtering would not have this guarantee. Along a tested coordinate, the other fallback decisions depend only on its category and its own decision changes at its fallback cutoff. At most four intervals suffice for each integral, giving an $O(m^3)$ direct implementation of the recycled prefix, followed by a linear scan.

\paragraph{Bounded accumulation scores and scope.}
More generally, let $h:[0,1]\to[0,C]$ satisfy $\int_0^1h(u)\,du=1$. The corrected accumulation rule rejects $[K]$ with
$K=\max(\{k:[C+\sum_{i\le k}h(p_i)]/(k+1)\le\alpha\}\cup\{0\})$.
Extend it to $K^*=\max\{K,\max(\{k:k+1<C/\alpha\}\cup\{0\})\}$. Then
\begin{equation}\label{eq:ordaccum}
 b_i=\alpha\left(1+\frac1{K^*}\right)
 \frac{\ind\{i\le K^*\}}{C+\sum_{j\le K^*}h(p_j)}
\end{equation}
is a valid budget, with the zero convention. Its null sum is bounded by
$\alpha\{1+V(K^*)\}/\{C+\sum_{j\le K^*,j\in I_0}h(p_j)\}$; Lemma~3 of \citet{LiBarber2017} bounds the expectation of the ratio by one. The forced early segment is a permitted deterministic-threshold last-crossing rule: before its endpoint use a threshold at least $kC$, then $\alpha(k+1)-C\ge0$. For $h$ constant on $J$ intervals, exact categorical integration costs $O(Jm^2)$; no separate power study of general $h$ is claimed here.

The finite-$C$ assumption is essential; unbounded accumulation scores require a separate budget and integration argument. Raw SeqStep and raw Selective SeqStep have modified-FDR guarantees rather than the ordinary-FDR guarantee of their plus variants. They are not included among the ordinary-FDR anchors evaluated below.

\section{Simulation study}\label{sec:simulations}
We present two experiments for each main family in the order of Sections~\ref{sec:anchor}, \ref{sec:twobh}, \ref{sec:by}, \ref{sec:dbh}, and \ref{sec:ordered}. The main study contains 1,122,000 data sets across 227 settings. Appendix~\ref{sec:sim-negative} retains 83,000 negative-dependence data sets (30 settings), and Appendix~\ref{app:shiftsimulations} retains 73,000 shifted-method data sets (37 settings), for an overall FDR-study total of 1,278,000 data sets across 294 settings (\cref{tab:simulationdesign}). The ordered study uses $\alpha=0.10$ and $m=40$; all other experiments use $\alpha=0.05$. BH, 2BH, and BY use $m=100$; dBH uses $m=40$ and $20$. The appendices use $m=100$ and $5000$ for negative dependence and $m=40$ for shifted methods. The ordered evaluations were completed separately using prespecified scores and are integrated here without rerunning or changing any results.

Procedures share the same data within each setting. Supports, signal signs, pilot levels and score weights are fixed before evaluation; there is no fitting to these evaluation data. We estimate FDR by the mean of replicate FDPs and power by the mean fraction of nonnulls rejected. Confidence intervals are pointwise 95\% Monte Carlo intervals, paired when comparing procedures. Comparisons use a common \emph{nominal} FDR level, not matched realized FDR. Full designs, all settings (including no-gain settings), per-replicate counts and implementation checks are supplied.

\begin{table}[htbp]
\centering\small
\caption{FDR simulation coverage in presentation order. Each main subsection contains two experiments, which may share the same setting grid. Appendix counts are separate; no data set is counted twice. Counts include global-null checks.}\label{tab:simulationdesign}
\begin{tabular}{@{}llrr@{}}
\toprule
Family & Experiments & Settings & Data sets\\
\midrule
Anchored-BH & Non-Gaussian PRDS; Gaussian PRDS &74&395,000\\
Anchored-2BH & Balanced signs; one-direction signals &50&260,000\\
Anchored-BY & Antithetic p-values; arbitrary Gaussian &62&335,000\\
Anchored-dBH & Mixture grid; one weak signal &29&67,000\\
Ordered anchors & Selective methods; complete prefixes &12&65,000\\
\midrule
Main-paper total &&227&1,122,000\\
Anchored-NDBH (appendix) & Antithetic; negative Gaussian blocks &30&83,000\\
Shifted anchors (appendix) & Known covariance; Wishart information &37&73,000\\
\midrule
FDR-study total & &294&1,278,000\\
\bottomrule
\end{tabular}
\end{table}

\subsection{Anchored-BH: two PRDS experiments}\label{sec:sim-bh}
\paragraph{Experiment 1: genuinely non-Gaussian p-values.}
Null p-values are uniform; nonnull p-values have distribution $\Beta(\kappa,1)$, with $\kappa\in\{0.05,0.10,0.20,0.35\}$ and nonnull fractions $0.1,0.5,0.9$. Dependence is either independence or ten independent comonotonic blocks of size ten. Within each block, the p-values are increasing transforms of one shared uniform, giving PRDS without Gaussianity. There are 5,000 repetitions in each nonnull setting and 10,000 for each of two global nulls: 26 settings and 140,000 data sets.

We compare BH, canonical closed BH and several Anchored-BH scores. Canonical closed BH is the fixed-$\alpha$ horizon rule of \citet[Section~5]{Goeman2026}, without additional raw-p-value censoring. Anchors use the frozen hinges in \cref{tab:weights}, with and without cap calibration; a BH pilot at $\beta=0.10$; its adaptive hinge; and an equal mixture of pilot indicators at $\beta\in\{0.10,0.20,0.50\}$. A fixed staircase using the same profile weights is also included in the supplement. These are prespecified choices, not optimized power claims.

For independent $\Beta(0.1,1)$ signals with 90\% nonnulls, powers are 70.91\% for BH, 74.22\% for closed BH, 73.25\% for the pilot anchor and 73.36\% for the adaptive staircase. Thus the adaptive staircase gains over BH but loses 0.86 percentage points to closed BH (interval $-0.92$ to $-0.80$). At 10\% nonnulls, the anchors make essentially no additional discoveries. This experiment verifies that the construction is not specific to Gaussian p-values, but does not establish general power superiority over closed BH.

\paragraph{Experiment 2: signal strength and positive dependence.}
Here $p_i=\Phi(-Z_i)$, positive nonnull means are $\mu\in\{2,2.5,3,3.5,4\}$, and $\Sigma$ is either $I$ or equicorrelated with correlation $0.5$ or $0.9$. These one-sided p-values satisfy the Gaussian PRDS sufficient condition. Crossing the same three nonnull fractions, with 5,000 repetitions per setting and three 10,000-repetition global nulls, gives 48 settings and 255,000 data sets.

\Cref{tab:prds} separates sparse, intermediate and dense signals. At 90\% nonnulls and independent mean 3, the adaptive staircase has power 96.64\%, versus 93.96\% for closed BH and 89.53\% for BH. Its paired gains are 2.68 points over closed BH (2.63--2.72) and 7.11 points over BH (7.04--7.17). At dense mean 2, however, it loses 4.25 points to closed BH ($-4.34$ to $-4.16$). \Cref{fig:positive} shows the independent strength grid. At dense mean 3 with equicorrelation $0.5$, powers of BH, closed BH and the adaptive staircase are 86.50\%, 90.03\% and 89.77\%; with correlation $0.9$, they are 84.82\%, 85.98\% and 85.59\%.

\begin{table}[htbp]
\centering\small
\caption{Anchored-BH: power (\%) under independence, with one-sided Gaussian mean 3. All methods have their stated PRDS guarantee. The fixed hinge is cap-calibrated.}\label{tab:prds}
\begin{tabular}{@{}lrrrrrr@{}}
\toprule
Nonnull & BH & Closed BH & Fixed hinge & Pilot BH & Adaptive hinge & Adaptive stair\\
\midrule
10\% &61.53&61.68&61.53&61.53&61.53&61.53\\
50\% &83.40&86.14&83.40&83.40&83.40&83.40\\
90\% &89.53&93.96&89.60&94.88&93.65&96.64\\
\bottomrule
\end{tabular}
\end{table}

At the independent dense mean-3 setting, estimated FDRs are 0.49\% for BH, 0.87\% for closed BH, 0.99\% for the pilot anchor and 1.57\% for the adaptive staircase. Additional power comes with additional false discoveries, though all these estimates are below 5\%. Under the independent global null, BH and its anchors have the same rejection event and estimated FDR 4.88\%. The largest incremental power gain from cap calibration over the identical unboosted hinge anchor is only 0.014 percentage points across the two PRDS experiments; the multi-point gains come primarily from the adaptive score design.

\begin{figure}[htbp]
\centering\includegraphics[width=.6\textwidth]{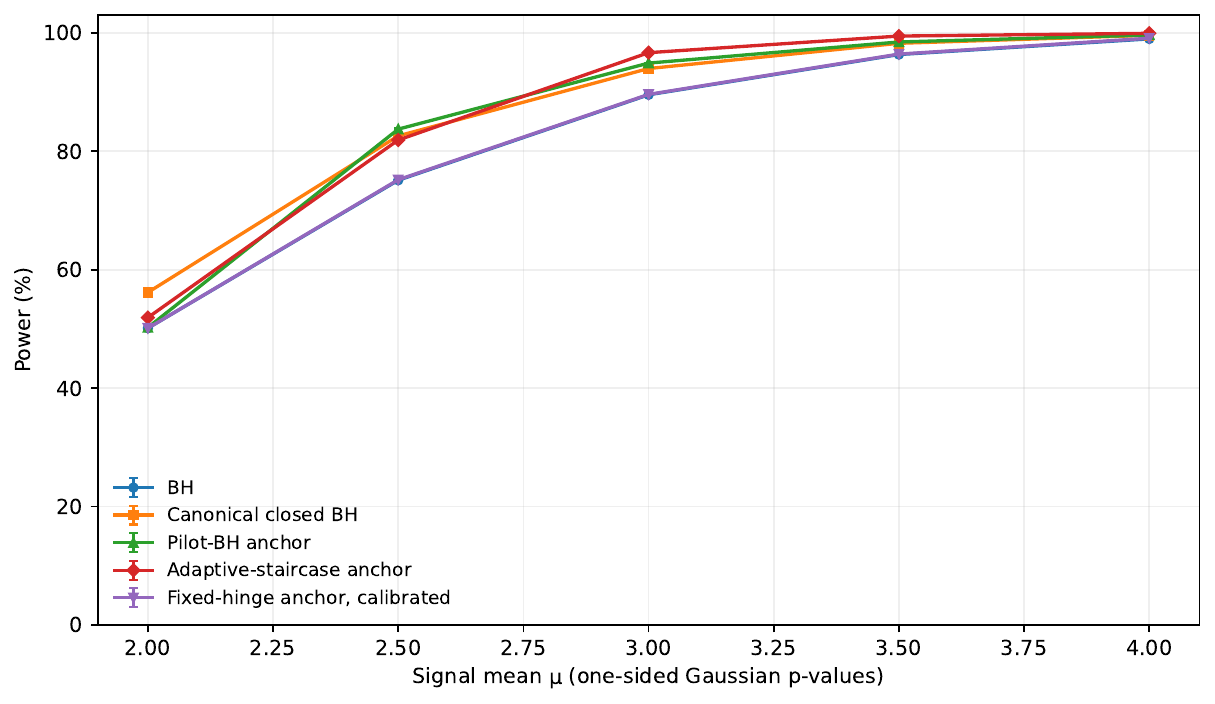}
\caption{Anchored-BH, independent one-sided Gaussian p-values, 90\% nonnulls. Error bars are pointwise 95\% Monte Carlo intervals. The anchor retains BH, but its ordering relative to canonical closed BH changes with signal strength.}\label{fig:positive}
\end{figure}
\FloatBarrier

\subsection{Anchored-2BH: balanced versus one-direction signals}\label{sec:sim-2bh}
\paragraph{Experiment 1: recover part of the cost of splitting the tails.}
We generate $Z\sim N_m(\mu,\Sigma)$ with $\Sigma=I$ or equicorrelation $0.5$, nonnull fractions $0.1,0.5,0.9$, and balanced positive and negative signal means of magnitude $2.5,3,3.5,4$. All these covariance matrices meet \cref{cor:twotail}. The baseline is the union of BH at $\alpha/2$ on each length-$m$ one-sided p-value vector. Anchored-2BH retains that union and adds evidence from the same frozen even hinge profiles as before, with or without independent-Gaussian cap calibration. No covariance matrix is supplied to the method.

An additional comparator is the union of canonical closed BH at $\alpha/2$ in each tail. Each tail has its PRDS guarantee, and adding the two tailwise FDR bounds validates their union at $\alpha$. Ordinary pooled two-sided BH is displayed only as a diagnostic outside independence. The new baseline need not be an absolute-statistic prefix: the implementation therefore checks the full smallest-sum criterion, rather than applying the ordered-baseline quadratic shortcut. There are 26 balanced-sign settings and 140,000 data sets, including two global nulls.

For independent magnitude 3.5 and 90\% nonnulls, powers are 87.98\% for tail-split BH, 89.64\% for tail-split closed BH, and 92.25\% for Anchored-2BH. The paired gains are 4.27 points over its baseline (4.21--4.33) and 2.60 points over the closed comparator (2.55--2.65). Ordinary pooled BH has power 92.89\% in this independent setting. Thus the anchor recovers much, but not all, of the loss from splitting the evidence across signs (\cref{fig:twobh}). The same balanced setting at correlation $0.5$ gives 85.65\%, 86.87\% and 89.88\%, respectively (\cref{tab:twobh}). Sparse and intermediate settings mostly retain the baseline.

\paragraph{Experiment 2: concentrate the signals in one tail.}
We repeat the nonnull grid with all nonnull means positive: 24 settings and 120,000 data sets. Tail-split BH now pools nearly all signal evidence in a single tail. The fixed even-score anchor adds much less, and the two-tail closed-BH comparator can be substantially better. At independent magnitude 3.5 and 90\% nonnulls, tail-split BH has power 92.83\%, Anchored-2BH 92.87\%, and tail-split closed BH 95.79\%. The gain in the balanced experiment is therefore not a generic improvement of the same size for every signal-sign arrangement.

\begin{table}[htbp]
\centering\small
\caption{Anchored-2BH: power (\%) with 90\% nonnulls of magnitude 3.5. All three displayed procedures are valid under both covariance choices. Anchored-2BH uses calibrated frozen hinges.}\label{tab:twobh}
\begin{tabular}{@{}llrrr@{}}
\toprule
Signal signs & Correlation & Tail-split BH & Tail-split closed BH & Anchored-2BH\\
\midrule
Balanced &0&87.98&89.64&92.25\\
Balanced &0.5&85.65&86.87&89.88\\
Positive &0&92.83&95.79&92.87\\
Positive &0.5&91.33&93.77&91.38\\
\bottomrule
\end{tabular}
\end{table}

In the balanced independent magnitude-3.5 setting, estimated FDRs are 0.25\% for the baseline, 0.31\% for the tail-split closed comparator and 0.46\% for Anchored-2BH. Under the independent global null, all three have estimated FDR 4.91\%; at correlation $0.5$, the estimate is 3.54\%. Baseline containment and calibrated-versus-unboosted containment were checked coordinatewise on every simulated data set.

\begin{figure}[htbp]
\centering\includegraphics[width=.6\textwidth]{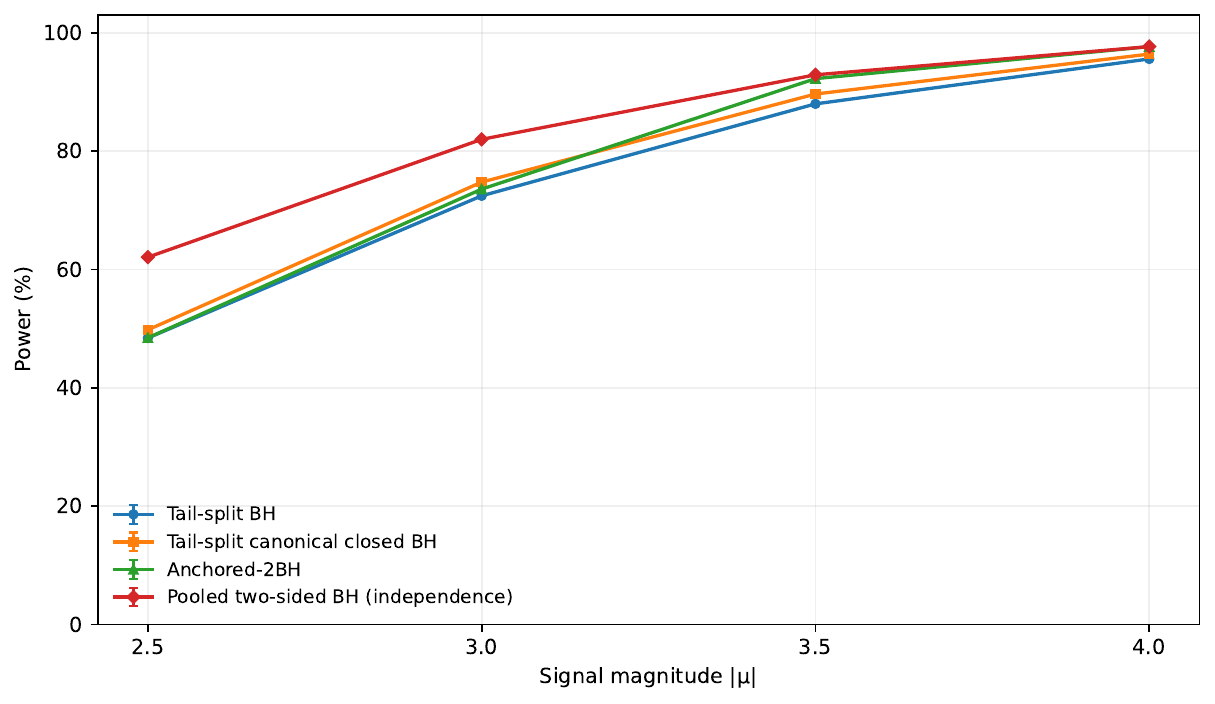}
\caption{Anchored-2BH: independent Gaussian tests with 90\% nonnulls and balanced signs. The ordinary pooled two-sided BH curve has its independence guarantee here; it is not used as a valid benchmark under unrestricted dependent two-sided p-values.}\label{fig:twobh}
\end{figure}
\FloatBarrier

\subsection{Anchored-BY: arbitrary dependence, with and without Gaussianity}\label{sec:sim-by}
\paragraph{Experiment 1: non-Gaussian antithetic p-values.}
For each independent pair draw $U_j\sim\Unif(0,1)$ and set its underlying uniforms to $(U_j,1-U_j)$. Leave null coordinates unchanged and transform nonnull coordinates by $u\mapsto u^{1/\kappa}$, giving $\Beta(\kappa,1)$ alternatives. We use $\kappa\in\{0.05,0.10,0.20\}$ and nonnull fractions $0.1,0.5,0.9$. The antithetic construction is not supplied with a PRDS or Gaussian guarantee. Hence Anchored-BY uses \emph{unboosted} fixed p-to-e hinge profiles: the normal-quantile transform in a marginal calibrator is not a Gaussian assumption on the joint data. There are nine 5,000-repetition nonnull settings and one 10,000-repetition global null, totaling 55,000 data sets.

Comparators are BY and the Section~5.3 improvement of closed BY in \citet{SunWang2026}. At 90\% nonnulls and $\kappa=0.05$, powers are 78.05\% for BY, 80.53\% for improved closed BY and 83.01\% for Anchored-BY. The paired gain over improved closed BY is 2.49 points (2.43--2.55). The ordering reverses for weaker alternatives: at $\kappa=0.10$, the anchor has power 62.87\% versus 63.93\% for improved closed BY; at $\kappa=0.20$, the comparison is 30.70\% versus 34.94\%. The unboosted anchor and BY coincide at all three sparse settings. Under the global null their estimated FDR is 0.84\%; the largest FDR estimate across nonnull settings is below 1\% for every displayed procedure.

\paragraph{Experiment 2: arbitrary-sign Gaussian covariances.}
We retain the two-sided Gaussian suite with balanced signals of magnitude $2.5,3,3.5,4$, nonnull fractions $0.1,0.5,0.9$, and four covariance choices: independence, equicorrelation $0.5$, AR(1) with parameter $-0.6$, and $0.8ss^\top+0.2I$ with alternating signs in $s$. The 48 nonnull settings use 5,000 repetitions each and four global nulls use 10,000, giving 280,000 data sets. BY, improved closed BY and the unboosted hinge anchor remain valid; the Gaussian capped-sum lemma additionally permits the calibrated hinge anchor. Ordinary BH and canonical closed BH on two-sided p-values are diagnostic outside independence.

At independent magnitude 3.5 and 90\% nonnulls, powers are 78.68\% for BY, 84.61\% for improved closed BY and 89.99\% for the calibrated anchor. The paired gain over improved closed BY is 5.38 points (5.30--5.47). At magnitude 2.5, however, improved closed BY has power 42.69\% versus 32.56\% for the anchor (\cref{fig:gaussian}). The dense magnitude-3.5 comparison also favors the anchor under negative AR(1), 90.03\% versus 84.73\%, and under the signed factor, 86.32\% versus 84.31\%. Sparse settings mostly retain the baseline. Anchoring does not remove the cost of BY's harmonic correction.

In the independent Gaussian global-null setting, BY and its anchors have estimated FDR 0.99\%. The largest incremental gain from calibration over the same unboosted anchor is only 0.049 percentage points in this Gaussian suite. Thus the experiments show configuration-dependent gains from score design and anchoring, not large realized gains from the cap factor itself.

\begin{figure}[htbp]
\centering\includegraphics[width=.6\textwidth]{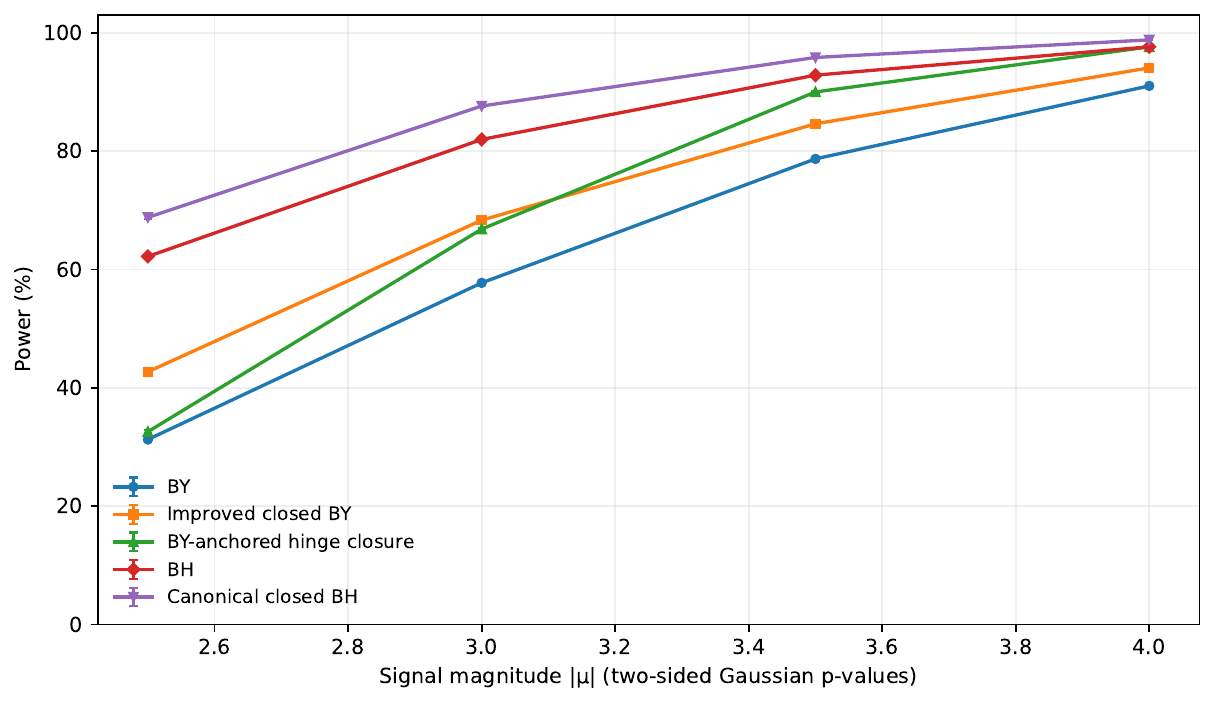}
\caption{Anchored-BY: independent two-sided Gaussian p-values, 90\% nonnulls. All plotted methods are valid here. 
}\label{fig:gaussian}
\end{figure}
\FloatBarrier

\FloatBarrier

\subsection{Anchored-dBH: dense mixtures and a weak signal}\label{sec:sim-dbh}
These experiments use the completed first-round dBH$_{0.9}$ procedure of \citet{FithianLei2022}, including any required randomized pruning. The original BH q-value is calibrated separately along each Gaussian conditional path using \eqref{eq:dbhintegral}. The covariance matrix is supplied to dBH; the additional hinge scores require no covariance estimates or new conditional integrals. Both score choices were fixed before evaluation: the existing four-profile mixture from Appendix~\ref{app:calibration}, with boundaries at fractions $0.1,0.3,0.6$ of $m$, and the single normalized hinge $h_{1.5}$. They are evaluated as separate procedures, never selected by taking the better observed result. Both use unboosted means.

\paragraph{Experiment 1: signal fraction and arbitrary-sign correlation.}
Take $m=40$ and two-sided Gaussian p-values. The covariance models are independent blocks of size ten with correlation $0.5$; their sign-conjugated version with alternating signs; and AR(1) with parameter $-0.5$. Nonnull fractions are $0.2$ and $0.9$, with balanced means of magnitude $2.5,3,3.5$. There are 2,000 repetitions in each nonnull setting and 5,000 for each covariance's global null: 21 settings and 51,000 data sets. Ordinary two-sided BH is only a diagnostic comparator here; dBH and its anchors have the stated known-covariance guarantee. 

\begin{table}[htbp]
\centering\small
\caption{Anchored-dBH: power (percent) with 90\% nonnulls, $m=40$, and known covariance. Each row uses 2,000 repetitions. ``Mixture'' and ``hinge'' are separate fixed auxiliary scores.}\label{tab:dbhpower}
\begin{tabular}{@{}lrrrr@{}}
\toprule
Covariance & $|\mu|$ & dBH$_{0.9}$ & Mixture anchor & $h_{1.5}$ anchor\\
\midrule
Positive blocks & 2.5 & 61.75 & 61.75 & 63.28 \\
Positive blocks & 3 & 82.00 & 82.01 & 86.54 \\
Positive blocks & 3.5 & 92.77 & 92.80 & 96.11 \\
Signed blocks & 2.5 & 61.56 & 61.56 & 63.13 \\
Signed blocks & 3 & 81.84 & 81.85 & 86.41 \\
Signed blocks & 3.5 & 92.60 & 92.62 & 95.84 \\
Negative AR(1) & 2.5 & 61.59 & 61.59 & 63.29 \\
Negative AR(1) & 3 & 81.65 & 81.66 & 86.15 \\
Negative AR(1) & 3.5 & 92.86 & 92.91 & 96.18 \\
\bottomrule
\end{tabular}
\end{table}

The single lower-threshold hinge yields visible gains in the dense settings, whereas the previous mixture typically adds little. At mean magnitude $3$ in positive blocks, its power gain is 4.54 percentage points, with paired 95\% Monte Carlo interval 4.39--4.69. Estimated FDR rises from 0.501\% to 0.789\%, so these are common-nominal-level comparisons, not matched-realized-FDR comparisons. Under the three global nulls the largest estimate for dBH and either anchor is 4.98\%, with ordinary Monte Carlo uncertainty. In all 18,000 sparse-mixture data sets, both anchored outputs are exactly the dBH output. Thus the inexpensive augmentation helps when the baseline discovers many signals, not uniformly across sparsity levels.

\paragraph{Experiment 2: one weaker hypothesis among strong discoveries.}
Take $m=20$ with 18 means of magnitude six and balanced signs, one weaker mean in $\{1.5,2,2.5,3\}$, and one true null. Use positive and signed blocks as above, with 2,000 repetitions per setting: eight settings and 16,000 data sets. Report rejection probability of the weaker hypothesis, not only average power over nearly certain strong discoveries. At weak mean $2$ in positive blocks, dBH rejects it with probability 53.30\%; the frozen-mixture anchor gives 61.35\% and the $h_{1.5}$ anchor 66.25\%. The latter paired gain is 12.95 points (11.48--14.42). The corresponding signed-block probabilities are 50.15\%, 58.40\%, and 63.25\%. Across this heterogeneous grid, the largest estimated full-family FDR for the two anchors is 0.624\%.

\begin{figure}[htbp]
\centering\includegraphics[width=.7\linewidth]{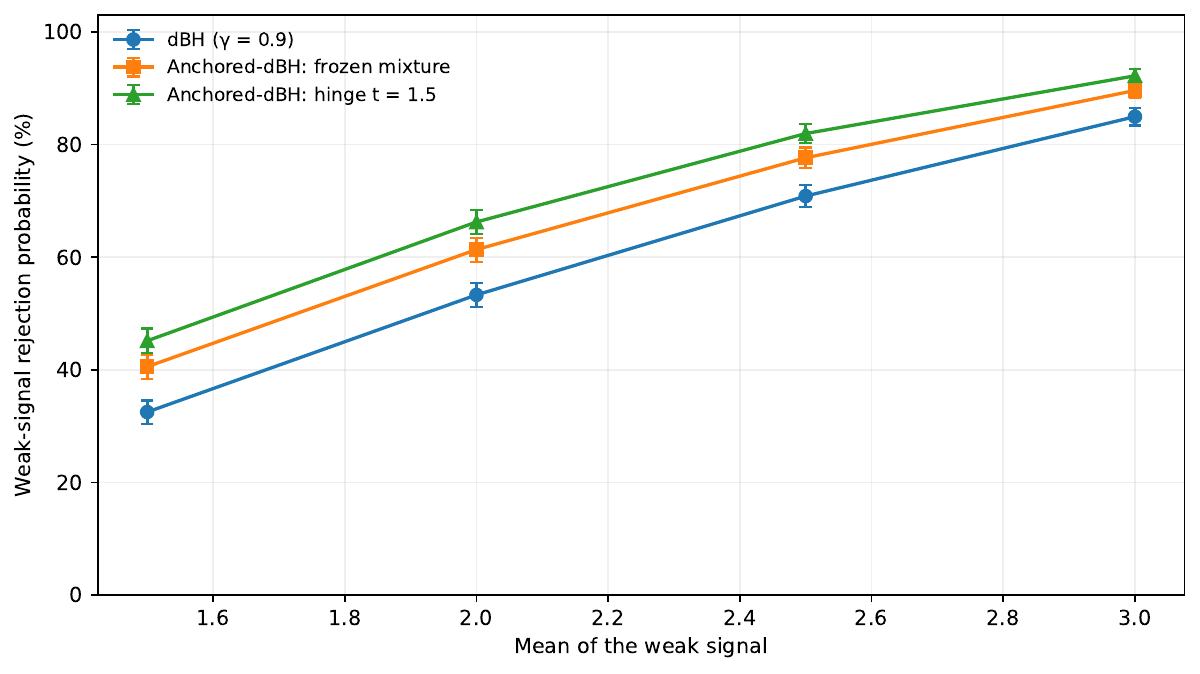}
\caption{Anchored-dBH: rejection probability of one weak signal among 18 strong signals and one true null, with positive correlation blocks ($m=20$). Error bars are pointwise 95\% Monte Carlo intervals; each setting has 2,000 repetitions. The signed-block experiment is reported in full in the supplement.}\label{fig:dbhweak}
\end{figure}

\paragraph{Computation and validation.}
All 67,000 data sets use the exact three-list search over certified supersets, rather than a candidate grid. Randomized pruning was needed on six data sets and was performed before augmentation; every realized baseline set was retained. During the simulation runs, the median dBH baseline cost per data set was 5.44--6.17 milliseconds for $m=40$ block models and 20.36--21.34 milliseconds for negative AR(1). The \emph{two} anchor calls together took median 0.081--0.115 milliseconds per setting. These are measured application times after compilation, not worst-case bounds; baseline cost includes all conditional integrations. The source includes the timing arrays.

The finite-knot conditional integral was compared with independently recomputing BH on every interval in 96 cases; the largest absolute difference was $1.7\times10^{-16}$. Another 1,700 coordinate checks verified the independent, $\gamma=1$ reduction to BH. The arbitrary-baseline closure was checked on 210 exhaustive small problems, involving 36,545 subset checks. These validate the implementation, not the probability assumptions. Normal probabilities, roots and confidence bounds are evaluated in double precision with conservative decision margins; no interval-arithmetic certificate is claimed. We have not compared these anchors with recursive dBH$^{(2)}$, and no dominance over that different refinement is inferred.

\subsection{Ordered testing: selective sets and complete prefixes}\label{sec:sim-ordered}
The two experiments below share one fixed grid of 65,000 data sets across 12 settings; the data are not counted twice. We use $m=40$, $\alpha=0.10$, and independent one-sided Gaussian p-values. In the perfect ordering, the first $1$, $3$, or $12$ coordinates are nonnull with mean $2$, $2.5$, or $3$ (nine settings, 5,000 repetitions each). Two further settings put 12 nonnulls at positions $1,4,\ldots,34$, with means $2.5$ or $3$ (5,000 each). A global null has 10,000 repetitions. Baseline parameters and residual-band scores were fixed before evaluation. No method uses the signal locations or the means; the order is part of the testing problem. All conditional calculations use the exact uniform null law and finite categories from Section~\ref{sec:ordered}.

\paragraph{Experiment 1: Selective SeqStep+ and Adaptive SeqStep.}
We compare Selective SeqStep+ at $s=0.1$ and $0.5$, Adaptive SeqStep at $(s,\lambda)=(0.1,0.5)$, their same-cutoff and rectangle anchors. A same-cutoff anchor preserves the original eligible-prefix structure. The rectangle variant allows a different p-value cutoff through certification, and its extra flexibility is reported separately. Conditional fallback refinements are included in the complete results. These ordered comparisons assess each baseline against its own extensions, not relative to the wider class of ordered-testing procedures.

\begin{table}[htbp]
\centering\small
\caption{Ordered selective procedures: power (\%) at mean $2.5$, with the signals first. There are $m=40$ tests, $\alpha=0.10$, and 5,000 repetitions per column. Same-cutoff and rectangle closures are different search families.}\label{tab:ordered-selective}
\begin{tabular}{@{}lrrr@{}}
\toprule
Procedure & 1 signal & 3 signals & 12 signals\\
\midrule
Selective SeqStep+, $s=0.1$ &10.42&77.49&88.95\\
Same-cutoff residual band &62.46&82.69&88.95\\
Rectangle residual band &62.82&84.10&89.62\\
\addlinespace
Adaptive SeqStep, $(s,\lambda)=(0.1,0.5)$ &15.46&86.99&88.96\\
Same-cutoff residual band &62.04&87.85&88.96\\
Rectangle residual band &62.10&89.77&89.44\\
\bottomrule
\end{tabular}
\end{table}

For one leading signal of mean $2.5$, the same-cutoff gain is 52.04 percentage points over Selective SeqStep+ (95\% interval 50.66--53.42) and 46.58 points over Adaptive SeqStep (45.20--47.96). The respective estimated FDR changes are 5.77\% to 6.47\% and 8.63\% to 9.25\%, at a common nominal 10\%. These are gains in power even when the adaptive baseline's realized FDR is already near the target (\cref{tab:ordered-selective,fig:ordered-single}). With 12 leading signals, both same-cutoff anchors add essentially nothing at $s=0.1$; rectangle gains are modest.

The gain also depends on gaps in the ordering. At mean $2.5$ with 12 signals in positions $1,4,\ldots,34$, Adaptive SeqStep has power 14.12\%, its same-cutoff anchor 17.92\%, and its rectangle anchor 17.97\%. The same-cutoff paired gain is 3.80 points (3.69--3.92). For the more conservative Selective SeqStep+ cutoff $s=0.5$, the corresponding powers are 0.62\%, 5.85\%, and 22.61\%. The last gain partly uses the freedom to certify a smaller selective cutoff; it is not an improvement at the original cutoff alone.

\begin{figure}[htbp]
\centering\includegraphics[width=.6\textwidth]{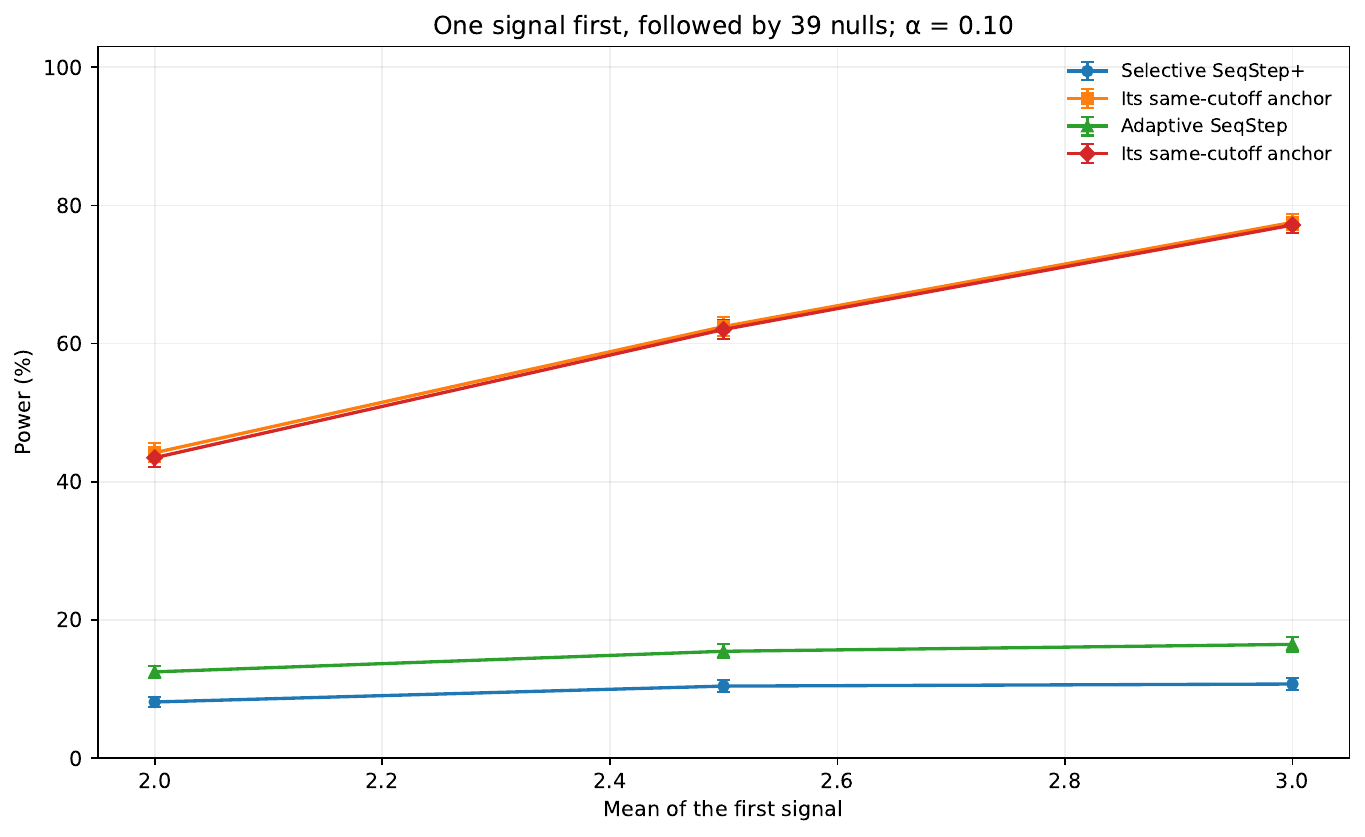}
\caption{One leading signal followed by 39 nulls, $\alpha=0.10$. Both same-cutoff residual-band anchors improve their baseline's minimum-discovery behavior. Error bars are pointwise 95\% Monte Carlo intervals from 5,000 repetitions.}\label{fig:ordered-single}
\end{figure}
\FloatBarrier

\paragraph{Experiment 2: SeqStep+ and complete-prefix refinements.}
At $\alpha=0.1$ and $c=0.5$, SeqStep+ needs at least 19 rejections even with no upper-tail p-values. We compare it with the raw residual prefix closure, a first calibrated prefix, its recycled residual prefix, and a recursive calibrated-prefix comparator, as defined in Section~\ref{sec:orderedprefix}. Every output is a complete prefix containing its own baseline. The first calibrated prefix and raw residual prefix coincide throughout the displayed settings.

\begin{table}[htbp]
\centering\small
\caption{Complete-prefix procedures: power (\%) at mean $2.5$, using the same ordered data as \cref{tab:ordered-selective}. Each refinement is compared with the prefix procedure it retains.}\label{tab:ordered-prefix}
\begin{tabular}{@{}lrrr@{}}
\toprule
Procedure & 1 signal & 3 signals & 12 signals\\
\midrule
SeqStep+, $c=0.5$ &0.00&0.00&0.62\\
Raw residual / first calibrated prefix &49.12&29.05&12.84\\
Recycled residual prefix &49.16&36.33&22.84\\
Recursive calibrated prefix &49.16&36.36&23.11\\
\bottomrule
\end{tabular}
\end{table}

For three leading signals, recycling increases power from 29.05\% to 36.33\%, closely tracking the recursive comparator at 36.36\%. With gaps at mean $2.5$, raw and recycled prefix powers are about 4.32\% and 4.33\%; recycling adds little there. Thus the ordered experiments include genuine improvements over the original and first-calibrated baselines, but also settings with no additions and little benefit from further recycling.

\FloatBarrier

\section{Discussion}
Anchoring uniformly improves many/most existing FDR procedures by exploiting looseness in their proofs and adapting to the unknown set of true null hypotheses, even under dependence. Intersection-level residuals yield the BH-like constructions; conditional coordinate residuals yield the ordered procedures. Both retain the realized baseline FDR certificate and discoveries. The coefficient $1-|S|/m$ is not universal: every extension needs its own budget proof. Conditional anchoring is related to conditional calibration~\citep{FithianLei2022,LeeRen2024}, whereas donation uses a different allocation mechanism \citep{XuFischerRamdas2026}.


There are some limitations to the improvement of ``naive'' anchoring. For example, if an unboosted averaged-score BH anchor adds $q$ discoveries to $b$ baseline discoveries, with $0<b<m$, then
\(
 q<\frac{b^2}{m-b}\) and so \(b(b+1)\le m\ \Longrightarrow\ \bar B=B.
\)
(To see this, examine e-closure for the set $A^c$. Appendix~\ref{app:sparse} proves the bound and gives a jointly calibrated refinement.) In contrast, ordered residuals instead use information discarded by categorical decisions and can help even from an empty baseline, although their gains depend on the ordering and search structure. We were not interested in designing the optimal improvement (if one exists), just presenting a simple, broadly applicable, and computationally efficient method.

In some cases, for example when the nulls are independent uniform p-values, anchoring can be recursively applied (i.e.\ repeated again on the anchored procedure) to exploit the gap between the targeted and realized thresholds, and this can lead to further gains, likely of lower magnitude. We do not pursue this point here.

\FloatBarrier
\phantomsection
\addcontentsline{toc}{section}{References}
\begingroup
\small
\setlength{\bibsep}{2pt}

\endgroup
\clearpage
\appendix

\section{Exact computational shortcuts}\label{sec:computation}
We first analyze Anchored-BH and Anchored-BY through the common closure \eqref{eq:anchor}, not e-BH applied to the same scores. Suppose evidence is ordered so that $x_{n,1}\ge\cdots\ge x_{n,m}$ for every $n$, and the baseline is the prefix $B=[b]$. Both BH and BY have this form when scores decrease in p-values, including ordinary two-sided BH in \cref{cor:twosided}. The Anchored-2BH baseline in \cref{cor:twotail} need not be a prefix and is excluded from the monotone-scan claim. Break ties by original index. The cases $b=0$ and $b=m$ return the baseline immediately.

\subsection{From all subsets to smallest-sum constraints}
By the cap observation in Section~\ref{sec:cap}, use the uncapped scaled means for certification and put
\[
 a_n=\alpha\left(\frac1n-\frac1m\right)\eta_n.
\]
For a candidate $R$ of size $r\ge b$, define adjusted scores
\begin{equation}\label{eq:cost}
 c_i(n,R)=a_nx_{n,i}+\frac{\ind\{i\in B\}}b-
                              \frac{\ind\{i\in R\}}r.
\end{equation}
Its intersection constraints of size $n$ are equivalent to nonnegativity of the sum of the $n$ smallest adjusted scores. There are only $m-1$ such checks, since the $n=m$ sum equals zero.

\begin{proposition}[Prefix reduction]\label[proposition]{prop:prefix}
If all score profiles share the evidence ordering, the largest certified superset of a prefix baseline can be chosen as a prefix $R=[r]$.
\end{proposition}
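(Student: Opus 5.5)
The plan is an exchange argument. Starting from any certified $R\supseteq B$ with $|R|=r$, I will move its non-baseline members, one at a time, onto indices of larger evidence. Each move will keep the set certified, keep it a superset of $B$, and keep its size $r$. After finitely many moves the set is the prefix $[r]$. Since a largest certified superset exists, this shows one can take it to be a prefix.

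\emph{Step 1 (reformulation).} I use the reduction stated just before the proposition. For a candidate $R$ of size $r\ge b$, the size-$n$ intersection constraints hold if and only if
\[
 \sigma_n(R):=\min_{|S|=n}\sum_{i\in S}c_i(n,R)\ge0,
\]
that is, the sum of the $n$ smallest adjusted scores \eqref{eq:cost} is nonnegative.

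Since $B=[b]$ and $r\ge b$, both $R$ and $[r]$ contain $B$. Hence for $i\in B$ the adjusted score is $a_nx_{n,i}+1/b-1/r$ in both cases. The two sets differ only on non-baseline indices, where $c_i(n,R)=a_nx_{n,i}-\ind\{i\in R\}/r$. So it suffices to show that $\sigma_n$ does not decrease, for every $n$, under the following swap. Take $j\in R\setminus B$ and $k\notin R$ with $k<j$, and replace $R$ by $R'=R\setminus\{j\}\cup\{k\}$. Note that $k\notin R$ forces $k>b$.

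\emph{Step 2 (the swap).} Write $y=a_nx_{n,k}$ and $y'=a_nx_{n,j}$. Because all profiles share the evidence ordering and $a_n\ge0$, we have $y\ge y'$ for every $n$ simultaneously. This is exactly where the common-ordering hypothesis enters. All other adjusted scores are unchanged, and the pair of scores at $(k,j)$ changes from $(y,\ y'-1/r)$ to $(y-1/r,\ y')$.

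The sum of the pair is preserved. Both new values lie in the interval $[y'-1/r,\ y]$, whose endpoints are the old values. So the new pair is a convex combination of the old pair and its transposition, i.e.\ a T-transform in the majorization sense.

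\emph{Step 3 (sum of smallest entries, the main obstacle).} I need the following fact: the map $v\mapsto$ (sum of the $n$ smallest coordinates of $v$) is a minimum of linear functionals, hence concave, and it is symmetric. Therefore it is Schur-concave, and a T-transform cannot decrease it.

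Concretely, write the new vector as $\lambda v+(1-\lambda)v^\tau$, where $v^\tau$ swaps the two coordinates. Then
\[
 \sigma_n(\text{new})\ge\lambda\sigma_n(v)+(1-\lambda)\sigma_n(v^\tau)=\sigma_n(v),
\]
using concavity and then symmetry. Thus $\sigma_n(R')\ge\sigma_n(R)\ge0$ for all $n$, and $R'$ is certified. The tie rule by original index keeps $k<j$ well defined when evidence values are equal.

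\emph{Step 4 (termination).} Each swap strictly decreases $\sum_{i\in R}i$. The process therefore stops, and it can only stop when no pair with $k<j$, $k\notin R$, $j\in R\setminus B$ remains. Together with $B=[b]\subseteq R$, that means $R=[r]$. This prefix is certified, contains $B$, and has the same cardinality as $R$. Applying the argument to a largest certified superset proves \cref{prop:prefix}.
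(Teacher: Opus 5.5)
Your proposal is correct and is essentially the paper's own exchange argument. Swapping a selected nonbaseline coordinate for an omitted one with larger evidence keeps the pair's sum fixed while narrowing its spread, so no sum of the $n$ smallest adjusted scores decreases, simultaneously for every $n$; your Schur-concavity step and the termination argument via $\sum_{i\in R}i$ just spell out two steps the paper states briefly.
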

\begin{proof}
Consider a selected nonbaseline coordinate with smaller score $u$ and an omitted one with larger score $v$. Exchanging their selection changes their two adjusted values from $(a_nu-1/r,a_nv)$ to $(a_nu,a_nv-1/r)$. The sum is fixed, the minimum does not decrease, and the maximum does not increase. Thus no sum of the $n$ smallest adjusted scores decreases. The same exchange works for every $n$. Repeating yields a certified prefix of the same cardinality.
\end{proof}
This optimizes among supersets of $B$, not among all rejection sets that might discard baseline discoveries. It uses ordering, not convexity of the underlying score.

\subsection{A monotone scan and quadratic time}\label{sec:quadratic}
Assume there are $J$ contiguous intersection-size blocks within which $x_{n,i}=x_i$ is fixed, and $a_n$ is nonincreasing in $n$ within each block. Unboosted fixed profiles meet this condition. So do the calibrated profiles constructed in Appendix~\ref{app:calibration}, and the data-adaptive scores of Sections~\ref{sec:pilots} and \ref{sec:adaptivehinges} after they have been evaluated.

Fix $r\ge b$, write $q=r-b$, and let $n\le m-b$. Every baseline adjusted score is at least every nonbaseline adjusted score: the former has a weakly larger raw score and a nonnegative offset. A least-favorable intersection therefore avoids $B$ and contains $j$ of the weakest additions plus $n-j$ of the weakest unselected coordinates. Its score sum is
\begin{equation}\label{eq:Gnj}
 G_n(j)=a_n\left\{\sum_{i=r-j+1}^r x_i+
                    \sum_{i=m-n+j+1}^m x_i\right\}-\frac jr,
\end{equation}
where $\max\{0,n-(m-r)\}\le j\le\min(n,q)$. Its forward difference is
\begin{equation}\label{eq:forward}
 G_n(j+1)-G_n(j)
 =a_n\{x_{r-j}-x_{m-n+j+1}\}-\frac1r.
\end{equation}
This difference is nondecreasing in $j$. For fixed feasible $j$, it is nonincreasing in $n$ within a profile: the nonnegative score gap decreases, as does $a_n$. Consequently a minimizing $j$ never decreases as $n$ increases. Prefix sums and one advancing pointer find all minima in $O(m)$ operations per profile.

For $n>m-b$, a smallest-cost intersection contains all nonbaseline coordinates and the weakest $t=n-(m-b)$ baseline coordinates. Its sum is simply
\begin{equation}\label{eq:largen}
 a_n\sum_{i=b-t+1}^m x_i+\frac tb-\frac{q+t}{r},
\end{equation}
which is available in constant time per $n$ from prefix sums.

\begin{theorem}[Quadratic closure]\label[theorem]{thm:quadratic}
Under the common-order and profile conditions above, a candidate prefix is certified in $O(Jm)$ time. The maximum-cardinality certified prefix superset is found in $O(Jm^2+m\log m)$ time and $O(Jm)$ memory, after evaluating the profiles. For fixed $J$, this is quadratic time and linear memory.
\end{theorem}
\begin{proof}
Use \eqref{eq:forward} to advance the minimizing pointer through at most $m$ positions in each profile block, then use \eqref{eq:largen}. Check candidate sizes downward until the first certified prefix is found. There are at most $m-b$ candidates. \Cref{prop:prefix} proves exactness of the restriction. The sort costs $O(m\log m)$.
\end{proof}

Two qualifications prevent overstating this shortcut. First, feasibility need not be monotone in $r$, so generic binary search or stopping at the first failed addition is not valid. Second, intersection-specific scores depending on the identities of $S$, rather than only its size, need a different analysis. For a non-prefix baseline, the adjusted scores still split into three ordered lists, but the simpler monotone scan above does not apply automatically.

Cheap necessary conditions can reduce the search. In particular, with $n=m-b$ and
\[
 L=\frac{\alpha(1-n/m)\eta_n}{n}\sum_{i\notin B}x_{n,i},
\]
the intersection $S=B^c$ requires $q/(b+q)\le L$. If $L<1$, at most $\lfloor bL/(1-L)\rfloor$ additions are possible. Singleton checks provide another inexpensive screen.

\subsection{A pilot-BH anchor in sorting time}\label{sec:pilotfast}
For \eqref{eq:pilotanchor}, let $B=[b]\subseteq C=[c]$. A certified addition must belong to $C\setminus B$, since every singleton outside $C$ has zero evidence. Write $R=[b+q]$, $r=b+q$.

\begin{theorem}[Two-count characterization]\label[theorem]{thm:pilot}
Let $0<b<m$, $\alpha\le\beta<1$, and $1\le q\le c-b$. The candidate $R=[b+q]$ is certified by \eqref{eq:pilotanchor} if and only if
\begin{align}
 \alpha r(c-q)&\ge\beta c(q+m-c),\label{eq:pilotone}\\
 \frac q r&\le\frac{\alpha b(c-b)}{\beta c(m-b)}.\label{eq:pilottwo}
\end{align}
Both conditions become no easier as $q$ increases. The largest feasible $q$, with $q=0$ always allowed, is found by binary search. The whole procedure has $O(m\log m)$ time and $O(m)$ memory, including sorting and the two BH runs.
\end{theorem}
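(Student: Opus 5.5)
The plan is to specialize the smallest-sum reduction of the preceding appendix subsection to the two-valued pilot score and read off the binding intersections. Put $k=m/(\beta c)$, so $x_i=k$ for $i\le c$ and $x_i=0$ otherwise, and $a_n=\alpha(1/n-1/m)$ with $\eta_n=1$. For $R=[r]$ the adjusted scores \eqref{eq:cost} take only four values:
\begin{itemize}
\item $a_nk+1/b-1/r$ on $B$;
\item $a_nk-1/r$ on $R\setminus B$;
\item $a_nk$ on $C\setminus R$;
\item $0$ on $[m]\setminus C$.
\end{itemize}
Certification means that, for every $n<m$, the sum of the $n$ smallest adjusted scores is nonnegative. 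Since $1/b\ge 1/r$, the $B$ scores are never below the $C\setminus R$ scores, which are never below the others. So the sum of the $n$ smallest is obtained by filling greedily: first the $q$ additions if $a_nk<1/r$, then the $m-c$ zeros, then $C\setminus R$, then $B$. If instead $a_nk\ge 1/r$, every adjusted score is nonnegative and the size-$n$ constraint is automatic. It is cleanest to write the minimal sum as $\min_j G_n(j)$ in the count variables and note that it is attained at one of these extreme fillings.

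Next I will split the range of $n$ into three regimes.

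\emph{Regime $n\le q+m-c$.} The worst case takes $\min(n,q)$ additions and zeros, with sum $\min(n,q)(a_nk-1/r)$. Since $a_n$ decreases in $n$, the binding case is $n=q+m-c$. There the condition $a_nk\ge 1/r$ reads
\[
\alpha(m-n)r\ge\beta c\,n
\quad\text{with } m-n=c-q,
\]
which is exactly \eqref{eq:pilotone}.

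\emph{Regime $q+m-c\le n\le m-b$.} All additions and zeros are included, plus $n-(m-c)$ elements of $C\setminus R$. The sum is $h(n)\,\alpha/(\beta c)-q/r$, where
\[
h(n)=\frac{(n-m+c)(m-n)}{n}=(2m-c)-n-\frac{m(m-c)}{n}.
\]
This $h$ is concave, so the constraint only needs checking at the two endpoints. The left endpoint reproduces \eqref{eq:pilotone}. The right endpoint $n=m-b$ gives $(c-b)\alpha b/\{(m-b)\beta c\}\ge q/r$, which is \eqref{eq:pilottwo}.

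\emph{Regime $n=m-b+t$, $t\ge1$.} The sum is the same concave term plus the linear term $t(1/b-1/r)-q/r$, hence concave in $n$ on $[m-b,m]$. Its value at $n=m-b$ is covered by \eqref{eq:pilottwo}, and at $n=m$ the sum is $1-1=0$. Concavity then gives nonnegativity throughout.

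Necessity is immediate: \eqref{eq:pilotone} and \eqref{eq:pilottwo} are the constraints at the specific intersections just identified. Sufficiency follows from the case analysis together with the reduction to prefixes in \cref{prop:prefix}; the pilot scores share a common order, so that proposition applies.

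For monotonicity in $q$, condition \eqref{eq:pilottwo} is immediate because $q/(b+q)$ increases in $q$. For \eqref{eq:pilotone}, set
\[
D(q)=\alpha(b+q)(c-q)-\beta c(q+m-c).
\]
Then $D'(q)=\alpha(c-b-2q)-\beta c\le(\alpha-\beta)c\le0$, so $D$ is nonincreasing. The feasible set of $q$ is therefore an initial segment of $\{0,\ldots,c-b\}$, and binary search finds its endpoint. The cost is $O(m\log m)$ sorting, linear-time evaluation of $b$ and $c$ from the sorted p-values, and $O(\log m)$ constant-time checks.

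The main obstacle is justifying the greedy filling rigorously. The ordering of the four score classes depends on $n$ and on the sign of $a_nk-1/r$, and the case $r>c$ must be excluded; it is, because singletons outside $C$ have zero evidence. I would handle this with the convexity argument in the count variables, or by reusing \eqref{eq:Gnj}–\eqref{eq:largen} directly, and then verify the concavity of the regime-wise sums carefully.
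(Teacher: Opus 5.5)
Your proposal is correct and follows essentially the same route as the paper's proof: the same four-valued adjusted scores with $u_n=\alpha(m-n)/(\beta c n)$, the same split at $n_1=q+m-c$ and $n_2=m-b$, and concavity of $u_n(n-m+c)$ (plus a linear term beyond $m-b$) reducing everything to the endpoints \eqref{eq:pilotone}, \eqref{eq:pilottwo} and the zero value at $n=m$, together with the same derivative check for monotonicity in $q$. Two small touch-ups: in the middle regime the number of $C\setminus R$ elements is $n-(m-c)-q$ (your formula for $h$ already reflects this), and sufficiency for the fixed candidate $[b+q]$ follows from the regime analysis alone, with \cref{prop:prefix} needed only to justify that searching prefixes finds a largest certified superset.
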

\begin{proof}
Let $d=m-c$, $n_1=d+q$, $n_2=m-b$, and
$u_n=\alpha(m-n)/(\beta cn)$. Scores outside $C$ are zero, added coordinates have adjusted score $u_n-1/r$, and baseline coordinates have score $u_n+1/b-1/r$.

The intersection comprising all additions and all coordinates outside $C$ gives $u_{n_1}\ge1/r$, exactly \eqref{eq:pilotone}. It makes every adjusted score nonnegative for $n\le n_1$. For $n_1\le n\le n_2$, the minimum sum is
\[
 u_n(n-d)-q/r
 =\frac\alpha{\beta c}\{m+d-n-md/n\}-q/r,
\]
a concave function of $n$. Its endpoint conditions are \eqref{eq:pilotone} and \eqref{eq:pilottwo}, the latter corresponding to $S=B^c$. For $n_2\le n\le m$, add $(n-n_2)(1/b-1/r)$; concavity remains, and the final endpoint is zero at $n=m$. This proves sufficiency and necessity. After expansion, the left side minus right side of \eqref{eq:pilotone} has a negative derivative in $q$ for $\beta\ge\alpha$; the left side of \eqref{eq:pilottwo} increases in $q$. Thus binary search applies in this special case.
\end{proof}

For example, with $m=100$, $\alpha=0.05$, $\beta=0.10$, eighty p-values equal to $0.0001$ and twenty equal to $0.06$, BH rejects 80 and the pilot rejects 100. Equations~\eqref{eq:pilotone}--\eqref{eq:pilottwo} certify all 100 at target $0.05$. This is a pointwise example, not an average-power claim.

\subsection{Arbitrary baselines and weighted size envelopes}\label{sec:arbitrarybasecompute}
The prefix-baseline restriction can be removed without enumerating intersections. Let $A$ be any realized baseline of size $b>0$, and suppose the local collection has the form
\[
 E_S=\frac{\FDP_S(A)}\alpha+d_{|S|}\frac1{|S|}\sum_{i\in S}x_{|S|,i},
 \qquad d_n\ge0,
\]
where the profiles share a common ordering. For Anchored-dBH, $d_n=1-n/m$. The size-envelope shifted anchors in Appendix~\ref{app:shifted} have a different, nonincreasing $d_n$.
For a candidate superset $R\supseteq A$ of size $r$, the adjusted scores are
\begin{equation}\label{eq:arbitrarycost}
 c_i(n,R)=\frac{\alpha d_n}{n}x_{n,i}
       +\frac{\ind\{i\in A\}}b-\frac{\ind\{i\in R\}}r.
\end{equation}
For every $n$, its $n$ smallest entries must sum to a nonnegative number.

The exchange proof of \cref{prop:prefix} only exchanges nonbaseline hypotheses. Therefore, even when $A$ is not a prefix, a maximum-cardinality certified superset can be chosen by adding a prefix of the nonbaseline hypotheses in decreasing score order. Fix such a candidate. Its adjusted scores split into three internally sorted lists: baseline discoveries have offset $1/b-1/r$; additions have offset $-1/r$; unselected coordinates have offset zero. Each list is multiplied by the same nonnegative coefficient $\alpha d_n/n$.

\begin{proposition}[Three-list shortcut]\label[proposition]{prop:threelist}
For $J$ common-order score profiles and any baseline mask, all intersection constraints for one candidate can be checked in $O(m\log m)$ time after profile prefix sums have been formed. Searching all prefix additions gives a maximum-cardinality certified superset in
\[
 O\{Jm+m\log m+(m-b)m\log m\}\text{ time},
 \qquad O(Jm)\text{ memory}.
\]
These bounds are after evaluating the profiles and computing the baseline.
\end{proposition}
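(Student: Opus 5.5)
The plan has three parts: reduce the search space, reduce each intersection constraint to a selection problem on three sorted lists, and count the cost. For the search space I would use the observation made just before the statement. The exchange argument of \cref{prop:prefix} only swaps two nonbaseline coordinates, so it never touches the offset $1/b$ carried by $A$. Hence a maximum-cardinality certified superset can be taken to be $R_q=A\cup\{\text{first }q\text{ nonbaseline coordinates in the common score order}\}$ for some $0\le q\le m-b$. I would check every $q$ and keep the largest certified one; a binary search is not available because feasibility need not be monotone in $q$, as already noted after \cref{thm:quadratic}. That gives $m-b+1$ candidates. It therefore suffices to check one candidate in $O(m\log m)$ time, with profile-dependent preprocessing costing $O(Jm)$ once.

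\emph{Preprocessing.} Sort once, in $O(m\log m)$, to obtain the common order. Split the indices into the baseline list and the nonbaseline list, each in decreasing score order. For each of the $J$ profiles, form prefix sums of $x_{n,i}$ along both lists, which costs $O(Jm)$ time and memory. For a given candidate $R_q$, the additions form the first $q$ entries of the nonbaseline list and the unselected coordinates form the remaining entries. So the sum of any contiguous block of raw scores, within any of the three lists, is available in $O(1)$ from these fixed prefix sums. No per-candidate re-summing is needed.

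\emph{Checking one candidate.} Fix $n$ and its profile block. By \eqref{eq:arbitrarycost}, each list's adjusted scores have the form $\lambda_n x_{n,i}+o_\ell$, with a common coefficient $\lambda_n=\alpha d_n/n\ge0$ and a list offset $o_\ell\in\{1/b-1/r,\,-1/r,\,0\}$. Each list is therefore internally sorted, and its smallest entries are its tail. The sum of the $n$ smallest entries of the union equals $\sum_\ell\{\lambda_n(\text{tail sum of }j_\ell\text{ raw scores})+j_\ell o_\ell\}$, where $(j_1,j_2,j_3)$ are the counts of the $n$ smallest entries falling in each list. Given the counts, this is $O(1)$ from prefix sums. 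The counts come from the $n$th smallest element of the union of three sorted arrays. I would compute it by the standard discard-based selection for sorted arrays: repeatedly compare medians of the active tails, discard a provably safe portion, and halve the remaining rank. This takes $O(\log m)$ comparisons, because each element is evaluated on the fly as an affine function of a stored score. Ties are broken by original index, so the counts are well defined and the sum does not depend on the tie rule. Doing this for all $n\le m$ gives $O(m\log m)$ per candidate. Summing over $m-b+1$ candidates, adding the $O(Jm)$ preprocessing and the sort, yields the stated $O\{Jm+m\log m+(m-b)m\log m\}$ time and $O(Jm)$ memory.

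\emph{Main obstacle.} I expect the selection step to be the delicate part. Because $\lambda_n$ and the list offsets interact differently for each $n$, the relative interleaving of the three lists changes with $n$. So, unlike \cref{thm:quadratic}, no single monotone pointer sweep is justified, and an honest $O(\log m)$ $k$-th-element routine on three sorted arrays must be used for each $n$. I would first try the two-array halving selection and extend it to three arrays by always discarding from the list whose probe is largest. I would verify that the rank invariant survives with affine-transformed entries, and handle the case $\lambda_n=0$ separately, where the order is determined by offsets alone. If this routine proves awkward, a fallback is to binary search on the threshold value, counting each list by binary search. That costs $O(\log^2 m)$ per $n$, a weaker bound, so I would aim first for the direct $O(\log m)$ selection.
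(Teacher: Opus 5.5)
Your plan follows the paper's proof. The exchange argument touches only nonbaseline coordinates, so the search reduces to adding a prefix of the nonbaseline list. Each size-$n$ constraint becomes a sum of the $n$ smallest entries of three internally sorted lists sharing the coefficient $\alpha d_n/n$. A logarithmic selection using prefix sums, repeated over $n$ and over the $m-b+1$ candidates, then gives the stated bound.

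The one step to correct is the selection rule in your last paragraph, which is the step that carries the $O(m\log m)$ claim. Discarding from the list whose probe is \emph{largest} is backwards. The low entries of the largest-probe list can have rank above $k$. Trimming that list above its probe never lowers $k$, so on its own it does not give the logarithmic bound.

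The rule that works, which is the paper's, counts from the small end of each list (the tail, in your decreasing order). Let $k$ be the rank still needed and $p\le3$ the number of nonempty lists. Probe depth $d=\lfloor k/p\rfloor\vee1$ in each list, truncated at its end, and commit the entries up to the \emph{smallest} probe. This is safe because each other list has fewer than its probe depth of entries strictly below that probe, with ties broken by index. Hence the smallest probe has rank at most $p(d-1)+1\le k$. Add the committed entries' sum from prefix sums and decrease $k$. Each step either removes a constant fraction of the remaining rank or exhausts a list, so $O(\log(n+1))$ steps suffice. With this rule your $O(\log^2 m)$ fallback is unnecessary, and the case $\lambda_n=0$ needs no separate treatment, since index tie-breaking already orders equal entries.
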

\begin{proof}
To sum the $n$ smallest entries of three sorted lists, let $k$ be the number still needed and $p\le3$ the number of nonempty lists. Probe $\lfloor k/p\rfloor\vee1$ entries into each list, truncated at its end. The prefix ending at the smallest probe can be removed: every other list has fewer than its probe position entries strictly below that probe, so the removed entries belong to a valid choice of the $k$ smallest. Add their sum from prefix sums. Each iteration either removes a constant fraction of the remaining rank or exhausts one of the three lists. Thus the calculation takes $O(\log(n+1))$ steps. Repeat over $n$, then over candidate sizes; the exchange argument proves that the candidate family suffices.
\end{proof}
This is exact for the specified size-residual collection, not a claim that every identity-dependent weighted collection has the same complexity. For a prefix baseline and nonincreasing $\alpha d_n/n$ within profile blocks, the stronger quadratic scan in Appendix~\ref{sec:quadratic} still applies. Generalized common shifts preserve the original p-value ordering, so their size-envelope anchors fall in that case. Coordinate-specific shifts, dBH, and confidence-bound shifts can require the three-list version.

\FloatBarrier

\section{Supporting results and implementation details}\label{app:support}
This appendix collects the score specification, comparator definitions, pointwise comparisons, sparse-regime results, and implementation checks. Detailed training logs, calibration arrays, and simulation records remain in the reproducibility package.

\subsection{Score specification and calibration}\label{app:calibration}
\Cref{tab:weights} specifies the frozen hinge mixtures used in the simulations. For dimensions other than $100$, the block boundaries are at fractions $0.1,0.3,0.6$ of $m$. The profiles were fitted externally using independent Gaussian training data at $m=100$, $\alpha=0.05$, nonnull fractions $0.1,0.5,0.9$, signal magnitudes $2.5,3,3.5$, and 2,000 repetitions per scenario. The 953-candidate search maximized average $\log\{0.02+\widehat{\mathrm{power}}\}$ for unanchored, unboosted closure; the weights were frozen before evaluation and are not claimed optimal for an anchor. The search history and numerical factors are in the source package.
\begin{table}[htbp]
\centering\small
\caption{Frozen Gaussian-hinge weights. Normal-quantile transforms give the corresponding fixed p-value calibrators in the PRDS experiments.}\label{tab:weights}
\begin{tabular}{@{}lrrrrr@{}}
\toprule
Intersection size ($m=100$) & $h_{1.5}$ & $h_2$ & $h_{2.5}$ & $h_3$ & $h_{3.25}$\\
\midrule
1--10 &0.50&0&0.50&0&0\\
11--30&0.15&0.70&0&0&0.15\\
31--60&0.15&0&0.60&0&0.25\\
61--100&0&0.05&0.45&0.10&0.40\\
\bottomrule
\end{tabular}
\end{table}

\paragraph{Monotonicity needed for the quadratic algorithm.}
For a fixed profile $f$, the mean of $n+1$ independent observations is the average of its $n$-observation leave-one-out means. Jensen's inequality therefore gives $\overline f_{n+1}\le_{\rm cx}\overline f_n$. Since $C_n=1/\{\alpha(1-n/m)\}$ increases with $n$,
\[
 n\longmapsto\E\min\{\eta\overline f_n,C_n\}
 \quad\text{is nondecreasing for each fixed }\eta.
\]
Thus the maximal feasible calibration factor is nonincreasing within a fixed-profile block. Taking the running minimum of conservatively feasible factors preserves validity and the monotonicity used in \cref{thm:quadratic}; factor one is always analytically valid. This argument, rather than the numerical fit, is the part needed by the computational theorem.

\paragraph{Numerical implementation.}
Score values are rounded upward to spacing $0.005$ before independent-sum FFT convolution; overflow above 8,000 contributes the full cap. Factors use a $10^{-6}$ expectation margin and the blockwise decreasing envelope, reverting to one when no boost is certified. Halving the spacing to $0.0025$ changed power by at most $0.0112$ percentage points in 18 checked settings. These are conservative discretization calculations in double precision, not interval-arithmetic certificates of FFT error. Exact validity requires \eqref{eq:calibration}; unboosted and adaptive-pilot variants avoid this numerical integration.

\subsection{Comparator definitions}\label{app:comparators}
We record the defining rules to distinguish the comparators from generic e-closure labels. For canonical closed BH, let $r_{k,s}$ denote the horizon matrix of \citet[Section~5]{Goeman2026}. Initialize $r_{0,s}=0$, $r_{k,m}=m$ for $k>0$, and $r_{k,s}=k$ for $k+s\le m+1$, except the preceding $s=m$ rule. For the remaining entries with $k<m$, set
\begin{align*}
 d&=m-s,\\
 B_{k,s}&=\begin{cases}
 \displaystyle\min\left\{m,\frac{d(k+s-m-1)r_{k-1,s}}{(k+s-m)d-r_{k-1,s}}\right\},
       &(k+s-m)d>r_{k-1,s},\\
 m,&\text{otherwise},
 \end{cases}\\
 r_{k,s}&=\min\left\{\lfloor B_{k,s}\rfloor,
                         \left\lfloor\frac{ks}{m-k}\right\rfloor\right\},
 \qquad r_{m,s}=m.
\end{align*}
The critical values and cumulative horizons are
\[
 A_{k,s}=\begin{cases}
 \alpha k/s,&r_{k,s}=k,\\
 \alpha(k+s-m)r_{k,s}/\{(r_{k,s}+s-m)s\},&r_{k,s}>k,
 \end{cases}
 \qquad
 v_{k,s}=\max_{j\le k}r_{j,s}\ind\{p_{(j)}\le A_{j,s}\}.
\]
The rejection count is $\max\{k:v_{k,s}\ge k\text{ for all }s\}$. Integer floors in the horizon table are evaluated by exact integer arithmetic. The implementation matches the published seven-hypothesis table and worked example.

For closed BY, for each intersection size $n$ let $H_n=\sum_{k=1}^n1/k$ and
\[
 f_n^{\rm CBY}(p)=\frac{\ind\{H_np\le\alpha\}}
                         {\max\{1,\lceil nH_np/\alpha\rceil\}}.
\]
Its e-collection is $\alpha^{-1}\sum_{i\in S} f_n^{\rm CBY}(p_i)$. We compute the largest certified p-value prefix, and then apply the Section~5.3 improvement of \citet{SunWang2026}: reject all $m$ when
\[
 p_{(m-1)}\le\frac\alpha{mH_m},\qquad
 \alpha<p_{(m)}\le\frac{5\alpha}{3}.
\]
Otherwise retain closed BY. Near-equality reciprocal-score sums are checked with exact common-denominator arithmetic, avoiding spurious failures at identities such as $1/2+1/3+1/6=1$. This comparator's supplied exact-arithmetic implementation is restricted to $m\le100$; the anchored algorithms are not.

\subsection{Pointwise incomparability examples}\label{app:incomparability}
These are deterministic checks of the stated rules, not additional simulation settings. They show that ``incomparable'' means lack of pointwise containment in either direction for concrete fixed variants, not that every choice of auxiliary evidence is incomparable.

For Anchored-BH, fix $m=7$, $\alpha=0.05$, and the single pilot $\beta=0.10$. The two-count rule of \cref{thm:pilot} and the canonical closed-BH horizon table give
\begin{center}\small
\begin{tabular}{@{}p{.56\textwidth}rrr@{}}
\toprule
Sorted p-values & BH & Anchored-BH & Closed BH\\
\midrule
$(0.0001,0.011,0.011,0.011,0.071,0.071,0.071)$ &4&5&4\\
$\alpha(2/7,2/7,7/12,7/12,1,15/8,2)$ &2&2&6\\
\bottomrule
\end{tabular}
\end{center}
The second p-value vector is the worked example of \citet{Goeman2026}. The pilot makes seven rejections in both cases. The first case permits one addition; in the second, \cref{prop:sparsebarrier} already forbids an addition.

For Anchored-BY, fix $m=10$, $\alpha=0.05$, and use the single unboosted score
$g(p)=h_{1.5}\{\Phi^{-1}(1-p/2)\}$ for every intersection size. With notation $x^{\times k}$ for $k$ repeated entries,
\begin{center}\small
\begin{tabular}{@{}p{.54\textwidth}rrr@{}}
\toprule
Sorted p-values & BY & Anchored-BY & Improved closed BY\\
\midrule
$(0.0001^{\times8},0.055,0.055)$ &8&10&8\\
$(0.0001,0.0037,0.9^{\times8})$ &1&1&2\\
\bottomrule
\end{tabular}
\end{center}
The Section~5.3 override does not trigger in either example. In the first, unmodified closed BY cannot reject either p-value above $\alpha$, whereas all Anchored-BY intersection constraints hold. In the second, direct closure certifies the first two for closed BY, while the complement of the single BY discovery blocks an addition by the hinge anchor. These witnesses were checked against exhaustive subset enumeration for the anchored rules and the defining comparator algorithms. The corresponding script and output are included in the supplement.

\subsection{Sparse-regime limits and a jointly calibrated refinement}\label{sec:sparse}\label{app:sparse}
Let $b=|B|$ for $B=\BH_\alpha(p)$, with $0<b<m$, and suppose $\bar B$ adds $q>0$ hypotheses. The constraint at $S=B^c$ is necessary even though $B^c$ is random: closure checks \emph{every} subset pointwise. It reads
\begin{equation}\label{eq:sparsecomplement}
 \frac{q}{b+q}\le\frac{\alpha b}{m}F_{B^c}.
\end{equation}
The strong BH discoveries disappear entirely from this constraint. A sparse gain must be paid for by evidence in the nonrejected complement, not by arbitrarily large scores of the existing discoveries.

\begin{proposition}[A small-baseline barrier for averaged scores]\label[proposition]{prop:sparsebarrier}
Suppose that, at each intersection size, the auxiliary e-value is an unboosted convex mixture of common-threshold averages
\[
 F_S=\sum_{\ell}w_{|S|,\ell}
       \frac1{|S|}\sum_{i\in S}
       \frac{\ind\{p_i\le T_{|S|,\ell}(p)\}}{T_{|S|,\ell}(p)},
 \qquad w_{|S|,\ell}\ge0,\quad\sum_\ell w_{|S|,\ell}=1,
\]
allowing integrals, a constant-score component and zero-threshold scores set to zero. Then every certified superset of $B$ with $q>0$ additions obeys
\begin{equation}\label{eq:sparsebarrier}
 q<\frac{b^2}{m-b}.
\end{equation}
In particular, if $b(b+1)\le m$, then $\bar B=B$. The bound is deterministic; validity of the thresholds is needed for FDR control, not for this inequality.
\end{proposition}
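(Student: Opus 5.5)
The plan is to use only the single intersection constraint at $S=B^c$, which is \eqref{eq:sparsecomplement}, and to bound the averaged auxiliary score there deterministically. First I will check that the display is the correct specialization of \eqref{eq:anchor}. For $S=B^c$ with $0<b<m$, the set $S$ is nonempty and proper, and $|B\cap S|=0$, so the baseline term vanishes and the coefficient is $1-(m-b)/m=b/m$. A certified $R\supseteq B$ of size $b+q$ has $\FDP_{B^c}(R)=q/(b+q)$, which gives exactly \eqref{eq:sparsecomplement}. The whole proof then reduces to the deterministic claim
\[
 F_{B^c}<\frac1\alpha .
\]
Given this claim, $q/(b+q)<b/m$ gives $qm<b^2+bq$, that is, $q(m-b)<b^2$, which is \eqref{eq:sparsebarrier}. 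If $b(b+1)\le m$, then $b^2\le m-b$, so $q<1$, and hence $q=0$ and $\bar B=B$.

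To prove $F_{B^c}<1/\alpha$, I will treat each mixture component separately. Strict pointwise inequalities survive convex mixtures and integrals with nonnegative weights of total mass one, so bounding every component suffices. A constant-score component contributes $1<1/\alpha$, since $\alpha<1$. A zero-threshold component contributes $0$. For a component with common threshold $T>0$, the component equals $N_c(T)/\{(m-b)T\}$, where $N_c(T)=\#\{i\notin B:p_i\le T\}$. It therefore suffices to show that $N_c(T)<(m-b)T/\alpha$. I will split into three cases according to the size of $T$.

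\emph{Case $T\le\alpha b/m$.} Every unrejected $p_i$ exceeds $\alpha b/m$, by the definition of $B=\{i:p_i\le\alpha b/m\}$. Hence $N_c(T)=0$.

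\emph{Case $T>\alpha$.} Here trivially $N_c(T)\le m-b<(m-b)T/\alpha$.

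\emph{Case $\alpha b/m<T\le\alpha$.} This is the main step. Write $T=\alpha t/m$ with $b<t\le m$. By maximality of $b$ in the BH step-up rule, every integer $k>b$ satisfies $\#\{i:p_i\le\alpha k/m\}\le k-1$. With $k=\lceil t\rceil\in(b,m]$, monotonicity of the counting function gives $N(T)\le\lceil t\rceil-1<t$. All $b$ baseline p-values lie below $T$, so $N_c(T)=N(T)-b<t-b$. Therefore
\[
 \frac{N_c(T)}{T}<\frac{m}{\alpha}\Bigl(1-\frac bt\Bigr)\le\frac m\alpha\Bigl(1-\frac bm\Bigr)=\frac{m-b}\alpha,
\]
where the second inequality uses $t\le m$.

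The only delicate point is this middle case. It is where the cap $t\le m$, coming from $T\le\alpha$, is needed to obtain $b/m$ rather than a weaker bound such as $b^2/(m-2b)$. It is also where strictness must be tracked, through $\lceil t\rceil-1<t$, so that \eqref{eq:sparsebarrier} is strict. Validity of the thresholds is not used anywhere, which is consistent with the claim that the bound is deterministic.
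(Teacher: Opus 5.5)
Your proof is correct and is essentially the paper's argument. Both use only the complement constraint at $S=B^c$, and both show that each mixture component of $F_{B^c}$ lies strictly below $1/\alpha$ via BH maximality ($p_{(k)}>\alpha k/m$ for $k>b$), with constant components handled by $1<1/\alpha$. The only difference is bookkeeping: the paper indexes by the count $c=\#\{i:p_i\le t\}$ and uses $t\ge p_{(c)}>\alpha c/m$ together with $c\le m$, which covers all thresholds at once and makes your separate case $T>\alpha$ unnecessary.
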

\begin{proof}
For every $k>b$, the definition of BH gives $p_{(k)}>\alpha k/m$. Fix any realized threshold $t>0$, let $c=|\{i:p_i\le t\}|$, and put $n=m-b$. If $c\le b$, its average score on $B^c$ is zero. Otherwise
\[
 \frac{c-b}{nt}
 <\frac{m(c-b)}{\alpha cn}
 \le\frac1\alpha.
\]
A constant-score component is also strictly below $1/\alpha$. Thus the convex mixture satisfies $F_{B^c}<1/\alpha$. Substitution into \eqref{eq:sparsecomplement} gives $q/(b+q)<b/m$, equivalent to \eqref{eq:sparsebarrier}.
\end{proof}

The proposition covers fixed normalized decreasing p-to-e calibrators, because they are mixtures of normalized indicators; it therefore includes the normal-quantile Gaussian hinges. It also covers pilot-BH indicators, the adaptive hinge, and averaged adaptive indicators borrowed from closed BH. At $m=100$, an unboosted rule in this class makes no additions whenever BH makes at most nine discoveries, regardless of how its mixture weights are tuned. This helps explain the sparse results in Section~\ref{sec:simulations}. It is a bound in terms of the observed $b$, not a claim that every sparse alternative yields $b\le9$.

A deterministic boost $\eta_{m-b}$ changes the necessary bound to
\[
 \frac q{b+q}<\frac{\eta_{m-b}b}{m}.
\]
Thus even one addition requires $\eta_{m-b}>m/\{b(b+1)\}$. For example, $m=100,b=5$ requires a factor strictly exceeding $10/3$. The factors near one in the fitted large-intersection profiles cannot remove this obstruction. The result does \emph{not} rule out other local e-values or a different valid enlargement of BH's certificates.

\paragraph{Borrowing closed-BH thresholds.}
For the canonical horizon rule, write $r^{\rm CBH}(p)$ for its rejection count and $a_{k,n}$ for its critical values, as in Appendix~\ref{app:comparators}. Define
\[
 k_n(p)=\max\bigl(\{k\le r^{\rm CBH}(p):p_{(k)}\le a_{k,n}\}\cup\{0\}\bigr),
 \qquad T_n(p)=a_{k_n(p),n},
\]
with $a_{0,n}=0$. The established monotonicity of these thresholds \citep{Goeman2026} makes
\begin{equation}\label{eq:borrowedhorizon}
 F_S^{\rm horizon}
   =\frac1{|S|}\sum_{i\in S}
       \frac{\ind\{p_i\le T_{|S|}(p)\}}{T_{|S|}(p)}
\end{equation}
valid under PRDS. It can be inserted into the anchor, or mixed with a smoother adaptive hinge. This borrows the intersection-specific rejection scale without using closed BH as the baseline. It is a concrete valid candidate, not an asserted improvement over the existing anchors. In particular, \cref{prop:sparsebarrier} still applies: merely transplanting its indicator scores cannot repair the small-$b$ regime.

\paragraph{Changes that address the obstruction.}
A first option, when independently supplied covariates or prespecified scientific priorities are available, is to replace the uniform within-intersection average by
\[
 F_S=\sum_{i\in S}v_{S,i}x_i,
 \qquad v_{S,i}\ge0,\quad\sum_{i\in S}v_{S,i}=1,
\]
where the $x_i$ are valid individual e-values and the weights are fixed independently of the testing data (or justified conditionally on side information). This retains the original BH anchor and its theorem but can concentrate the auxiliary evidence on a few plausible nonnulls. Unlike uniform averaging, it is not covered by \cref{prop:sparsebarrier}. Whether it helps depends on the information in the weights; arbitrary same-data ranking does not justify them.

A second option is to calibrate the new evidence \emph{jointly} with BH's certificate, rather than requiring the added band to have unit mean by itself. The following construction reallocates evidence that the original auxiliary score would spend on already-rejected coordinates.

\begin{proposition}[A jointly calibrated band anchor]\label[proposition]{prop:bandanchor}
Let $T_0=\alpha|B|/m$, and fix multipliers $\kappa_n\ge m/n$ for $1\le n<m$. For $T_0>0$, define
\begin{equation}\label{eq:bandanchor}
 E_S^{\rm band}=E_S^B+
 \left(1-\frac nm\right)
 \frac{\sum_{i\in S}\ind\{T_0<p_i\le\kappa_nT_0\}}
      {n(\kappa_n-1)T_0},\qquad n=|S|<m.
\end{equation}
Set these values to zero when $T_0=0$, and retain $E_{[m]}^{\rm band}=E_{[m]}^B$. Under the PRDS assumptions of \cref{thm:anchor}, this is a valid e-collection pointwise dominating $(E_S^B)$. Its certified supersets of $B$ therefore retain BH and control FDR, simultaneously with $B$.
\end{proposition}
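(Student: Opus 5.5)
The plan is to prove validity coordinate by coordinate, rewriting each null coordinate's total contribution to $E_S^{\rm band}$ as a nonnegative combination of two adaptive-threshold ratios from \eqref{eq:adaptiveindicator}. Joint calibration means the band term is never required to have unit mean on its own: part of the budget that BH's certificate would spend on $\{p_i\le T_0\}$ is reassigned to the band. Once validity holds, pointwise domination and the FDR and simultaneity conclusions follow exactly as in the proof of \cref{thm:anchor}.

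\textbf{Step 1: write $E_S^{\rm band}$ coordinatewise.} Fix $S$ with $n=|S|<m$ and $T_0>0$. Since $B=\{i:p_i\le T_0\}$ and $b=mT_0/\alpha$,
\[
 E_S^B=\sum_{i\in S}\frac1m\,\frac{\ind\{p_i\le T_0\}}{T_0}.
\]
Put $c_n=(m-n)/(mn)$, so that the band term equals $\sum_{i\in S}c_n\ind\{T_0<p_i\le\kappa_nT_0\}/\{(\kappa_n-1)T_0\}$. The case $T_0=0$ is trivial, since all terms then vanish.

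\textbf{Step 2: decompose into two valid threshold ratios.} Introduce
\[
 y_i=\frac{\ind\{p_i\le T_0\}}{T_0},\qquad
 w_i=\frac{\ind\{p_i\le \kappa_nT_0\}}{\kappa_nT_0}.
\]
We look for constants $\lambda_1,\lambda_2\ge0$ such that the coordinate-$i$ summand equals $\lambda_1y_i+\lambda_2w_i$ pointwise.
\begin{itemize}
\item On the band $\{T_0<p_i\le\kappa_nT_0\}$, only $w_i$ is nonzero, which forces $\lambda_2=c_n\kappa_n/(\kappa_n-1)$.
\item On $\{p_i\le T_0\}$ both ratios are nonzero, and matching $1/(mT_0)$ forces $\lambda_1=1/m-c_n/(\kappa_n-1)$.
\end{itemize}
The hypothesis $\kappa_n\ge m/n$ is exactly what makes $\lambda_1\ge0$, because $c_n\le(\kappa_n-1)/m$ is equivalent to $(m-n)/n\le\kappa_n-1$. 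A direct computation gives $\lambda_1+\lambda_2=1/m+c_n$, hence $n(\lambda_1+\lambda_2)=n/m+(m-n)/m=1$.

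\textbf{Step 3: apply the dependency-control inequality.} The thresholds $T_0(p)=\alpha|\BH_\alpha(p)|/m$ and $\kappa_nT_0(p)$ are coordinatewise nonincreasing in $p$. Here $\kappa_n$ is fixed, and a threshold above one is harmless. Under PRDS, \eqref{eq:adaptiveindicator} therefore gives $\E y_i\le1$ and $\E w_i\le1$ for each $i\in S\subseteq I_0$. By linearity,
\[
 \E E_S^{\rm band}\le n(\lambda_1+\lambda_2)=1.
\]
For $S=[m]$ the value is $E_{[m]}^B$, which is BH's global-null guarantee.

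\textbf{Step 4: domination and the closure conclusions.} The band term is nonnegative, so $E_S^{\rm band}\ge E_S^B$ pointwise, and hence $B$ is certified. The closure argument of \cref{thm:anchor} then yields FDR control for every certified superset and simultaneous control \eqref{eq:simulfdr}, with $B$ in the same collection.

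The main obstacle is Step 2. The naive approach bounds $E_S^B$ and the band separately, but the band ratio over $(\kappa_n-1)T_0$ has no unit-mean guarantee by itself. Writing it as a difference of $w_i$ and $y_i$ would require a lower bound on $\E y_i$, which is not available. The positive decomposition avoids this, and it is exactly where the condition $\kappa_n\ge m/n$ enters.
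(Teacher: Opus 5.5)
Your proposal is correct and takes essentially the same route as the paper. Your coefficients satisfy $n\lambda_1=n/m-(1-n/m)/(\kappa_n-1)$ and $n\lambda_2=1-n\lambda_1$, which are exactly the paper's convex weights for rewriting $E_S^{\rm band}$ as an average of the two threshold ratios at $T_0$ and $\kappa_nT_0$. From there, \eqref{eq:adaptiveindicator} gives validity, with $\kappa_n\ge m/n$ ensuring nonnegativity, and the nonnegative band term gives pointwise domination and the closure conclusions.
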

\begin{proof}
Put
\[
 w_n=\frac nm-\frac{1-n/m}{\kappa_n-1}\in[0,1].
\]
Algebra rewrites \eqref{eq:bandanchor} as
\[
 E_S^{\rm band}=\frac1n\sum_{i\in S}\left[
 w_n\frac{\ind\{p_i\le T_0\}}{T_0}
 +(1-w_n)\frac{\ind\{p_i\le\kappa_nT_0\}}{\kappa_nT_0}\right].
\]
Both thresholds are coordinatewise nonincreasing functions of the p-values, so \eqref{eq:adaptiveindicator} proves validity. The weights are deterministic and nonnegative because $\kappa_n\ge m/n$. In \eqref{eq:bandanchor} the additional term is nonnegative, establishing pointwise domination and the closure guarantee.
\end{proof}

The band average in \eqref{eq:bandanchor} is \emph{not} claimed to be a separately valid $F_S$. Consequently this is a jointly calibrated extension of Anchored-BH, not an application of \cref{thm:anchor} to an unverified auxiliary e-value. The PRDS proof calibrates the total collection. It remains readily interpretable: BH's contribution is unchanged, and the new contribution is zero for already-rejected hypotheses.

This extension can escape \cref{prop:sparsebarrier} in small examples. At $m=6$, $\alpha=0.05$ and
\[
 p=(0.0001,0.0001,0.0256,0.9,0.9,0.9),
 \qquad \kappa_n=\max\{1.55,m/n\},
\]
BH rejects two hypotheses and the band anchor certifies the first three. Every unboosted averaged-score anchor in \cref{prop:sparsebarrier} must retain exactly two, since $b(b+1)=m$. The example was checked by exhaustive intersection enumeration. It is not a broad power comparison. Indeed, for $b<m/2$ the same complement argument gives the weaker but still restrictive necessary bound $q<b^2/(m-2b)$ for this band family: if $c$ p-values fall below $\kappa_{m-b}T_0$, then $c-b<(\kappa_{m-b}-1)b$. Thus the extension changes the bound without eliminating the difficulty at very small $b$.

Broader joint calibration could use closed-BH's local horizon evidence $Q_S$ through
\begin{equation}\label{eq:overlapanchor}
 E_S^{\rm overlap}=\max\{E_S^B,\gamma_SQ_S\},\qquad
 \sup_{P\in H_S}\E_P E_S^{\rm overlap}\le1.
\end{equation}
The maximum preserves $B$. The generic sum bound only gives $\gamma_S=1-|S|/m$; a real advantage requires a larger factor justified by overlap of the two certificates. The band construction provides one explicit form of joint accounting, not a solution for arbitrary adaptive $Q_S$.

Finally, simultaneous retention itself uses some of the budget. For any anchored output, \eqref{eq:simulfdr} implies
\[
 \E\bigl[\{\FDP_{I_0}(\bar B)-\FDP_{I_0}(B)\}_+\bigr]
 \le\alpha-\FDR(B).
\]
Under independent continuous uniform nulls, BH has $\FDR(B)=\alpha m_0/m$, so the right side is $\alpha(m-m_0)/m$. This is not a power upper bound, but it explains why sparse improvements that preserve simultaneous certification can be demanding. Canonical closed BH does not in general retain that same post hoc choice of the baseline.

\subsection{Implementation checks}\label{app:validation}
\paragraph{Anchored-2BH implementation.}
The new two-tail comparator runs the same canonical closed-BH horizon rule at $\alpha/2$ separately in the two tails, then takes their union. The Anchored-2BH baseline is also a union and can be a nonprefix set in the two-sided p-value ordering. Its implementation therefore uses the smallest-sum criterion following \eqref{eq:cost} directly for each candidate, with the baseline represented by a coordinate mask. Additions are ordered by increasing two-sided p-value; full baseline containment is checked, not merely the rejection count.

Validation is separate from the simulation study. Canonical closed BH was checked against its definition on 350 random small problems. The quadratic anchor was compared to full smallest-sum sorting on 720 problems and to exhaustive enumeration of all candidate supersets and intersections on 140 small problems. The pilot shortcut passed 4,930 count-level checks. Closed BY passed 210 exhaustive rational-arithmetic checks. All pointwise baseline-retention and calibrated-versus-unboosted invariants also passed across the simulations. The generic-baseline algorithm passed 815 exhaustive candidate comparisons on 210 problems with $m=2,\ldots,8$. Such checks test implementation consistency, not the probabilistic assumptions.

\subsubsection{Ordered procedures, validation, and provenance}\label{app:orderedvalidation}
The ordered study uses the definitions of SeqStep+ in \citet{BarberCandes2015} and Adaptive SeqStep in \citet{LeiFithian2016}. Selective SeqStep+ is the $\lambda=s$ specialization of the adaptive selective rule. All p-values are exact continuous uniforms under the null; the fixed-order assumption is not replaced by an order chosen using the tested p-values. Raw uncorrected SeqStep rules, with their different modified-FDR targets, are not ordinary-FDR baselines in these experiments.

The conditional fallback comparator solves the one-coordinate budget inequality in Section~\ref{sec:orderedprefix} and uses its stronger coordinate certificate to select the largest prefix or rectangle retaining the baseline. The recursive prefix comparator replaces the first calibration's lower denominator by the first calibrated prefix before repeating the conditional calculation. It is an ordered conditional-calibration comparator, not a claim that the published SeqStep+ procedure already includes that recursion. The residual-band method instead uses the actual conditional FDP contribution and \eqref{eq:ordband}. Both preserve their specified baseline; neither is an arbitrary filter of a merely FDR-valid set.

The imported validation covers 252 exact category-enumerated ordered null configurations, with largest expected total budget divided by $\alpha$ equal to $0.999999$, and 100 exhaustive additive-closure checks. The current package separately audits all 65,000 ordered replicate arrays against their archived counts and checks baseline retention and prefix structure. The proof is not replaced by these finite checks. Category probabilities are evaluated in double precision; a small residual margin prevents optimistic use of rounding error, but the calculations are not interval-arithmetic certified.

The data and results are the ordered subset of the completed conditional-residual study, not new simulations in this revision. The 72,000 calibrated-knockoff data sets from that separate study are not imported or included in any total. Shared finite-state source kernels are preserved for provenance, but the supplied ordered driver runs only the 12 ordered settings. All retained-method rejection masks and numerical results are unchanged. The study contains 1,278,000 distinct data sets in total, including the appendix experiments; the ordered grid is counted only once.
\FloatBarrier

\section{Anchored-NDBH under weak negative dependence}\label{sec:negative}\label{app:negative}
Negative dependence permits a different baseline and different auxiliary evidence. We improve the FDR-corrected BH procedure of \citet{CRW2024}, not uncorrected BH at level $\alpha$. The residual coefficient is derived from their bound with the null-proportion factor retained. Unlike the PRDS anchor, a sharpened full-intersection coefficient can also certify discoveries when the baseline is empty.

\subsection{The corrected baseline and its intersection budgets}
Assume valid null p-values satisfy the equal-threshold condition
\begin{equation}\label{eq:weaknegative}
 \Prob(p_i\le t\text{ for every }i\in J)\le t^{|J|}
 \quad\text{for all nonempty }J\subseteq I_0,\quad 0\le t\le1.
\end{equation}
This is the weak-negative-dependence condition sufficient for the Simes bounds used below. It does not require independence between nulls and nonnulls. Pairwise negative correlation alone is not a substitute. One-sided Gaussian p-values with nonpositive null--null covariances satisfy the condition; two-sided folding is not covered by that implication \citep{CRW2024}.

Write $H_m=\sum_{j=1}^m j^{-1}$ and
\[
 \Psi_m(u)=u\min\{H_m,\,3.18+\log(1/u)\},\qquad 0<u<1.
\]
Choose $\beta$ so that $\Psi_m(\beta)=\alpha$, and let $B^-=\BH_\beta(p)$. Theorem~19 of \citet{CRW2024} gives $\FDR(B^-)\le\alpha$. The harmonic branch is preferable at smaller dimensions; for example, with $\alpha=0.05$, $\beta=\alpha/H_{100}\approx0.0096388$, whereas at $m=5000$ the logarithmic branch gives $\beta\approx0.0060307>\alpha/H_{5000}\approx0.005498$.

For $1\le n<m$, set
\begin{equation}\label{eq:negativebudget}
 q_n=\frac{\Psi_m(\beta n/m)}{\alpha}.
\end{equation}
For the full intersection, use the sharper coefficient
\begin{equation}\label{eq:negativefull}
 C_m(\beta)=\min\left\{1,\sum_{k=1}^m\binom mk
              \left(\frac{\beta k}{m}\right)^k\right\},\qquad
 q_m=\min\{1,C_m(\beta)/\alpha\}.
\end{equation}
These coefficients are known and lie in $[0,1]$.

\begin{proposition}[Negative-dependence contribution bound]\label[proposition]{prop:negativebudget}
Under \eqref{eq:weaknegative}, every true intersection $S$ satisfies
\[
 \E\FDP_S(B^-)\le\alpha q_{|S|}.
\]
\end{proposition}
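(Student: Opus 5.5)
The plan is to reduce the claim to the Simes-type bounds behind Theorem~19 of \citet{CRW2024}, applied to the sub-collection of p-values indexed by $S$. Start from the standard decomposition of the BH contribution. Write $B^-=\BH_\beta(p)$ and, for $i\in S$, let $R$ denote the rejection count. Then
\[
 \FDP_S(B^-)=\sum_{i\in S}\sum_{k=1}^m\frac{\ind\{p_i\le\beta k/m,\ R=k\}}{k}.
\]
The aim is to bound the expectation of this sum using only \eqref{eq:weaknegative} restricted to $J\subseteq S$. A clean route is to dominate the contribution by a function of the $S$-coordinates alone. On $\{R=k\}$, every rejected null in $S$ has $p_i\le\beta k/m$. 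Moreover, if $V_S=|B^-\cap S|$, then $V_S/k\le V_S/(V_S\vee1)$ is not sharp enough. Instead I would use the step-up structure of the $S$-subvector: BH at level $\beta$ on all $m$ p-values, restricted to $S$, is contained in a step-up rule on $p_S$ with critical values $\beta k/m$. Following \citet{CRW2024}, one bounds $\FDP_S$ by a Simes-type functional of $p_S$, such as $\sum_{i\in S}\ind\{p_i\le\beta R/m\}/R$, with $R\ge$ (rank of $p_i$ within $S$). This yields an expression depending on the null coordinates in $S$, with effective level $u=\beta|S|/m$ relative to $n=|S|$ tests.

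For $n<m$, the key step is to invoke the CRW bound for BH on $n$ weakly negatively dependent null p-values at level $u=\beta n/m$. They show $\E\FDP\le\Psi_n(u)$ for that reduced problem, through the Simes inequality under \eqref{eq:weaknegative}, $\Prob(\exists k: p_{(k),S}\le uk/n)\le\min\{uH_n,\,u(3.18+\log(1/u))\}$-type estimates. Two points then need checking.

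First, the contribution of $S$ within the full $m$-test BH is at most the FDR of an all-null BH run on $p_S$ at level $\beta n/m$. The argument is that rejected set sizes are at least the internal counts, and the per-coordinate thresholds $\beta k/m=(\beta n/m)(k/n)$ match. In fact I would apply their proof's coordinatewise bound directly rather than the final statement.

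Second, $H_n\le H_m$ gives $\Psi_n\le\Psi_m$. Together these yield $\E\FDP_S(B^-)\le\Psi_m(\beta n/m)=\alpha q_n$. The nonnulls outside $S$ only enlarge $R$, and I expect the reduction to handle them by the same monotonicity used in the BY staircase of Section~\ref{sec:by}.

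For $S=[m]$, $\FDP_{[m]}(B^-)=\ind\{R>0\}$ under the global null. It therefore suffices to bound $\Prob(R>0)$ by both $\alpha$ (from $\FDR\le\alpha$, which gives $q_m\le1$ trivially) and $C_m(\beta)$. The latter follows from a union bound: $\{R>0\}\subseteq\bigcup_k\bigcup_{|J|=k}\{p_i\le\beta k/m\ \forall i\in J\}$, and \eqref{eq:weaknegative} gives $(\beta k/m)^k$ for each $J$. Combining these gives $\Prob(R>0)\le\min\{1,C_m(\beta)\}$, and with $\alpha q_m=\min\{\alpha,C_m(\beta)\}$ this is what is needed.

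The main obstacle is the first step: justifying that the intersection contribution is controlled by CRW's bound at the reduced level $\beta n/m$ when the remaining $m-n$ p-values (including nonnulls) are arbitrary. This requires re-entering their proof at the coordinatewise/Simes stage instead of citing Theorem~19 as a black box.
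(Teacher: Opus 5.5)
\textbf{There is a genuine gap in the case $n=|S|<m$.} Your reduction says that the contribution of $S$ to the $m$-test run $B^-$ is at most the FDR of an all-null BH run on $p_S$ at level $u=\beta n/m$, and hence at most $\Psi_n(u)$. This is false.

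The all-null reduced run has FDR equal to the Simes size $\Prob(Q_S\le u)$, where $Q_S=\min_{k\le n}np_{(k:S)}/k$, and its thresholds never exceed $u$. In the full run, a null in $S$ is rejected at threshold $\beta r/m$, which can reach $\beta$ when the other coordinates inflate $r$. Condition \eqref{eq:weaknegative} restricts only the null coordinates, not how the remaining ones depend on them.

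Here is a concrete counterexample. Take $S=I_0=\{1\}$. When $p_1\in((k-1)\beta/m,k\beta/m]$, set $k-1$ nonnull p-values to $0$ and the rest to $1$. BH then rejects exactly $k$ hypotheses, including $H_1$, so
\[
\E\FDP_S(B^-)=\sum_{k=1}^m\frac{\beta}{mk}=uH_m,
\]
whereas your chain asserts the bound $\Psi_1(u)=u$. So the $H_m$ in $\Psi_m$ is not slack that you recover via $H_n\le H_m$; it is attained. There is also no favorable monotonicity in the nonnulls. A larger $r$ shrinks $1/r$ but raises the null's threshold. The BY staircase is exactly what accounts for this trade-off, and it yields the separate branch $uH_m$ after summing $\beta H_m/m$ over $i\in S$.

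\textbf{The missing idea} is a pathwise bound that keeps the full-problem denominator yet depends only on $p_S$. Suppose $v=|B^-\cap S|>0$ and $r=|B^-|$. By self-consistency, the $v$ rejected nulls all satisfy $p_i\le\beta r/m$. Hence $p_{(v:S)}\le\beta r/m$ and $Q_S\le n\beta r/(mv)$, which gives $\FDP_S(B^-)\le\min\{u/Q_S,1\}$ whatever the other coordinates do. Su's FDR-linking identity then gives
\[
 \E\min\{u/Q_S,1\}=u+u\int_u^1\frac{\Prob(Q_S\le x)}{x^2}\,dx .
\]
So you need the Simes bound for $p_S$ at every level $x\in[u,1]$, not only at $u$. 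Inserting the weak-negative-dependence Simes estimate from the proof of Theorem~19 of \citet{CRW2024} gives $u\{3.18+\log(1/u)\}$. Taking the minimum with the staircase branch $uH_m$ gives $\Psi_m(u)=\alpha q_n$.

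Your treatment of $S=[m]$ is correct and matches the paper. The union bound over $k$-subsets gives $C_m(\beta)$, and $\FDR(B^-)\le\alpha$ under the global null gives the cap at one.
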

\begin{proof}
Let $n=|S|$, $u=\beta n/m$, and $Q_S=\min_{1\le k\le n}np_{(k:S)}/k$. If $v=|B^-\cap S|>0$ and $r=|B^-|$, BH self-consistency gives $Q_S\le n\beta r/(mv)$, hence
\[
 \FDP_S(B^-)\le\min\{u/Q_S,1\}.
\]
Writing $G_S(x)=\Prob(Q_S\le x)$, integration by parts yields
\[
 \E\min\{u/Q_S,1\}=u+u\int_u^1\frac{G_S(x)}{x^2}\,dx.
\]
This is the FDR-linking calculation of \citet{Su2018}. Applying the Simes bound and the integral estimate in the proof of Theorem~19 of \citet{CRW2024}, now at effective level $u$, gives $u\{3.18+\log(1/u)\}$. Separately, the arbitrary-dependence BH contribution bound gives $uH_m$. Their minimum proves \eqref{eq:negativebudget}. At $n=m$, the FDP equals $\ind\{B^-\ne\varnothing\}$, so the finite-dimensional Simes bound of \citet[Theorem~2]{CRW2024} gives \eqref{eq:negativefull} directly.
\end{proof}
The $q_n$ calculation is a consequence of the cited proofs; it is not stated there as an anchored procedure. Keeping the effective level $\beta n/m$ is important: BH's PRDS coefficient $n/m$ is not automatically valid in this setting.

\subsection{Local count and product e-values}
For fixed $t\in(0,1)$ let $N_S(t)=\sum_{i\in S}\ind\{p_i\le t\}$. For $n=|S|$ and $1\le k\le n$,
\begin{equation}\label{eq:negativecount}
 F_{S,k,t}=\frac{\binom{N_S(t)}k}{\binom nk t^k}
\end{equation}
is a local e-value: expand the numerator over $k$-subsets and apply \eqref{eq:weaknegative}. The case $k=1$ is the average indicator score; larger $k$ reward simultaneous small p-values. A smoother product is
\begin{equation}\label{eq:negativeproduct}
 F_{S,t,\lambda}=\prod_{i\in S}
       \left(1-\lambda+\lambda\frac{\ind\{p_i\le t\}}t\right),
 \qquad 0\le\lambda\le1.
\end{equation}
Expanding this product and using \eqref{eq:weaknegative} gives expectation at most $(1-\lambda+\lambda)^n=1$. This is a common-threshold application of the negative-dependence product principle in \citet{CRW2024}. Prespecified mixtures over thresholds, orders and product parameters remain valid, as do size-dependent choices.

These products should not be confused with independent-null cap boosting. For $p_1=U$, $p_2=1-U$ and $g(p)=2\ind\{p\le1/2\}$, one has $\E\min\{g(p_1)+g(p_2),2\}=2$, compared with $1.5$ for independent uniforms. Thus \cref{lem:prds} cannot be reused under negative dependence.

\subsection{The Anchored-NDBH guarantee}
For any valid local e-values $F_S$ under \eqref{eq:weaknegative}, define
\begin{equation}\label{eq:negativeanchor}
 E_S^-=\frac{|B^-\cap S|}{\alpha(|B^-|\vee1)}
                +(1-q_{|S|})F_S.
\end{equation}
\begin{corollary}[Anchored-NDBH]\label[corollary]{cor:negativeanchor}
Under \eqref{eq:weaknegative}, the values in \eqref{eq:negativeanchor} are local e-values. Every certified superset $\bar B^-\supseteq B^-$ obtained by e-closure satisfies
\[
 \bar B^-\supseteq\BH_\beta(p)\quad\text{pointwise},\qquad
 \FDR(\bar B^-)\le\alpha.
\]
\end{corollary}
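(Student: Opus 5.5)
The argument follows the template of the proof of \cref{thm:anchor} and the Remark after \cref{cor:by}: show local validity, then pointwise certification of the baseline, then closure at $I_0$.

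\textbf{Step 1: local validity for $|S|<m$.} Fix a true intersection $S$ with $n=|S|<m$. The first term of \eqref{eq:negativeanchor} is exactly $\FDP_S(B^-)/\alpha$. By \cref{prop:negativebudget} its expectation is at most $q_n$. Since $q_n\in[0,1]$, the coefficient $1-q_n$ is deterministic and nonnegative. Linearity then gives
\[
\E E_S^-\le q_n+(1-q_n)\E F_S\le q_n+(1-q_n)=1,
\]
using that $F_S$ is a valid local e-value under \eqref{eq:weaknegative}. Examples are the count e-values \eqref{eq:negativecount}, the product e-values \eqref{eq:negativeproduct}, or prespecified mixtures of these. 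Nonnegativity of $E_S^-$ is immediate. No independence between $B^-$ and $F_S$ is needed, only linearity of expectation.

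\textbf{Step 2: the full intersection.} At $S=[m]$, $\FDP_{[m]}(B^-)=\ind\{B^-\ne\varnothing\}$. The finite-dimensional Simes bound from the last line of the proof of \cref{prop:negativebudget} gives $\E\FDP_{[m]}(B^-)\le\min\{\alpha,C_m(\beta)\}=\alpha q_m$, so the same computation goes through with $n=m$.

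This is the point needing the most care. I must confirm that $q_m=\min\{1,C_m(\beta)/\alpha\}$ is really bounded by the quantity the Simes argument controls, so that $1-q_m\ge0$. When $C_m(\beta)<\alpha$, the auxiliary term carries positive weight even on the full intersection. That is what allows certification with empty baseline, in contrast to Anchored-BH. I would check that the bound $\Prob(B^-\ne\varnothing)\le C_m(\beta)$ holds under \eqref{eq:weaknegative} alone. The argument is a union over $k$ of events that at least $k$ null p-values fall below $\beta k/m$, bounded using $t^{k}$ and $\binom mk$. This requires the global null, which is exactly the case where $S=[m]$ is a true intersection.

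\textbf{Step 3: certification and closure.} Pointwise, $\alpha E_S^-\ge\FDP_S(B^-)$ for every nonempty $S$, because the second term is nonnegative. Hence $B^-\in\cR_\alpha(E^-)$, and certified supersets exist, with $B^-$ itself always available. For any certified $\bar B^-$, the constraint at $S=I_0$ (when $I_0\ne\varnothing$) gives $\FDP_{I_0}(\bar B^-)\le\alpha E^-_{I_0}$. Taking expectations and applying Steps 1–2 gives $\FDR(\bar B^-)\le\alpha$, and in fact the simultaneous bound \eqref{eq:simulfdr}. The case $I_0=\varnothing$ is trivial. Containment $\bar B^-\supseteq\BH_\beta(p)$ holds by construction of the superset search.
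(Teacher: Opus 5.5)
Your proposal is correct and follows essentially the same route as the paper: the budget from \cref{prop:negativebudget}, linearity with the nonnegative coefficient $1-q_{|S|}$, pointwise certification of $B^-$, and the closure constraint at $I_0$. Your separate Step~2 for $S=[m]$ is not needed as a new argument, since \cref{prop:negativebudget} already covers $|S|=m$ through the finite-dimensional Simes bound, which is the union-bound calculation you sketch.
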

\begin{proof}
By \cref{prop:negativebudget}, $\E E_S^-\le q_{|S|}+(1-q_{|S|})\E F_S\le1$. The first term certifies the baseline, and the actual-null constraint proves the FDR bound, exactly as in \cref{thm:anchor}.
\end{proof}
The cap relevant to an auxiliary score is now $1/\{\alpha(1-q_n)\}$ when $q_n<1$, but capping does not itself justify rescaling. No rescaling is used in the negative-dependence experiments.

The full-intersection refinement can be consequential. Take $m=2$, $\alpha=0.05$, $\beta=1/30$, and $p=(0.03,0.039)$. Then $B^-=\varnothing$, $q_1=1/2$ and $q_2=31/45$. With $t=0.04$ and the full indicator product $F_S=t^{-|S|}\prod_{i\in S}\ind\{p_i\le t\}$, the singleton constraints for rejecting both hypotheses have right sides $\alpha E_{\{i\}}^-=5/8>1/2$, and the full-intersection right side is $175/18>1$. Thus both hypotheses are certified although the corrected baseline is empty. This strict pointwise example is not a claim of a large detection gain in the simulation grid.

\subsection{Count-score certificates under negative dependence}\label{sec:negativecompute}
The product scores in this appendix are not additive, so \cref{thm:quadratic} is not invoked for them. A separate exact reduction is available. Order p-values increasingly and write the corrected baseline as $B^-=[b]$. For a candidate prefix $R=[r]\supseteq B^-$ and an intersection size $n$, a least-favorable intersection uses the smallest possible number $a=(n-m+b)_+$ of baseline discoveries. Replacing a baseline coordinate by a nonbaseline one both decreases the auxiliary evidence and weakens its baseline certificate. Among intersections with this $a$ and $j$ added discoveries, a least-favorable choice is
\[
 S_{n,j}=\{b-a+1,\ldots,b\}\cup\{r-j+1,\ldots,r\}
          \cup\{m-(n-a-j)+1,\ldots,m\},
\]
where empty ranges are omitted and
\[
 \max\{0,n-a-(m-r)\}\le j\le\min\{r-b,n-a\}.
\]
It suffices to check
\begin{equation}\label{eq:negativecompute}
 \min_j\left\{\frac{a}{b\vee1}
       +\alpha(1-q_n)F_{S_{n,j}}-\frac{a+j}{r}\right\}\ge0
 \quad(n=1,\ldots,m).
\end{equation}
All threshold counts are obtained from the ordered p-values. For fixed $n$, these counts are maxima of affine functions of $j$ on the feasible range. Normalized factorial-count moments and the product in \eqref{eq:negativeproduct} are increasing and discretely convex in the count. Consequently, the objective in \eqref{eq:negativecompute} is discretely convex in $j$, and its minimum can be found by binary search on forward differences.

With a fixed number $K$ of threshold/product components, a candidate costs $O(Km\log m)$ time after sorting, without subset enumeration. A prespecified finite collection of $M$ candidate sizes costs $O(MKm\log m)$; exhaustive search over prefix sizes has the corresponding $O(Km^2\log m)$ bound. The large-dimensional experiment checks a finite size grid and returns the largest passing candidate in that grid, not a claimed global maximizer of closure. The small-dimensional experiment searches every prefix. Every returned set is checked against all intersection constraints through \eqref{eq:negativecompute}, and the baseline is always an available output. Unlike the original BH anchor, the full-intersection constraint is checked even when $b=0$.

\subsection{Experiments: antithetic p-values and negative Gaussian blocks}\label{sec:sim-negative}
We compare the corrected baseline $B^-=\BH_\beta$, BY, an indicator-score anchor, and a product-score anchor. Both anchors use the sharpened budgets \eqref{eq:negativebudget}--\eqref{eq:negativefull} and the fixed threshold set $\mathcal T=\{0.01,0.05,0.2\}$. For $n=|S|$, their auxiliary e-values are respectively
\begin{equation}\label{eq:negativeexpscores}
 F_S^{\mathrm{ind}}=\frac13\sum_{t\in\mathcal T}\frac{N_S(t)}{nt},\qquad
 F_S^{\mathrm{prod}}=\frac13\sum_{t\in\mathcal T}
      \prod_{i\in S}\left(1-\frac1n+\frac{\ind\{p_i\le t\}}{nt}\right).
\end{equation}
These are direct instances of \eqref{eq:negativecount} and \eqref{eq:negativeproduct}; no Gaussian cap calibration, estimated covariance, or evaluation-data tuning is used. The dimension-dependent product parameter is the prespecified choice $\lambda_n=1/n$. A separate small exploratory check preceded the reported evaluation and is retained in the supplement.

\paragraph{Experiment 1: non-Gaussian antithetic pairs.}
With $m=100$, generate independent pairs $(U_j,1-U_j)$ and transform nonnull coordinates by $u\mapsto u^{1/\kappa}$. Use nonnull fractions $0.1,0.5,0.9$ and $\kappa\in\{0.05,0.10,0.20\}$. The null subvector satisfies \eqref{eq:weaknegative}. Each nonnull setting has 5,000 repetitions and the global null has 10,000, totaling 55,000 data sets. Here the corrected baseline is exactly BY, so the experiment isolates the value of the negative-dependence scores and budgets rather than a change of baseline. Improved closed BY is also evaluated.

At 90\% nonnulls, product-anchor powers are 86.34\%, 67.71\% and 32.43\% for the three increasing values of $\kappa$, compared with 77.80\%, 59.16\% and 30.49\% for the baseline. Improved closed BY gives 80.33\%, 63.99\% and 34.75\%, respectively. Thus the product anchor is better at the two stronger alternatives but worse at the weakest. At $\kappa=0.10$, its paired gain over its baseline is 8.56 percentage points (8.39--8.73). It makes no additions in the sparse settings and virtually none at 50\% nonnulls; the indicator anchor can be preferable there. The largest estimated FDR for either anchor is 1.00\%, attained at the global null. No empty-baseline augmentations occur in this grid, despite their theoretical possibility.

\begin{figure}[htbp]
\centering\includegraphics[width=.86\textwidth]{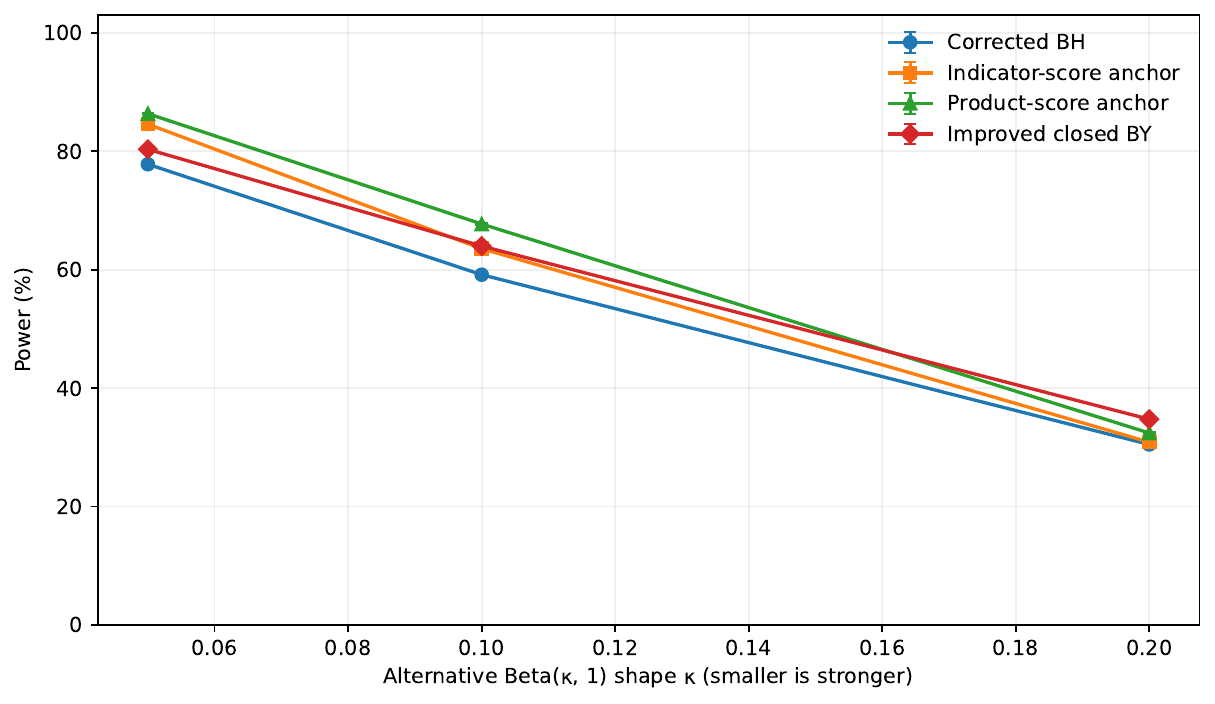}
\caption{Anchored-NDBH with $m=100$ antithetic p-values and 90\% nonnulls. The corrected BH baseline equals BY here. Smaller $\kappa$ means stronger alternatives; the gain over improved closed BY is not uniform.}\label{fig:negative-small}
\end{figure}
\FloatBarrier

\paragraph{Experiment 2: the non-harmonic branch at $m=5000$.}
Generate one-sided p-values from unit-variance Gaussian statistics, either independently or in independent blocks of size ten with every within-block correlation equal to $-0.1$. Nonnull means are positive with magnitude $2.5,3,3.5$, at fractions $0.1,0.5,0.9$. The nonpositive Gaussian covariances imply \eqref{eq:weaknegative} \citep{CRW2024}. Each nonnull setting has 1,000 repetitions; each of the two global nulls has 5,000, giving 28,000 data sets across 20 settings. Now $\beta\approx0.0060307$ exceeds BY's $\alpha/H_{5000}\approx0.005498$. Raw BH is only a diagnostic in the negative-block setting.

At 90\% nonnulls under negative blocks, the product anchor has powers 80.46\%, 91.25\% and 98.22\%, compared with 34.50\%, 61.26\% and 81.08\% for the corrected baseline (\cref{tab:negative-large,fig:negative-large}). At magnitude $3$, the paired gain is 29.99 points (29.95--30.04). The indicator anchor also gains power, but is substantially weaker at the lower signal magnitudes. Independent data give similar comparisons. The largest nonnull FDR estimate of the product anchor in this suite is 1.33\%, still below the nominal 5\% target. Sparse gains remain negligible; at 10\% nonnulls the product anchor adds discoveries in only 33 of 6,000 data sets. At 50\% nonnulls it adds discoveries in all 6,000 data sets, but the extent varies with signal strength. Neither anchor makes an empty-baseline augmentation in the evaluated large-dimensional settings.

\begin{table}[htbp]
\centering\small
\caption{Negative Gaussian blocks, $m=5000$, 90\% nonnulls: power (\%). All displayed procedures have their stated guarantee. The large-dimensional anchors search a finite grid of certified prefix supersets.}\label{tab:negative-large}
\begin{tabular}{@{}rrrrr@{}}
\toprule
Signal mean & BY & Corrected BH & Indicator anchor & Product anchor\\
\midrule
2.5 &32.91&34.50&39.40&80.46\\
3.0 &59.76&61.26&81.04&91.25\\
3.5 &80.11&81.08&96.83&98.22\\
\bottomrule
\end{tabular}
\end{table}

The comparisons are at a common nominal level, not matched realized FDR. For example, in the negative-block magnitude-$3$ dense setting the baseline's estimated FDR is about 0.06\%, while the product anchor's is about 0.61\%. The large power gains use some of that conservativeness; they do not prove that the product mixture is optimal. Canonical closed BH is not a valid negative-dependence comparator, and the supplied exact closed-BY implementation is limited to $m\le100$, so it is not extrapolated to this experiment.

\begin{figure}[htbp]
\centering\includegraphics[width=.86\textwidth]{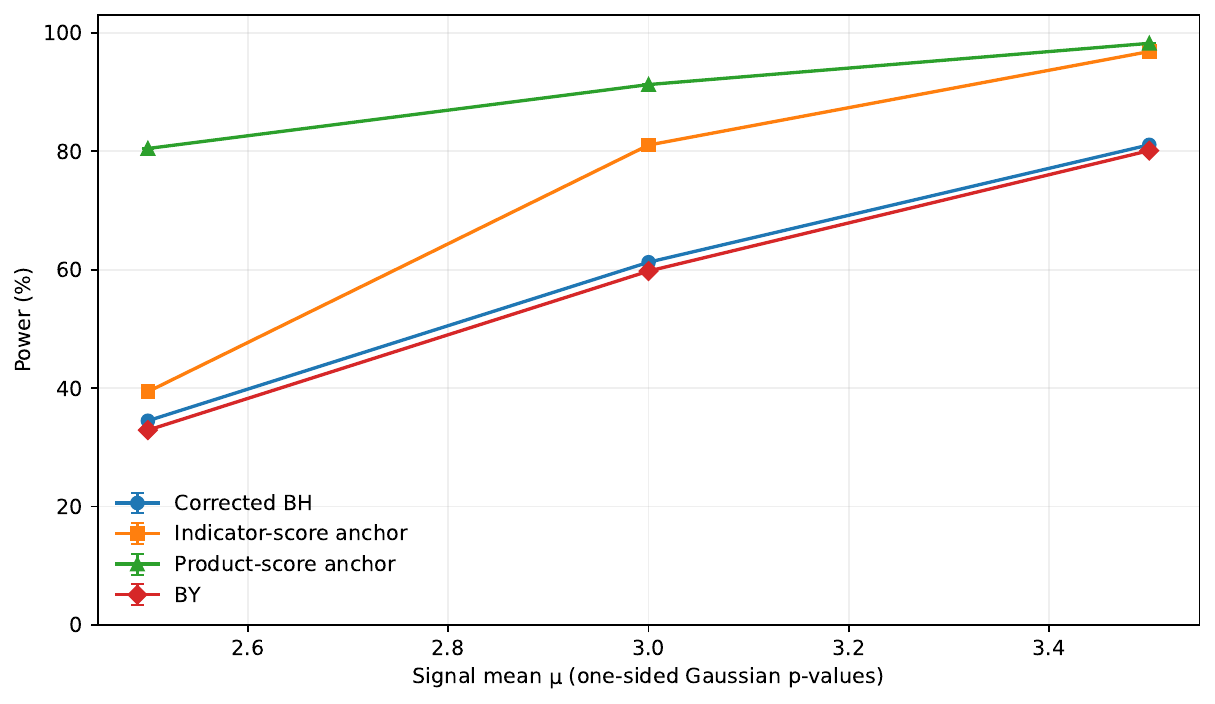}
\caption{Anchored-NDBH with $m=5000$, negative Gaussian blocks and 90\% nonnulls. Error bars are pointwise 95\% Monte Carlo intervals from 1,000 repetitions. The product score pools joint evidence without independent-null cap boosting.}\label{fig:negative-large}
\end{figure}
\FloatBarrier

\paragraph{Computation and checks.}
At $m=100$ we search all prefix supersets. At $m=5000$ we check at most 20 prespecified increment sizes, threshold cutoffs and an inexpensive necessary-condition upper bound; every candidate is tested using \eqref{eq:negativecompute}. This is not the additive quadratic algorithm and is not a claim to maximize over every rejection set. The certificate reduction agrees with exhaustive subset enumeration in 2,541 candidate checks across 280 problems of dimensions $2$--$9$. Baseline retention holds on every replicate. Budget evaluation is conservative to a numerical margin but not interval-arithmetic certified. All seeds, candidate grids, the separate exploratory check and replicate counts are supplied.

\FloatBarrier

\section{Anchoring shifted BH and its generalizations}\label{app:shifted}
This appendix consolidates the shifted-method extensions and their experiments. The published paper of \citet{SarkarZhang2025}, including its Appendix~A, is the basis for SBH-1 and SBH-2. The generalized and confidence-bound constructions are those of \citet{GhoshSarkar2025,GhoshSarkar2026}. The anchored e-collections below are deductions from their coordinatewise proofs, not theorems attributed to those papers. We distinguish the full weighted collection from an efficiently computable size-envelope version.

\subsection{Coordinate budgets and the published shifted procedures}
Suppose a baseline $A$ has deterministic coordinate bounds
\begin{equation}\label{eq:shiftcoordinate}
 \E\frac{\ind\{i\in A\}}{|A|\vee1}\le v_i\quad(i\in I_0),
 \qquad v_i\ge0,\qquad\sum_{i=1}^m v_i\le\alpha.
\end{equation}
Put $w_i=v_i/\alpha$ and $w(S)=\sum_{i\in S}w_i$. Then, construct
\begin{equation}\label{eq:weightedshiftanchor}
 E_S^A=\frac{\FDP_S(A)}\alpha+[1-w(S)]F_S.
\end{equation}
For any local e-values $F_S$, these are local e-values, since the two expectation budgets add to one. Every certified superset of $A$ retains all of its discoveries and controls FDR. The coefficient is generally not $1-|S|/m$.

Consider the published two-sided $z$ model $Z\sim N_m(\mu,\Sigma)$ with known positive definite correlation matrix. Define
\[
 \tau_i=\frac1{(\Sigma^{-1})_{ii}},\qquad
 G_a(u)=2\overline\Phi\!\left\{\sqrt a\,\Phi^{-1}(1-u/2)\right\},\quad a>0.
\]
SBH-1 applies step-up critical values $rc$, $r=1,\ldots,m$, to
\begin{equation}\label{eq:shiftedp}
 \widetilde p_i=G_{\tau_i}^{-1}(p_i)
 =2\overline\Phi(|Z_i|/\sqrt{\tau_i}),
 \qquad \sum_{i=1}^mG_{\tau_i}(c)=\alpha.
\end{equation}
Theorem~1(i) and its published proof give, coordinate by coordinate,
\begin{equation}\label{eq:sbhweights}
 v_i=G_{\tau_i}(c),\qquad w_i=G_{\tau_i}(c)/\alpha.
\end{equation}
To verify that no information is lost by reading the bound before summation, condition on $Z_{-i}$. Under $H_i$, $Z_i/\sqrt{\tau_i}$ is conditionally normal with variance one and a possibly nonzero mean. The conditional left-tail ratio
$\Prob(\widetilde p_i\le u\mid Z_{-i})/u$ is nonincreasing in $u$. In the leave-one-out FDR identity, replace the term at $rc$ divided by $r$ by the term at $c$. Averaging gives
$\E[\ind\{i\in A\}/(|A|\vee1)]\le\Prob(\widetilde p_i\le c)=G_{\tau_i}(c)$,
which is \eqref{eq:shiftcoordinate}. This is the published proof's reusable ingredient.

SBH-2 instead takes a common $s\in(0,\lambda_{\min}(\Sigma))$, applies $G_s^{-1}$ to every p-value, and uses $c=G_s^{-1}(\alpha/m)$. Theorem~1(ii) gives $v_i=\alpha/m$, so its anchor has the uniform residual $1-|S|/m$.
Our implementation follows the strict range in the published Definition~2, using $s=0.99\lambda_{\min}(\Sigma)$; under exact independence it uses the stated BH limit.\footnote{For positive equicorrelation, the displayed formula in the published Section~5.1 gives $\tau_i=(1-\rho)\{1+(m-1)\rho\}/\{1+(m-2)\rho\}>1-\rho=\lambda_{\min}$. Thus the equivalence asserted there between SBH-1 and SBH-2 cannot be used together with Definition~2's strict common-shift range. We use the formal definitions and the displayed conditional-variance formula separately, rather than imposing that equivalence.}

The same weighted argument applies to the published common-scale $t$ model. Replace $G_a$ by
$H_{a,\nu}(u)=2\overline T_\nu\{\sqrt a\,T_\nu^{-1}(1-u/2)\}$.
The scenario-2 proof in their Appendix~A retains the corresponding coordinate bounds. Auxiliary evidence must be valid for the $t$ model, for example a marginal p-to-e calibrator. No joint-Gaussian cap comparison is assumed for $t$ statistics.

\subsection{Generalized common shifts and auxiliary evidence}
For the generalized procedure of \citet{GhoshSarkar2025}, choose a prespecified common shift $s\in(0,1]$ and apply $G_s^{-1}$ to all p-values. For step-up critical values $ru/m$, the coordinate bounds are
\begin{equation}\label{eq:gsbhweights}
 v_i(s,u)=
 \begin{cases}
 G_s(u/m),&s\le\tau_i,\\[2pt]
 G_{\tau_i}\!\left\{G_{s/\tau_i}(u)/m\right\},&s>\tau_i.
 \end{cases}
\end{equation}
Calibrate $u$ by $\sum_i v_i(s,u)=\alpha$ and substitute $w_i=v_i(s,u)/\alpha$ into \eqref{eq:weightedshiftanchor}. Equation~\eqref{eq:gsbhweights} is the coordinatewise version of their Theorem~2 and Definition~1; the second branch uses their tail-probability minorant when the common shift exceeds a coordinate's conditional variance. Their later tail-dependence formulation retains these bounds \citep{GhoshSarkar2026}. A covariance-based choice of $s$ is allowed when the covariance is known; choosing it after inspecting the test statistics needs an additional validity argument.

Unboosted means of $h_t(Z_i)$ from \eqref{eq:gaussianhinge}, and the frozen mixtures in Appendix~\ref{app:calibration}, provide immediate local evidence. Their unit-null expectations use only marginal normality. Alternatively, applying Section~\ref{sec:coordinate}, a prespecified liberal shifted pilot $A_\gamma$ with positive coordinate bounds $v_i(\gamma)$ gives
\[
 x_i^{(\gamma)}=
 \frac{\ind\{i\in A_\gamma\}}{v_i(\gamma)(|A_\gamma|\vee1)},
 \qquad F_S=|S|^{-1}\sum_{i\in S}x_i^{(\gamma)}.
\]
These are calibrated using the pilot's contribution bounds, not PRDS. The shifted $\widetilde p_i$ themselves are generally not superuniform: under $H_i$ their distribution function is $G_{\tau_i}$, so they must not be passed to an ordinary p-to-e calibrator without correction.

A strict example illustrates the improvement without simulation. With $m=10$, $\alpha=0.05$, $\Sigma=0.5I+0.5\mathbf1\mathbf1^\top$, and $p=(10^{-8},\ldots,10^{-8},0.1)$, SBH-1 rejects nine. Its weights are all $0.1$, and the unboosted mean of $h_{1.5}$ certifies all ten. At the weakest singleton, $\alpha E_{\{10\}}\approx0.11121>1/10$; every proper-intersection constraint has a positive margin. The implementation verifies all $1,023$ nonempty intersections. No ordering relative to ordinary BH is inferred from this example.

\subsection{Independent covariance information and confidence failure}
For clarity, consider the standardized model
\[
 Z\sim N_m(\mu,\Sigma),\quad \Sigma_{ii}=1,\qquad
 D\sim W_m(N,\Sigma),\quad D\perp Z,\quad N\ge m.
\]
Only the off-diagonal covariance is unknown in this implementation; marginal variances are known. This does not solve unknown marginal scales by treating unstandardized observations as $z$ statistics.
Let $\nu=N-m+1$, choose $\epsilon_i>0$ with $\epsilon=\sum_i\epsilon_i<\alpha$, and put
\[
 L_i=\frac{1/(D^{-1})_{ii}}{\chi^2_{\nu,1-\epsilon_i}}.
\]
Then $\Prob(L_i>\tau_i)\le\epsilon_i$. The confidence-bound shifted BH (CBSBH) of \citet{GhoshSarkar2026} uses $G_{L_i}^{-1}(p_i)$ with critical values $rc(D)$, where
$\sum_iG_{L_i}\{c(D)\}=\alpha-\epsilon$.
Let $A_D$ be its final rejection set and $v_i(D)=G_{L_i}\{c(D)\}$.
Conditional on $D$, the coordinate bound
\[
 \E\left[\frac{\ind\{i\in A_D\}}{|A_D|\vee1}\given D\right]
 \le v_i(D)
\]
holds on $\{L_i\le\tau_i\}$ for a true-null $i$. This uses the same left-tail-ratio argument; other coordinates' fixed positive shifts do not need to satisfy their confidence bounds for the $i$th contribution inequality.

\begin{proposition}[Confidence-aware shifted anchor]\label[proposition]{prop:confidenceanchor}
Let $F_S$ satisfy $\E(F_S\mid D)\le1$ under $H_S$, and set
$w_i(D)=\{v_i(D)+\epsilon_i\}/\alpha$. Then
\begin{equation}\label{eq:confidenceanchor}
 E_S^{\rm conf}=\min\left\{
 \frac{\FDP_S(A_D)}\alpha+
 \left[1-\sum_{i\in S}w_i(D)\right]F_S,\;\frac1\alpha\right\}
\end{equation}
is a local e-collection. Its certified supersets retain every CBSBH discovery and control FDR at $\alpha$.
\end{proposition}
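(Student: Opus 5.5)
The plan is to prove validity by splitting on the confidence event for the coordinates of $S$. Fix a true intersection $S$ and let $\mathcal E_S=\bigcap_{i\in S}\{L_i\le\tau_i\}$. On $\mathcal E_S^c$ we only use the cap: $E_S^{\rm conf}\le1/\alpha$. Also $\Prob(\mathcal E_S^c)\le\sum_{i\in S}\epsilon_i$ by a union bound with $\Prob(L_i>\tau_i)\le\epsilon_i$. This is exactly why the cap at $1/\alpha$ is built into \eqref{eq:confidenceanchor} and why the $\epsilon_i$ appear inside $w_i(D)$.

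On $\mathcal E_S$, drop the minimum. Work conditionally on $D$, which is independent of $Z$; $\mathcal E_S$ is $D$-measurable. For each $i\in S$ the stated conditional coordinate bound holds on $\{L_i\le\tau_i\}\supseteq\mathcal E_S$. Summing gives
\[
 \E\left[\FDP_S(A_D)/\alpha\given D\right]\le\frac1\alpha\sum_{i\in S}v_i(D).
\]
Together with $\E(F_S\mid D)\le1$, this yields
\[
 \E\bigl[E_S^{\rm conf}\ind_{\mathcal E_S}\given D\bigr]\le\ind_{\mathcal E_S}\Bigl\{\frac{\sum_{i\in S}v_i(D)}\alpha+1-\sum_{i\in S}w_i(D)\Bigr\}=\ind_{\mathcal E_S}\Bigl(1-\frac{\sum_{i\in S}\epsilon_i}\alpha\Bigr).
\]
Here I must ensure the coefficient $1-\sum_{i\in S}w_i(D)$ is nonnegative, or else replace it by its positive part. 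Since $\sum_i v_i(D)=\alpha-\epsilon$, we have $\sum_{i\in S}w_i(D)\le\sum_i(v_i+\epsilon_i)/\alpha=1$, so it is nonnegative. Taking expectations over $D$ and adding the two pieces gives
\[
 \E E_S^{\rm conf}\le\Bigl(1-\frac{\sum_{i\in S}\epsilon_i}{\alpha}\Bigr)\Prob(\mathcal E_S)+\frac1\alpha\Prob(\mathcal E_S^c)\le 1-\frac{\sum_{S}\epsilon_i}{\alpha}+\frac{\sum_{S}\epsilon_i}{\alpha}=1 .
\]
The last step uses $1-\sum_S\epsilon_i/\alpha\ge0$ and $\Prob(\mathcal E_S)\le1$. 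This establishes local validity for every $S$, and for $S=I_0$ in particular.

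For retention, check pointwise that $\FDP_S(A_D)\le\alpha E_S^{\rm conf}$. Both arguments of the minimum dominate $\FDP_S(A_D)/\alpha$: the first because the added term is nonnegative, the second because $\FDP\le1$. So $A_D\in\cR_\alpha(E^{\rm conf})$. Every certified superset then contains $A_D$, and the actual-null constraint gives $\FDP_{I_0}(\bar A)\le\alpha E^{\rm conf}_{I_0}$. Taking expectations yields FDR control, exactly as in the proof of \cref{thm:anchor}; the case $I_0=\varnothing$ is trivial.

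The main obstacle is the conditional coordinate bound on $\{L_i\le\tau_i\}$, which I take as given from the preceding paragraph. The delicate point is that it must hold conditionally on $D$ while other coordinates' shifts $L_j$ may violate their bounds. I rely on the stated left-tail-ratio argument, which needs only that the $i$th shift is at most the true conditional variance. I also need to check that the union bound is applied only over $i\in S$, not all $m$, so that the $\epsilon$-accounting inside $w(S)$ matches exactly.
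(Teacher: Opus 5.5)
Your proposal is correct and follows the paper's proof essentially step for step. You condition on $D$ and bound the conditional expectation by $1-\epsilon_S/\alpha$ on $\bigcap_{i\in S}\{L_i\le\tau_i\}$; on the complement you use the cap $1/\alpha$ together with the union bound $\Prob\bigl(\bigcup_{i\in S}\{L_i>\tau_i\}\bigr)\le\epsilon_S$; and you note that $\FDP_S(A_D)\le1$ keeps the cap from removing the baseline certificate. You also make explicit two details the paper leaves implicit: $1-\sum_{i\in S}w_i(D)\ge0$, which follows from $\sum_i v_i(D)=\alpha-\epsilon$, and $1-\epsilon_S/\alpha\ge0$, which follows from $\epsilon<\alpha$.
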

\begin{proof}
Put $\epsilon_S=\sum_{i\in S}\epsilon_i$ and
$\mathcal G_S=\bigcap_{i\in S}\{L_i\le\tau_i\}$. On $\mathcal G_S$, the conditional expectation is at most
$\sum_{i\in S}v_i(D)/\alpha+1-\sum_{i\in S}\{v_i(D)+\epsilon_i\}/\alpha
=1-\epsilon_S/\alpha$.
Everywhere the cap bounds the value by $1/\alpha$, while
$\Prob(\mathcal G_S^c)\le\epsilon_S$. Thus its expectation is at most one. The cap cannot remove baseline certification because $\FDP_S(A_D)\le1$. Apply e-closure.
\end{proof}
The confidence-failure allowance is charged explicitly; it cannot be spent again on unbounded evidence. Fixed marginal hinges are conditionally valid because $D$ is independent of $Z$. We use no independent-null cap boosting for these random weighted profiles.

\subsection{An efficient size-envelope version}\label{app:shiftenvelope}
The full weighted collection depends on the identities in $S$, and a generic quadratic-time claim for it would be unwarranted. A useful conservative version avoids that problem. Order the coordinate budgets as $w_{[1]}\ge\cdots\ge w_{[m]}$ and define
\begin{equation}\label{eq:weightenvelope}
 Q_n=\sum_{j=1}^n w_{[j]},\qquad d_n=1-Q_n.
\end{equation}
Replace $1-w(S)$ in \eqref{eq:weightedshiftanchor} by $d_{|S|}$. Since $Q_{|S|}\ge w(S)$, the new values remain valid and retain the baseline. They can be smaller than the full weighted anchor, but coincide with it when all weights are equal. In \eqref{eq:confidenceanchor}, use the same envelope conditional on $D$ and retain the total cap. On $\mathcal G_S$, the expectation is still at most $1-\epsilon_S/\alpha$, so the confidence proof is unchanged.

For unboosted common-order score means, the total cap may be omitted during certification because the left side of every constraint is at most one. Apply \cref{prop:threelist} with the coefficient $\alpha d_n/n$. This gives an exact search over all certified supersets for the \emph{size-envelope procedure} in $O(m^2\log m)$ time with finitely many score profiles, plus the baseline and weight computation. The word ``exact'' does not claim equivalence to the larger identity-specific weighted collection. Common-shift baselines are prefixes of the marginal p-value order; since $d_n/n$ is nonincreasing, their envelope anchors also admit the quadratic scan in Appendix~\ref{sec:quadratic}.

This supplies a practical shortcut for all the shifted families considered here, at an explicit and potentially material price when their weights are heterogeneous. It is not necessary to assume identical conditional variances to obtain a computable improvement. Our experiments use the size envelope throughout and report its no-gain settings too.

\subsection{Simulations for the shifted families}\label{app:shiftsimulations}
Two independent evaluation suites use the published definitions and the weighted size-envelope algorithm above. They add 73,000 data sets across 37 settings. As in Section~\ref{sec:sim-dbh}, the two auxiliary choices are the previously frozen mixture and $h_{1.5}$; both are unboosted, and neither is chosen using the evaluation outcomes. All tests are two-sided with $\alpha=0.05$, known unit marginal variances and $m=40$. These dimensions and covariance types are close to the multivariate-mean experiment in \citet[Section~6.1]{SarkarZhang2025}, but this is a new experiment, not a replication of its figures. In particular, we add a dense-signal regime to test the residual budget.

\paragraph{Experiment 1: known covariance.}
Compare SBH-1, SBH-2 and generalized shifted BH with $s=m^{-1}\sum_i\tau_i$ (GSBH-mean), and both anchors of each. The common-shift choice is a covariance-based candidate discussed by \citet{GhoshSarkar2025}. Covariances are equicorrelation $0.5$; five independent size-eight blocks with correlations $0.2,0.4,0.6,0.8,0.5$; and AR(1) with parameter $-0.5$. Nonnull fractions are $0.2$ and $0.9$, with balanced mean magnitudes $2.5,3,3.5$. Each of 18 nonnull settings uses 1,500 repetitions; each of three global nulls uses 4,000: 39,000 data sets in total. The complete results also report diagnostic ordinary BH and valid BY.

\begin{table}[htbp]
\centering\small
\caption{Shifted baselines and their size-envelope anchors: power (percent) at 90\% nonnulls and mean magnitude $3$. Each row uses 1,500 repetitions. Covariance and all score choices are fixed independently of the testing outcomes.}\label{tab:shiftpower}
\begin{tabular}{@{}llrrr@{}}
\toprule
Covariance & Baseline & Baseline power & Mixture anchor & $h_{1.5}$ anchor\\
\midrule
Equicorrelation $0.5$ & SBH-1 & 61.38 & 67.09 & 77.11 \\
Equicorrelation $0.5$ & SBH-2 & 60.53 & 66.98 & 76.40 \\
Equicorrelation $0.5$ & GSBH-mean & 61.38 & 67.09 & 77.11 \\
Heterogeneous blocks & SBH-1 & 23.13 & 23.13 & 23.13 \\
Heterogeneous blocks & SBH-2 & 48.11 & 63.68 & 62.43 \\
Heterogeneous blocks & GSBH-mean & 56.51 & 61.83 & 64.52 \\
Negative AR(1) & SBH-1 & 64.48 & 67.06 & 74.84 \\
Negative AR(1) & SBH-2 & 53.11 & 64.39 & 68.16 \\
Negative AR(1) & GSBH-mean & 64.66 & 66.99 & 78.42 \\
\bottomrule
\end{tabular}
\end{table}

For equicorrelation $0.5$ and mean magnitude $3$, SBH-1 improves from 61.38\% to 77.11\% with $h_{1.5}$, a paired gain of 15.73 points (15.33--16.12). The mixture gives 67.09\%. SBH-2 also improves, from 60.53\% to 76.40\%. With equal conditional variances, GSBH-mean equals SBH-1; this equality does not identify either with the strict-range SBH-2 rule. At magnitude $3.5$, the SBH-1/$h_{1.5}$ comparison is 79.36\% versus 93.80\%.

Heterogeneous budgets reveal the cost of the envelope: the SBH-1 anchors add nothing throughout the heterogeneous-block nonnull grid. In the same dense covariance setting at magnitude $3$, common-shift SBH-2 improves from 48.11\% to 63.68\% with the mixture; GSBH-mean improves from 56.51\% to 64.52\% with $h_{1.5}$. Thus practical improvements are possible without equal weights, but a generic efficient envelope can leave substantial evidence unused. Sparse-setting power gains are absent or negligible in this grid. The maximum global-null estimated FDR among these baselines and anchors is 4.40\%; improvements in dense settings typically spend part of substantial baseline conservativeness. No arbitrary-sign validity is asserted for the diagnostic BH comparator.

\paragraph{Experiment 2: independent covariance information.}
Generate $D\sim W_m(N,\Sigma)$ independently for each data set, with $N\in\{200,1000\}$, known unit marginal variances, and either equicorrelation $0.5$ or the heterogeneous block matrix. Use CBSBH with $\epsilon=0.005$, equal $\epsilon_i$, and lower bounds as in \cref{prop:confidenceanchor}. The dense grid has 90\% nonnulls and mean magnitudes $2.5,3,3.5$, with 1,500 repetitions per setting. Four corresponding global nulls use 4,000 repetitions: 16 settings and 34,000 data sets. The oracle comparator is SBH-1 supplied with the true covariance, not an optimum over every known-covariance procedure. Both anchors recompute the confidence-aware size envelope from the realized $D$ and retain the total cap.

For equicorrelation with $N=1000$ and magnitude $3$, CBSBH has power 56.08\%, its mixture anchor 64.45\%, and its $h_{1.5}$ anchor 68.21\%; oracle SBH-1 has 61.16\%. The $h_{1.5}$ gain is 12.13 points (11.78--12.48). This can exceed the \emph{unanchored} oracle baseline because it additionally uses the nonnull-proportion budget; it does not show that less covariance information is intrinsically beneficial. With $N=200$, the corresponding CBSBH, mixture, and $h_{1.5}$ powers are 49.95\%, 59.04\%, and 55.42\%. The ordering of score choices is not uniform.

\begin{figure}[htbp]
\centering\includegraphics[width=.86\linewidth]{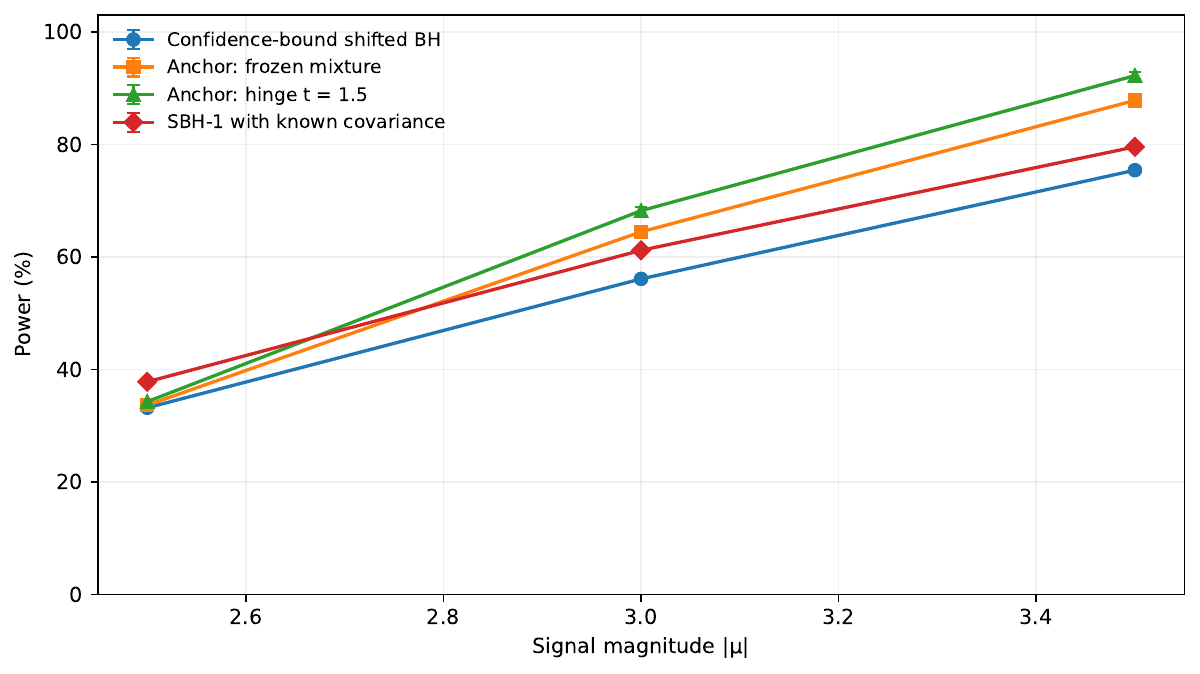}
\caption{Confidence-aware shifted anchors with independent $W_{40}(1000,\Sigma)$ information, equicorrelation $0.5$, and 90\% nonnulls. The oracle curve is unanchored SBH-1 with known covariance. Each setting has 1,500 repetitions; intervals are pointwise 95\% Monte Carlo intervals.}\label{fig:shiftconfidence}
\end{figure}

In heterogeneous blocks with $N=1000$ and magnitude $3.5$, CBSBH improves from 34.03\% to 58.22\% with the mixture, while the single-hinge anchor remains near 34.04\%. All such no-gain settings are retained in the supplied results. The largest global-null estimate for CBSBH and its anchors is 3.10\%; the largest for oracle SBH-1 in this suite is 3.425\%. These are finite Monte Carlo checks, not proofs of the confidence guarantee. The simulations use the original unknown-correlation, known-marginal-scale model; the common-scale $t$ extension above is theoretical and is not claimed to have been simulated here.

\paragraph{Algorithmic scope.}
Every shifted anchor reported here solves its \emph{size-envelope} closure exactly over certified supersets using \cref{prop:threelist}. None substitutes a Monte Carlo estimate of a local expectation. Weight/root calculations and Wishart lower bounds are double-precision, with downward residual margins; exact analytical validity is distinct from interval-certified numerics. All 73,000 realized baselines were retained. We do not claim that these timings or rejection counts apply to the larger identity-specific weighted collection, whose efficient exact optimization remains a separate question.

\section{Anchored IndBH under graphical dependence}
\label{app:anchored-indbh}

 \cite{NguyenFithian2025} introduce
IndBH for FDR control under a known dependency graph
$D$, without specifying the dependence along its edges.
Their key result is that any $D$-adapted procedure $A$
satisfies
\begin{equation}
\label{eq:indbh-coordinatewise}
\mathbb E\left[
\frac{\mathbf 1\{i\in A\}}{|A|\vee1}
\right]\le\frac{\alpha}{m},
\qquad i\in I_0,
\end{equation}
provided the null p-values are superuniform and $D$
is a dependency graph. Recall that $D$-adaptivity
requires self-consistency, coordinatewise monotonicity,
and neighbor-blindness. In particular,
\eqref{eq:indbh-coordinatewise} holds for IndBH and
its iterated improvements, including SUD.

This coordinatewise inequality allows us to improve
IndBH through anchoring, without requiring the
augmented procedure itself to be $D$-adapted.

\paragraph{Anchored IndBH.}
Let $B=\operatorname{IndBH}_{\alpha}(p)$ and
$b=|B|$. For arbitrary valid local e-values $F_S$
for the intersection hypotheses
$H_S=\bigcap_{i\in S}H_i$, define
\begin{equation}
\label{eq:anchored-indbh}
E_S=
\frac{|B\cap S|}{\alpha(b\vee1)}
+\left(1-\frac{|S|}{m}\right)F_S,
\qquad \varnothing\ne S\subseteq[m].
\end{equation}

\begin{proposition}[Anchored IndBH]
\label{prop:anchored-indbh}
Under the dependency-graph assumptions of
Nguyen and Fithian, the quantities
\eqref{eq:anchored-indbh} are valid local e-values.
Any measurable certified superset $\bar B\supseteq B$
satisfying
\[
\frac{|\bar B\cap S|}{|\bar B|\vee1}
\le\alpha E_S
\qquad\text{for all nonempty }S\subseteq[m]
\]
controls FDR at level $\alpha$.
In particular, choosing a largest certified superset
gives
\[
\bar B\supseteq\operatorname{IndBH}_{\alpha}(p),
\qquad
\operatorname{FDR}(\bar B)\le\alpha.
\]
\end{proposition}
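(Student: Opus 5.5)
The plan is to repeat the proof of \cref{thm:anchor}, with the PRDS contribution bound replaced by the dependency-graph inequality \eqref{eq:indbh-coordinatewise} of Nguyen and Fithian. The argument has three parts: validity of the local e-values, certification of the baseline $B$, and FDR control from the constraint at the true null set. This is the general coordinatewise principle of Section~\ref{sec:coordinate}, specialized to $v_i=\alpha/m$, and it has the same form as \cref{cor:by} and \cref{thm:dbhanchor}.

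\emph{Validity.} Fix a nonempty $S$ and suppose all hypotheses in $S$ are true, so $S\subseteq I_0$ for the data-generating law. Since IndBH is $D$-adapted, \eqref{eq:indbh-coordinatewise} holds for each $i\in S$. Summing over $S$ by linearity gives
\[
 \E\frac{|B\cap S|}{\alpha(b\vee1)}=\frac1\alpha\sum_{i\in S}\E\frac{\ind\{i\in B\}}{b\vee1}\le\frac{|S|}{m}.
\]
The second term has expectation at most $(1-|S|/m)\E F_S\le 1-|S|/m$, because $F_S$ is a valid e-value for $H_S$ and the coefficient is deterministic and nonnegative. Hence $\E E_S\le1$. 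For $S=[m]$ the auxiliary term vanishes. No independence between $B$ and $F_S$ is used, and the validity of $F_S$ under the graphical model is exactly what is assumed.

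\emph{Certification and FDR.} Pointwise, $\FDP_S(B)=|B\cap S|/(b\vee1)\le\alpha E_S$ for every $S$, since the added term is nonnegative. So $B$ is certified and a largest certified superset $\bar B\supseteq B$ exists; I fix a measurable tie rule. For any measurable certified $\bar B$ with $I_0\ne\varnothing$, the constraint at $S=I_0$ gives $\FDP_{I_0}(\bar B)\le\alpha E_{I_0}$. Taking expectations yields $\FDR(\bar B)\le\alpha$. When $I_0=\varnothing$ the FDP is zero. As in \eqref{eq:simulfdr}, the same bound holds for the maximum over all certified sets.

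\emph{Main obstacle.} The only delicate point is applying \eqref{eq:indbh-coordinatewise} to the baseline IndBH rather than to $\bar B$. That inequality is a property of $D$-adapted procedures, and $\bar B$ need not be one. The plan is to invoke the inequality only for $B$, which is IndBH or an iterated $D$-adapted refinement such as SUD, and to let e-closure alone carry the guarantee for $\bar B$. In checking this, I would confirm that Nguyen and Fithian's bound is per coordinate and not only summed over nulls; I believe their proof is per coordinate.
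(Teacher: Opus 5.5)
Your proposal is correct and matches the paper's proof step for step. You sum the coordinatewise bound \eqref{eq:indbh-coordinatewise} over $S$ to get $\E E_S\le |S|/m+(1-|S|/m)\E F_S\le1$, certify $B$ pointwise through the first term, and evaluate the closure constraint at $I_0$. Your observation that the inequality is needed only for the $D$-adapted baseline, and not for $\bar B$, is also the paper's point that the augmented procedure may leave the $D$-adapted class.
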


\begin{proof}
For every true-null intersection $S$, summing
\eqref{eq:indbh-coordinatewise} gives
\[
\mathbb E[E_S]
\le\frac{|S|}{m}
+\left(1-\frac{|S|}{m}\right)\mathbb E[F_S]
\le1.
\]
Moreover, pointwise,
\[
\frac{|B\cap S|}{b\vee1}\le\alpha E_S,
\]
so $B$ itself is certified. Evaluating the closure
inequality at the actual null set $I_0$ and taking
expectations proves the FDR guarantee.
\end{proof}

\paragraph{A liberal IndBH pilot.}
A particularly simple choice of $F_S$ requires only
one additional run of IndBH. Fix
$\alpha<\beta<1$, let
\[
C=\operatorname{IndBH}_{\beta}(p),
\qquad c=|C|,
\]
and note that $B\subseteq C$. Applying
\eqref{eq:indbh-coordinatewise} at level $\beta$
shows that
\[
X_i^{(\beta)}
=\frac{m\mathbf 1\{i\in C\}}{\beta(c\vee1)}
\]
is an individual e-value under $H_i$. Consequently,
\[
F_S^{(\beta)}
=\frac{m|C\cap S|}
{\beta(c\vee1)|S|}
\]
is a valid local e-value. Our pilot-based anchor is
therefore
\begin{equation}
\label{eq:anchored-indbh-pilot}
E_S^{\mathrm{pilot}}
=
\frac{|B\cap S|}{\alpha(b\vee1)}
+
\left(1-\frac{|S|}{m}\right)
\frac{m|C\cap S|}
{\beta(c\vee1)|S|}.
\end{equation}
This construction is valid for arbitrary dependence
compatible with the known graph, and requires no
additional distributional or covariance assumptions.

\paragraph{An efficient two-count algorithm.}
The full e-closure of
\eqref{eq:anchored-indbh-pilot} can be computed
without enumerating intersections.

Suppose $0<b<c$, and consider a candidate
\[
R=B\cup Q,\qquad
Q\subseteq C\setminus B,\qquad |Q|=q\ge1.
\]
Any certified strict superset of $B$ must be contained
in $C$, since $E_{\{i\}}^{\mathrm{pilot}}=0$
for $i\notin C$. The following result characterizes
the admissible number of additions.

\begin{proposition}[Two-count characterization]
\label{prop:indbh-two-count}
The candidate $R=B\cup Q$ is certified by
\eqref{eq:anchored-indbh-pilot} if and only if
\begin{align}
\alpha(b+q)(c-q)
&\ge \beta c(q+m-c),
\label{eq:indbh-pilot-cond1}\\
\frac{q}{b+q}
&\le
\frac{\alpha b(c-b)}
{\beta c(m-b)}.
\label{eq:indbh-pilot-cond2}
\end{align}
These conditions depend on $Q$ only through $q$.
Both become harder as $q$ increases, so the largest
feasible $q$ can be found by binary search.
\end{proposition}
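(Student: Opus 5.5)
The plan is to mirror the proof of \cref{thm:pilot}. The only structural input there was that scores are piecewise constant on three classes: baseline, additions, and everything outside them. First I will check that this structure survives when $B$ and $C$ are arbitrary IndBH masks rather than prefixes. Write $d=m-c$ and, for $n=|S|<m$,
\[
 u_n=\frac{\alpha(m-n)}{\beta cn}.
\]
Then $\alpha(1-n/m)F^{(\beta)}_S$ contributes $u_n$ per element of $C\cap S$ and zero otherwise. Certification of $R$ at $S$ is equivalent to nonnegativity of $\sum_{i\in S}c_i(n,R)$, where the adjusted scores are
\begin{itemize}
 \item $u_n+1/b-1/r$ on $B$,
 \item $u_n-1/r$ on $Q$,
 \item $u_n$ on $C\setminus R$,
 \item $0$ on $C^c$,
\end{itemize}
with $r=b+q$. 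These values depend on $Q$ only through $q$, and the sizes of the classes are $b$, $q$, $c-b-q$ and $d$. This already shows that certification depends only on $q$. It also shows that for each $n$ it suffices to check the $n$ smallest adjusted scores, as in the smallest-sum reduction of Appendix~\ref{sec:computation}. The full intersection is automatic, since $E_{[m]}=1/\alpha$ when $b>0$.

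Next I order the classes for each $n$. Addition scores $u_n-1/r$ are the smallest unless $u_n\ge 1/r$, in which case the outside-$C$ zeros come first. Remaining elements of $C\setminus R$ come next, and baseline elements come last. I then split by the size $n$ of the least favorable intersection.
\begin{enumerate}
 \item Consider $S=Q\cup C^c$, with $n_1=d+q$. The constraint is $q(u_{n_1}-1/r)\ge0$, that is $u_{n_1}\ge1/r$. Clearing denominators gives \eqref{eq:indbh-pilot-cond1}. This constraint is necessary. Since $u_n$ decreases in $n$, it also makes all additions' scores nonnegative for $n\le n_1$, which handles the small-$n$ range.
 \item For $n_1\le n\le n_2=m-b$, the minimal sum is
 \[
  u_n(n-d)-\frac qr=\frac{\alpha}{\beta c}\Bigl(m+d-n-\frac{md}{n}\Bigr)-\frac qr.
 \]
 This is concave in $n$, so its minimum on the interval sits at an endpoint. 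The right endpoint $S=B^c$ gives $u_{n_2}(c-b)\ge q/r$, which is exactly \eqref{eq:indbh-pilot-cond2}.
 \item For $n>n_2$, each baseline element added contributes $1/b-1/r\ge0$ plus the linear-in-$n$ change in $u_n(n-d)$. The sum stays concave, and its endpoint at $n=m$ equals the trivially satisfied full-intersection value.
\end{enumerate}
Together these give sufficiency, and necessity comes from the two exhibited intersections.

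Finally, for monotonicity in $q$, I will expand \eqref{eq:indbh-pilot-cond1} as
\[
 \alpha(b+q)(c-q)-\beta c(q+m-c).
\]
Its derivative in $q$ is $\alpha(c-b-2q)-\beta c$, which is negative because $\alpha(c-b)<\beta c$ when $\beta>\alpha$. The left side $q/(b+q)$ of \eqref{eq:indbh-pilot-cond2} is increasing in $q$. So the feasible set is an initial segment $\{0,\dots,q^*\}$, and binary search applies.

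The main obstacle is the concavity and endpoint bookkeeping across the regime $n>n_2$ and at $n=m$, where the second term is omitted. I expect it to transfer verbatim from \cref{thm:pilot} once I verify that the class-based ordering never uses the prefix property of $B$ or $C$. The key requirement there is $B\subseteq C$, which holds because IndBH is monotone in its level.
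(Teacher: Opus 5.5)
Your proof is correct, but it follows the paper's proof of \cref{thm:pilot} rather than the argument the paper gives for this proposition.

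\textbf{The paper's route.} It parametrizes an intersection by its counts $(a,x,y,z)$ in $B$, $Q$, $C\setminus R$ and $C^c$. It notes that the constraint is nonincreasing in $z$, so $z=m-c$ is least favorable. What remains is linear in $(a,x)$ plus a positive multiple of the concave function $u(c-u)/(u+m-c)$ of $u=a+x+y$. A concave function on the box $[0,b]\times[0,q]\times[0,c-b-q]$ is minimized at a vertex, and only $(0,q,0)$ and $(0,q,c-b-q)$ are not automatically satisfied.

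\textbf{Your route.} You fix $n$, take the greedy smallest-sum intersection over the four score classes, and study the resulting one-parameter function of $n$ on three ranges.

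\textbf{What each buys.} The corner argument avoids ordering the classes, including the sign change of $u_n-1/r$ relative to $0$. It also avoids the range bookkeeping at $n_1$, $n_2$ and $m$. Its only technicality is a continuous extension at $u=0$ when $c=m$. Your version needs no such extension, since your formula is used only for $n\ge n_1\ge1$. It exhibits the binding intersections $Q\cup C^c$ and $B^c$ explicitly, and it plugs directly into the smallest-sum machinery of Appendix~\ref{sec:computation}.

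\textbf{The step you flagged as the main obstacle is routine.} On $n_2\le n\le m$ the minimal sum is $u_n(n-d)-q/r+(n-n_2)(1/b-1/r)$. This is concave because it is the same concave term plus a linear one; the term $u_n(n-d)$ itself is not linear in $n$, so phrase it that way. Since $u_m=0$ coincides with the omitted auxiliary term at $S=[m]$, the value at $n=m$ is $-q/r+b(1/b-1/r)=0$. No separate treatment of the full intersection is needed there.
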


\begin{proof}
Put $r=b+q$, and partition the indices into
$B$, $Q$, $C\setminus R$, and $C^c$.
For an intersection $S$, write
\[
a=|S\cap B|,\quad x=|S\cap Q|,\quad
y=|S\cap(C\setminus R)|,\quad
z=|S\cap C^c|,
\]
and $n=a+x+y+z$. Its closure constraint is
equivalent to
\[
\frac{aq}{br}-\frac{x}{r}
+
\frac{\alpha}{\beta c}
\frac{(m-n)(a+x+y)}{n}\ge0.
\]
For fixed $a,x,y$, the left side is nonincreasing
in $z$, so its minimum occurs at $z=m-c$.
Writing $u=a+x+y$, the remaining nonlinear term
is proportional to
\[
f(u)=\frac{u(c-u)}{u+m-c}.
\]
This function is concave in $u$; when $c=m$,
use its continuous extension at $u=0$,
corresponding to the excluded empty intersection.

Thus the minimum over the feasible box
\[
0\le a\le b,\quad 0\le x\le q,\quad
0\le y\le c-b-q
\]
occurs at a corner. All corner constraints are
automatic except those with
$(a,x,y)=(0,q,0)$ and $(0,q,c-b-q)$.
These give \eqref{eq:indbh-pilot-cond1} and
\eqref{eq:indbh-pilot-cond2}, respectively.

Finally, the left side minus the right side of
\eqref{eq:indbh-pilot-cond1} is strictly decreasing
in $q$ because $\beta>\alpha$; the left side of
\eqref{eq:indbh-pilot-cond2} is increasing.
Hence feasibility is monotone in $q$.
\end{proof}

After computing IndBH at levels $\alpha$ and $\beta$,
the optimal number of additions is obtained in
$O(\log m)$ time. Any $q$ hypotheses in $C\setminus B$
may be added; choosing those with the smallest
p-values requires at most $O(m\log m)$ additional
time. When $b=0$ or $c=b$, the procedure returns $B$.
Thus the computational cost is dominated by the two
IndBH runs, rather than the e-closure itself.

\paragraph{Strict improvement beyond SUD.}
Consider $m=10$, $\alpha=0.05$, $\beta=0.10$, and a
dependency graph with one edge between vertices
9 and 10, all other vertices being isolated. Let
\[
p_1=\cdots=p_8=10^{-4},
\qquad p_9=p_{10}=0.06.
\]
At level $\alpha$, IndBH rejects the first eight
hypotheses, while at level $\beta$, its union over
independent sets rejects all ten:
\[
B=[8],\qquad C=[10].
\]
Taking $q=2$, conditions
\eqref{eq:indbh-pilot-cond1}--\eqref{eq:indbh-pilot-cond2}
become
\[
4\ge2,\qquad \frac15\le\frac25.
\]
Hence Anchored IndBH rejects all ten hypotheses
while controlling FDR at level $0.05$.

Interestingly, the optimal $D$-adapted procedure
SUD \citep{NguyenFithian2025} also rejects only
the first eight hypotheses on this input:
$D$-adaptivity requires self-consistency, whereas
$p_9=p_{10}=0.06>\alpha$.
Therefore anchoring can strictly outperform SUD
on some inputs, without contradicting SUD's
optimality within the class of $D$-adapted
procedures. 

\paragraph{Discussion.}
The improvement exploits a budget that remains
available after IndBH's coordinatewise FDR
guarantee has been established. Its mechanism
is distinct from the gap-chasing iterations of
Nguyen and Fithian, which improve IndBH while
preserving $D$-adaptivity. Anchoring may step
outside that class, while retaining finite-sample
FDR control under exactly the same dependency
graph assumptions. The pilot construction is
particularly attractive because its closure
requires only two scalar inequalities, even when
the graph itself is large or complicated.

\section{Anchored support line with an adaptive pilot}
\label{app:anchored-sl}

This section illustrates anchoring for boundary error rather than
ordinary FDR. For a rejection set $R$, let $J(R)$ be the index of its
largest p-value, with a fixed tie rule and $J(\varnothing)=0$. The
\emph{boundary false discovery rate} is
\[
\operatorname{bFDR}(R)=\Pr\{J(R)\in I_0\},\qquad 0\notin I_0.
\]
This controls the probability that the boundary discovery is null,
not the expected FDP or the local false discovery rate of every
selected hypothesis. Throughout, p-values are independent and
$p_i\sim\operatorname{Unif}(0,1)$ under $H_i$; nonnull distributions
are otherwise unrestricted.

\paragraph{The support-line baseline.}
Write $p_{(0)}=0$ and, for $q\in(0,1)$, define
\[
K_q=\max\operatorname*{arg\,max}_{0\le k\le m}
       \left\{\frac{qk}{m}-p_{(k)}\right\},
\qquad
B_q=\{(1),\ldots,(K_q)\},\qquad J_q=J(B_q).
\]
Thus $B_q=\operatorname{SL}_q(p)$, and $B_q=\varnothing$ if $K_q=0$.
The proof of Theorem~5.1 of \citet{XiangSoloffFithian2025} gives
\begin{equation}
\label{eq:asl-conditional-boundary}
\Pr_{H_i}(J_q=i\mid p_{-i})=\frac qm.
\end{equation}
Consequently, $\operatorname{bFDR}(B_q)=q|I_0|/m$.
Two deterministic properties will also be useful: $B_q$ increases
with $q$, and every nonempty boundary satisfies
$p_{J_q}\le qK_q/m\le q$.
We retain $B=B_\alpha$ while selecting a larger pilot level using
only upper-tail information.

\paragraph{An explicit adaptive pilot.}
Assume $m\ge2$ and fix, before observing the p-values,
\[
0<\alpha<\beta_{\max}\le t<1,
\qquad 1\le h<1/t,
\qquad \ell=\frac{1-th}{1-t}\in(0,1].
\]
Use the normalized two-level helper
\begin{equation}
\label{eq:asl-helper}
g(u)=h\mathbf1\{u\le t\}+\ell\mathbf1\{u>t\},
\qquad \int_0^1g(u)\,du=1,
\end{equation}
and let $V=\sum_{i=1}^m\mathbf1\{p_i>t\}$.
For $v=0,\ldots,m-2$, define the certificate-derived ceiling
\begin{equation}
\label{eq:asl-pilot-ceiling}
\beta_{\mathrm{cap}}(v)
=\min_{1\le n\le m-1}
\frac{\alpha(m-n)}{n}
\ell^{\min\{v,n-1\}}h^{(n-1-v)_+}.
\end{equation}
Set
\begin{equation}
\label{eq:asl-adaptive-pilot}
\psi(v)=
\begin{cases}
\max\{\alpha,\min\{\beta_{\max},\beta_{\mathrm{cap}}(v)\}\},
   &0\le v\le m-2,\\
\alpha,&v\in\{m-1,m\},
\end{cases}
\qquad \widehat\beta=\psi(V).
\end{equation}
The resulting rule is particularly simple:
\begin{equation}
\label{eq:asl-adaptive-output}
\bar B=
\begin{cases}
\varnothing,& B_\alpha=\varnothing,\\
B_{\widehat\beta},& B_\alpha\ne\varnothing.
\end{cases}
\end{equation}
The helper parameters and $\beta_{\max}$ are prespecified, but the
actual pilot level is data-adaptive. The minimum in
\eqref{eq:asl-pilot-ceiling} comes from the full closure constraints,
not from optimizing power on the observed data. The restriction
$\widehat\beta\le t$ is essential to the following validity argument.

\paragraph{Adaptive pilot evidence and anchoring.}
For each nonempty intersection $S$, with $n=|S|$, define
\begin{align}
F_S^{\mathrm{SL}}
&=\frac{m}{n}
  \sum_{j\in S}\frac{\mathbf1\{J_{\widehat\beta}=j\}}
                         {\widehat\beta}
  \prod_{i\in S\setminus\{j\}}g(p_i),
\label{eq:asl-enriched-pilot}\\
E_S^{\mathrm{SL}}
&=\frac{\mathbf1\{J_\alpha\in S\}}{\alpha}
  +\left(1-\frac nm\right)F_S^{\mathrm{SL}}.
\label{eq:asl-anchor}
\end{align}
The empty product is one, and the second term is omitted for
$S=[m]$. The pilot boundary contributes calibrated evidence,
enriched by the helper scores of the other coordinates in $S$.
For boundary loss, e-closure certifies a prefix $R$ precisely when
\begin{equation}
\label{eq:asl-closure}
\mathbf1\{J(R)\in S\}\le\alpha E_S^{\mathrm{SL}}
\qquad\text{for every nonempty }S\subseteq[m].
\end{equation}

\begin{proposition}[Adaptively piloted Anchored-SL]
\label{prop:asl}
Under the stated assumptions, both $F_S^{\mathrm{SL}}$ and
$E_S^{\mathrm{SL}}$ are valid local e-values for $H_S$.
The rule \eqref{eq:asl-adaptive-output} is the largest p-value prefix
containing $B_\alpha$ that is certified by \eqref{eq:asl-closure}.
In particular,
\[
\bar B\supseteq B_\alpha\quad\text{for every data vector},
\qquad \operatorname{bFDR}(\bar B)\le\alpha.
\]
The rule is computable in $O(m\log m)$ time and $O(m)$ memory.
\end{proposition}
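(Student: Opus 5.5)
The plan has four parts, in order: validity of the local e-values, a pointwise description of the certified prefixes, the boundary-FDR bound, and the running time. The central device is the conditional identity \eqref{eq:asl-conditional-boundary}, applied at a pilot level that is effectively frozen by conditioning.

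\emph{Validity.} Fix a nonempty $S$ of true nulls, $n=|S|$, and $j\in S$. Write $V_{-j}=\sum_{i\ne j}\ind\{p_i>t\}$. Any $q\le t$ with $J_q=j$ satisfies $p_j\le q\le t$, so on that event $V=V_{-j}$. Because $\widehat\beta=\psi(V)\le t$, this gives the event identity
\[
\{J_{\widehat\beta}=j\}=\{J_{\psi(V_{-j})}=j\},
\]
and on it $\widehat\beta=\psi(V_{-j})$ is $p_{-j}$-measurable. This is the step I expect to be the main obstacle, and it is exactly where the restriction $\widehat\beta\le t$ is used. Conditioning on $p_{-j}$ and applying \eqref{eq:asl-conditional-boundary} with the frozen level $q=\psi(V_{-j})$ gives
\[
\E\bigl[\ind\{J_{\widehat\beta}=j\}/\widehat\beta\given p_{-j}\bigr]=1/m .
\]
The product $\prod_{i\in S\setminus\{j\}}g(p_i)$ is $p_{-j}$-measurable. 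Its expectation is one, since the null coordinates are independent uniforms and $\int g=1$. Summing over $j\in S$ gives $\E F_S^{\rm SL}\le (m/n)\cdot n\cdot(1/m)=1$. Likewise $\E\,\ind\{J_\alpha\in S\}/\alpha=n/m$ by the same identity at level $\alpha$, so $\E E_S^{\rm SL}\le n/m+(1-n/m)=1$.

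\emph{Closure characterization.} If $B_\alpha=\varnothing$, then $E_{[m]}=0$, so only $\varnothing$ is certified. Now suppose $B_\alpha\neq\varnothing$ and let $R\supsetneq B_\alpha$ be a prefix.
\begin{itemize}
\item The singleton $S=\{J(R)\}$ has $\ind\{J_\alpha\in S\}=0$, so it forces $J_{\widehat\beta}=J(R)$. Hence $R=B_{\widehat\beta}$, using the fixed tie rule; any longer prefix fails.
\item For $R=B_{\widehat\beta}$, only intersections containing $j^\ast=J_{\widehat\beta}$ matter. For these, $\alpha E_S^{\rm SL}\ge \alpha(m-n)\,(n\widehat\beta)^{-1}\prod_{i\in S\setminus\{j^\ast\}}g(p_i)$.
\item Since $h\ge1\ge\ell$ and $p_{j^\ast}\le t$, the least favorable $S$ of size $n$ takes as many coordinates above $t$ as possible. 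That number is $\min\{V,n-1\}$, so the product is at least $\ell^{\min\{V,n-1\}}h^{(n-1-V)_+}$.
\item Certification for $n\le m-1$ therefore reduces to $\widehat\beta\le\beta_{\rm cap}(V)$, which holds by \eqref{eq:asl-adaptive-pilot}. The intersection $S=[m]$ is covered by $\ind\{J_\alpha\in[m]\}/\alpha=1/\alpha$.
\item The clamp $\widehat\beta\ge\alpha$ ensures $B_{\widehat\beta}\supseteq B_\alpha$, by monotonicity of $B_q$ in $q$. When $V\ge m-1$, we have $\widehat\beta=\alpha$ and $B_\alpha$ is certified by its first term.
\end{itemize}

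\emph{Boundary FDR and computation.} Evaluating \eqref{eq:asl-closure} at $S=I_0$ gives $\ind\{J(\bar B)\in I_0\}\le\alpha E_{I_0}^{\rm SL}$. Taking expectations gives $\operatorname{bFDR}(\bar B)\le\alpha$; the case $I_0=\varnothing$ is trivial. For computation, a single sort costs $O(m\log m)$. Then $V$ and the ceilings $\beta_{\rm cap}(v)$ are computed (the needed one in $O(m)$), and two support-line scans (at $\alpha$ and at $\widehat\beta$) are linear after sorting. The total is $O(m\log m)$ time and $O(m)$ memory.
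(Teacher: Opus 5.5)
Your route is the paper's own. It uses the leave-one-out freezing of the pilot level (your event identity is the paper's pathwise identity with $\beta_j=\psi(V_{-j})$), the singleton argument ruling out every other boundary, the least-favourable helper product $\ell^{\min\{V,n-1\}}h^{(n-1-V)_+}$, and evaluation of the closure at $I_0$.

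There is, however, one false step in the certification of $B_{\widehat\beta}$. You assert that $\widehat\beta\le\beta_{\mathrm{cap}}(V)$ ``holds by \eqref{eq:asl-adaptive-pilot}''. Because $\psi$ is clamped below at $\alpha$, this fails whenever $V\le m-2$ and $\beta_{\mathrm{cap}}(V)<\alpha$: then $\widehat\beta=\alpha>\beta_{\mathrm{cap}}(V)$. This case is not exotic. For $m\ge3$ and $V=m-2$, the $n=m-1$ term alone gives $\beta_{\mathrm{cap}}(m-2)\le\alpha\ell^{m-2}/(m-1)<\alpha$. Your final bullet only treats the other source of $\widehat\beta=\alpha$, namely $V\ge m-1$. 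So, as written, your sufficient condition fails in this case and certification of $B_{\widehat\beta}$ is not established there.

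The repair is the paper's case split on whether the boundary moves. If $B_{\widehat\beta}=B_\alpha$, every intersection containing $J(B_{\widehat\beta})=J_\alpha$ has $\alpha E_S^{\mathrm{SL}}\ge1$ from the anchor term alone, so no condition on $\beta_{\mathrm{cap}}$ is needed. If $B_{\widehat\beta}\supsetneq B_\alpha$, then $\widehat\beta\ne\alpha$, so $\widehat\beta>\alpha$. By \eqref{eq:asl-adaptive-pilot} this forces $V\le m-2$ and $\widehat\beta=\min\{\beta_{\max},\beta_{\mathrm{cap}}(V)\}\le\beta_{\mathrm{cap}}(V)$. Only in this case does your helper-product bound do the work. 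With that split, the rest of your argument goes through as written: validity, the singleton exclusion of longer prefixes, the bFDR bound, and the $O(m\log m)$ accounting.
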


\begin{proof}
First justify the adaptive choice. For each $j$, put
\[
\beta_j=\psi\!\left(\sum_{i\ne j}\mathbf1\{p_i>t\}\right).
\]
This is the pilot level obtained after replacing $p_j$ by zero,
and it is measurable with respect to $p_{-j}$. If $p_j\le t$,
then $\beta_j=\widehat\beta$. If $p_j>t$, neither pilot can have
boundary $j$, since both levels are at most $t$. Therefore,
pathwise,
\begin{equation}
\label{eq:asl-leave-one-out}
\frac{m\mathbf1\{J_{\widehat\beta}=j\}}{\widehat\beta}
=\frac{m\mathbf1\{J_{\beta_j}=j\}}{\beta_j}.
\end{equation}
Conditioning on $p_{-j}$ and applying
\eqref{eq:asl-conditional-boundary} shows that this score has
conditional null expectation one. Hence, under $H_S$,
\[
\mathbb E F_S^{\mathrm{SL}}
=\frac1n\sum_{j\in S}
  \mathbb E\prod_{i\in S\setminus\{j\}}g(p_i)=1,
\]
using independence of the null p-values and $\int g=1$.
It follows that
$\mathbb E E_S^{\mathrm{SL}}=n/m+(1-n/m)=1$.
The baseline is certified pointwise by the first term of
\eqref{eq:asl-anchor}.

It remains to check that the chosen pilot passes. There is nothing
to prove if $B_\alpha=\varnothing$ or $B_{\widehat\beta}=B_\alpha$.
Otherwise, both distinct boundary p-values are at most $t$, so
their helpers equal $h$. Intersections containing $J_\alpha$ pass
automatically. Among intersections of size $n$ containing
$J_{\widehat\beta}$ but excluding $J_\alpha$, the weakest helper
product uses the $n-1$ smallest helpers among the remaining
$m-2$ coordinates: $V$ copies of $\ell$ followed by copies of $h$.
Thus the minimum local evidence is
\begin{equation}
\label{eq:asl-fast}
\min_{\substack{|S|=n,\ J_{\widehat\beta}\in S\\J_\alpha\notin S}}
\alpha E_S^{\mathrm{SL}}
=
\frac{\alpha(m-n)}{\widehat\beta n}
\ell^{\min\{V,n-1\}}h^{(n-1-V)_+}.
\end{equation}
When the boundary changes, $\widehat\beta>\alpha$, and
\eqref{eq:asl-adaptive-pilot} ensures
$\widehat\beta\le\beta_{\mathrm{cap}}(V)$.
Every value in \eqref{eq:asl-fast} is therefore at least one.
All other candidate boundaries have zero singleton evidence,
so no larger prefix is certified. If the baseline is empty,
$E_{[m]}^{\mathrm{SL}}=0$ excludes every nonempty output.
Evaluating \eqref{eq:asl-closure} at $I_0$ now gives the bFDR
bound; the case $I_0=\varnothing$ is immediate.
Finally, one sort suffices for both SL runs. Counting $V$,
evaluating the $m-1$ terms of \eqref{eq:asl-pilot-ceiling}, and
scanning the two support-line objectives are all linear-time
operations. Products may be evaluated on the logarithmic scale.
\end{proof}

The leave-one-out identity \eqref{eq:asl-leave-one-out}, rather
than independence of the selected level and the data, justifies
adaptation. Arbitrary data-dependent choices of $t$, $h$, or the
pilot level are not covered by this argument. Also, the nonempty
rejection event is unchanged: the method augments existing SL
discoveries, rather than detecting configurations missed entirely
by the baseline.

For a strict example, take $m=4$, $\alpha=0.1$,
$\beta_{\max}=0.2$, $t=1/4$, $h=3$, and $\ell=1/3$.
At $p=(0.001,0.002,0.003,0.04)$, we have $V=0$,
$\beta_{\mathrm{cap}}(0)=0.3$, and $\widehat\beta=0.2$.
The baseline rejects three hypotheses, whereas
\eqref{eq:asl-adaptive-output} rejects all four.

\paragraph{An illustrative power comparison.}
We use $m=100$, $\alpha=0.10$, $\beta_{\max}=0.40$, and
\begin{equation}
\label{eq:asl-simulation-score}
t=1/2,\qquad h=1.1,\qquad \ell=0.9,
\qquad g(u)=0.9+0.2\,\mathbf1\{u\le1/2\}.
\end{equation}
Only $\widehat\beta$ adapts to the observed upper-tail count;
the displayed parameters are not fitted to the evaluation data.
Generate independent $Z_i\sim N(\mu_i,1)$ and use
$p_i=1-\Phi(Z_i)$. Table~\ref{tab:asl-power} has 90 nonnulls
with common mean $\mu$ and ten nulls with mean zero, with
10,000 repetitions per setting. Power is the mean fraction of
nonnulls rejected. Gaussianity is only the simulation model,
not an assumption of the method.

\begin{table}[t]
\centering
\small
\caption{SL and Anchored-SL with the adaptive pilot
\eqref{eq:asl-adaptive-pilot}, at nominal boundary-FDR level
$\alpha=0.10$. There are $m=100$ tests, 90 nonnulls, and 10,000
repetitions per row. The pilot ceiling is $\beta_{\max}=0.40$,
and the helper is \eqref{eq:asl-simulation-score}.
Power and empirical bFDR are percentages; gains are percentage
points computed before rounding.}
\label{tab:asl-power}
\begin{tabular}{rrrrrr}
\toprule
& \multicolumn{2}{c}{Power (\%)} & &
  \multicolumn{2}{c}{bFDR (\%)}\\
\cmidrule(lr){2-3}\cmidrule(lr){5-6}
$\mu$ & SL & Anchored-SL & Gain (pp) & SL & Anchored-SL\\
\midrule
1.5 & 22.85 & 42.30 & 19.45 & 0.90 & 2.64\\
2.0 & 44.71 & 70.34 & 25.64 & 0.99 & 3.88\\
2.5 & 63.83 & 82.17 & 18.33 & 0.94 & 3.79\\
3.0 & 77.70 & 89.49 & 11.79 & 0.95 & 3.98\\
3.5 & 87.20 & 94.13 &  6.93 & 1.06 & 3.80\\
\bottomrule
\end{tabular}
\end{table}

At $\mu=2$, the paired power gain is 25.64 percentage points,
with pointwise 95\% Monte Carlo interval $[25.44,25.83]$.
This spends some of the baseline's conservativeness: estimated
bFDR rises from $0.99\%$ to $3.88\%$, below the common nominal
$10\%$ target. These are not matched-realized-error comparisons.
In the companion grid with 10\% nonnulls there are no additions;
with 50\% nonnulls, power gains are below 0.02 percentage points.
An independent global-null check with 50,000 repetitions gives
bFDR $10.064\%$ for both procedures, compatible with $10\%$
within Monte Carlo uncertainty. The full study comprises
200,000 data sets across 16 settings. These results demonstrate
dense-signal gains, not a general sparse-power improvement or
superiority to the later procedures discussed next.

\paragraph{Comparison with adaptive support-line procedures.}
\citet{MostowXiang2026} adjust the support-line slope through
null-proportion estimation. Our pilot level is also data-adaptive,
but it is chosen from the explicit boundary-closure ceiling
\eqref{eq:asl-pilot-ceiling}, rather than by substituting a null
proportion estimate into the testing level. In particular, their
plus-one Storey estimator can exceed one and reduce the original
SL slope, while their calibrated two-stage rule starts at
$\alpha/(1+\alpha)$; neither construction necessarily retains
$B_\alpha$. Here $\widehat\beta\ge\alpha$ and the anchored
certificate guarantees retention. Conversely, their procedures
can choose different slopes or exceed our prespecified
$\beta_{\max}$. 
Table~\ref{tab:asl-power} compares the anchor with original SL,
not with those adaptive competitors.

\paragraph{Comparison with Domino.}
\citet{ZhangZhangZouRen2026} develop a general boundary-closure
framework whose dependence assumptions are determined by the
local tests. For the single-boundary case, a proposed boundary is
certified when every intersection containing it is rejected.
Taking $\phi_S=\mathbf1\{E_S^{\mathrm{SL}}\ge1/\alpha\}$ gives
exactly \eqref{eq:asl-closure}. Anchored-SL is therefore a
particular local-test construction within boundary closure,
not a competing closure principle. Its specific ingredients are
the upper-tail-adaptive pilot, the product-enriched evidence
justified by \eqref{eq:asl-leave-one-out}, and the explicit
sorting-time rule retaining the SL baseline. 

\section{Anchored conformal multiple testing}
\label{app:anchored-conformal}

Conformal testing supplies distribution-free p-values for detecting
outliers relative to a reference sample. Sharing the calibration sample
makes these p-values dependent, but their dependence can still yield
the coordinatewise bounds needed for anchoring. We first give a
sorting-time improvement of the split-conformal BH procedure of
\citet{BatesEtAl2023}, using only one additional BH scan. We then extend
the construction to ECOT, whose conditional calibration allows more
flexible data reuse. These are deductions from the respective contribution
bounds, not procedures proposed in those papers.

\paragraph{Split-conformal p-values and the baseline.}
Let $X_1^0,\ldots,X_N^0$ be calibration observations from $P_0$, and let
$X_1,\ldots,X_m$ be test observations with hypotheses $H_j:X_j\sim P_0$.
A score $s$ is fitted on independent training data; orient it so that larger
values indicate stronger evidence against the null. Conditional on the
training data, assume that the calibration and test observations are
mutually independent, that the calibration observations and true-null
test observations have common law $P_0$, and that $s(X)$ has a continuous
null distribution. Nonnull distributions are unrestricted. Define
\begin{equation}
\label{eq:aconf-pvalues}
p_j=\frac{1+\sum_{i=1}^N\mathbf1\{s(X_i^0)\ge s(X_j)\}}{N+1},
\qquad j\in[m].
\end{equation}
Theorem~2 of \citet{BatesEtAl2023} gives superuniformity and PRDS on the
null coordinates, conditionally on the training data. Continuity concerns
the scores: the p-values themselves lie on a finite grid and may tie.

For $\gamma\in(0,1)$, write
\[
b_\gamma=\max\bigl(\{k:p_{(k)}\le\gamma k/m\}\cup\{0\}\bigr),
\qquad
B_\gamma=\{j:p_j\le\gamma b_\gamma/m\},
\]
with $B_\gamma=\varnothing$ when $b_\gamma=0$.
The coordinatewise BH bound under PRDS \citep{BR2008} is
\begin{equation}
\label{eq:aconf-coordinate}
\mathbb E_{H_j}\left[
\frac{\mathbf1\{j\in B_\gamma\}}{|B_\gamma|\vee1}
\right]\le\frac\gamma m.
\end{equation}
The expectation averages over the calibration and test observations.
We do not assert FDR control conditional on the realized calibration
sample; conditioning on the independent training sample is permitted.

\paragraph{One liberal run supplies the auxiliary e-values.}
Fix $0<\alpha<\beta<1$ before inspecting the testing data, for example
$\beta=2\alpha$ when $\alpha<1/2$. Set
\[
B=B_\alpha,\quad C=B_\beta,\quad b=|B|,\quad c=|C|.
\]
Then $B\subseteq C$. By \eqref{eq:aconf-coordinate},
\[
x_j^{(\beta)}=
\frac{m\mathbf1\{j\in C\}}{\beta(c\vee1)}
\]
is an individual e-value. Hence, for every nonempty $S\subseteq[m]$,
\begin{align}
F_S^{\mathrm{conf}}
&=\frac{1}{|S|}\sum_{j\in S}x_j^{(\beta)}
=\frac{m|C\cap S|}{\beta(c\vee1)|S|},
\label{eq:aconf-pilot}\\
E_S^{\mathrm{conf}}
&=\frac{|B\cap S|}{\alpha(b\vee1)}
  +\left(1-\frac{|S|}{m}\right)F_S^{\mathrm{conf}}.
\label{eq:aconf-anchor}
\end{align}
Define $\operatorname{FDP}_S(R)=|R\cap S|/(|R|\vee1)$, and choose a
largest-cardinality certified superset $\bar B\supseteq B$ satisfying
\begin{equation}
\label{eq:aconf-closure}
\operatorname{FDP}_S(\bar B)\le\alpha E_S^{\mathrm{conf}}
\qquad(\varnothing\ne S\subseteq[m]).
\end{equation}
We call this the pilot-based \emph{Anchored conformal BH} procedure.

\begin{proposition}[Anchored conformal BH]
\label{prop:aconf-guarantee}
Under the stated split-conformal assumptions, the collection
\eqref{eq:aconf-anchor} consists of valid local e-values. Every measurable
certified superset satisfies
\[
\bar B\supseteq B_\alpha\quad\text{pointwise},
\qquad
\operatorname{FDR}(\bar B)\le\alpha.
\]
No independence between the baseline and the pilot is required.
\end{proposition}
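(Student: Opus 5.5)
The proof will follow the template of \cref{thm:anchor}, applied with the conformal coordinatewise bound \eqref{eq:aconf-coordinate} in place of the generic PRDS statement. Throughout I work conditionally on the independent training data, so that the score $s$ is fixed. I rely on the cited conformal results holding as stated; the tied p-value grid is handled by the $\le$ convention in the definition of $B_\gamma$.

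The first step is validity of the auxiliary evidence. Apply \eqref{eq:aconf-coordinate} at the fixed level $\beta$, which is allowed because $\beta$ is chosen before seeing the testing data. For each true null $j$ this gives $\E x_j^{(\beta)}\le1$. For any nonempty $S\subseteq I_0$, linearity then gives $\E F_S^{\mathrm{conf}}\le1$. No joint structure among the $x_j^{(\beta)}$ is needed, only their marginal expectations.

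The second step is validity of the anchored collection. Sum \eqref{eq:aconf-coordinate} at level $\alpha$ over $j\in S$ to get $\E\FDP_S(B)\le\alpha|S|/m$. Hence
\[
\E E_S^{\mathrm{conf}}\le\frac{|S|}{m}+\left(1-\frac{|S|}{m}\right)\E F_S^{\mathrm{conf}}\le1 .
\]
For $S=[m]$ the coefficient $1-|S|/m$ vanishes, so the bound reduces to the global-null BH statement. Dependence between $B$ and $C$ never enters, because each of the two terms is bounded separately.

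The third step is retention and FDR control. Pointwise, $\FDP_S(B)=|B\cap S|/(b\vee1)\le\alpha E_S^{\mathrm{conf}}$ for every $S$, since the second term is nonnegative. So $B$ is certified, and a largest certified superset $\bar B\supseteq B$ exists; this gives $\bar B\supseteq B_\alpha$ pointwise. For any certified $\bar B$, the constraint at $S=I_0$ gives $\FDP_{I_0}(\bar B)\le\alpha E_{I_0}^{\mathrm{conf}}$ when $I_0\ne\varnothing$, and the FDP is zero when $I_0=\varnothing$. Taking expectations yields $\FDR(\bar B)\le\alpha$, and in fact the simultaneous version \eqref{eq:simulfdr}.

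The only delicate point is not an algebraic one. It is confirming that \eqref{eq:aconf-coordinate} really holds for the conformal vector. This requires superuniformity and PRDS on the null coordinates (Theorem~2 of \citet{BatesEtAl2023}) together with the \citet{BR2008} dependency-control argument, which accommodates ties on the finite grid. It must also hold unconditionally over the calibration sample, so that the pilot normalization by $\beta$ is justified. Once that is granted, the rest is the e-closure argument already given.
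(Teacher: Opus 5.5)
Your proposal is correct and follows essentially the same route as the paper: sum the coordinatewise BH bound \eqref{eq:aconf-coordinate} at levels $\beta$ and $\alpha$ and use linearity to get $\mathbb E E_S^{\mathrm{conf}}\le1$, certify $B$ pointwise via the first term of \eqref{eq:aconf-anchor}, and evaluate the closure constraint at $I_0$. Your extra remarks (conditioning on the training data, ties on the conformal grid, no joint structure needed between $B$ and $C$) are consistent with the paper's assumptions.
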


\begin{proof}
Under $H_S$, summing \eqref{eq:aconf-coordinate} over $j\in S$ gives
\[
\mathbb E E_S^{\mathrm{conf}}
\le\frac{|S|}{m}
 +\left(1-\frac{|S|}{m}\right)\mathbb E F_S^{\mathrm{conf}}
\le1.
\]
The baseline is certified because its FDP certificate is the first
term of \eqref{eq:aconf-anchor}. For the actual nonempty null set $I_0$,
\eqref{eq:aconf-closure} gives
$\operatorname{FDR}(\bar B)\le\alpha\mathbb E E_{I_0}^{\mathrm{conf}}
\le\alpha$; if $I_0=\varnothing$, the FDP is zero.
\end{proof}

\paragraph{Exact computation from two counts.}
Although \eqref{eq:aconf-closure} ranges over all intersections, its full
superset search uses only $b$ and $c$. Return $B$ immediately if $b=0$
or $c=b$. Otherwise every addition must lie in $C\setminus B$, since a
singleton outside $B\cup C=C$ has zero evidence. For
$R=B\cup Q$, where $Q\subseteq C\setminus B$ and $|Q|=q\ge1$, put $r=b+q$.

\begin{proposition}[Two-count certificate]
\label{prop:aconf-counts}
For $0<b<c$, the candidate $R$ is certified if and only if
\begin{align}
\alpha(b+q)(c-q)&\ge\beta c(q+m-c),
\label{eq:aconf-count1}\\
\frac{q}{b+q}&\le
\frac{\alpha b(c-b)}{\beta c(m-b)}.
\label{eq:aconf-count2}
\end{align}
Certification depends on $Q$ only through $q$. Both conditions become
harder as $q$ increases.
\end{proposition}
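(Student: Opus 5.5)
The plan is to reuse, essentially verbatim, the corner argument from the proof of \cref{prop:indbh-two-count}. The conformal anchor \eqref{eq:aconf-anchor} has exactly the same algebraic form as \eqref{eq:anchored-indbh-pilot}, with $B\subseteq C$. The conformal structure enters only through validity, which \cref{prop:aconf-guarantee} already settles, so the certification question is purely deterministic.

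\textbf{Step 1: reduce to four counts.} Partition $[m]$ into $B$, $Q$, $C\setminus R$ and $C^c$. For an intersection $S$, write $a,x,y,z$ for its counts in these four blocks, so that $n=a+x+y+z$. Since $\FDP_S(R)=(a+x)/r$ and $|C\cap S|=a+x+y$, the constraint $\FDP_S(R)\le\alpha E_S^{\mathrm{conf}}$ is equivalent to
\[
 \frac{a}{b}-\frac{a+x}{r}+\frac{\alpha}{\beta c}\,\frac{(m-n)(a+x+y)}{n}\ge0 .
\]
This depends on $Q$ only through $q$, which proves the first claim. For $S=[m]$ the constraint holds trivially, since $E_{[m]}=1/\alpha$ when $b>0$.

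\textbf{Step 2: minimize over the block counts.} The left side is nonincreasing in $z$, because $(m-n)/n$ decreases in $n$. So we may take $z=m-c$. With $u=a+x+y$, the last term becomes proportional to $f(u)=u(c-u)/(u+m-c)$, which is concave in $u$. The whole expression is therefore concave in $(a,x,y)$ on the box $[0,b]\times[0,q]\times[0,c-b-q]$, and its minimum is attained at a vertex.

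Next, eliminate vertices. Any vertex with $a=b$ has baseline term $1-b/r$, which offsets the added penalty, and the auxiliary term is nonnegative; I would verify this vertex-by-vertex. Vertices with $x=0$ and $a=0$ have a nonnegative left side. Vertices with $x=0$ and $a=b$ are nonnegative since $b/b\ge b/r$. This leaves two binding vertices:
\begin{itemize}
\item $(0,q,0)$, with $n=q+m-c$, which gives \eqref{eq:aconf-count1};
\item $(0,q,c-b-q)$, with $n=m-b$ (that is, $S=B^c$), which gives \eqref{eq:aconf-count2}.
\end{itemize}
The vertex $(b,q,\cdot)$ needs a short check that it is dominated. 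The case $c=m$ uses the continuous extension of $f$ at $u=0$, exactly as in the IndBH proof.

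\textbf{Step 3: monotonicity in $q$.} In \eqref{eq:aconf-count2}, $q/(b+q)$ is increasing in $q$. In \eqref{eq:aconf-count1}, the $q$-derivative of $\alpha(b+q)(c-q)-\beta c(q+m-c)$ is $\alpha(c-b-2q)-\beta c$. This is below $\alpha c-\beta c<0$ because $\beta>\alpha$. Both conditions therefore get harder as $q$ grows.

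\textbf{Main obstacle.} I expect the main obstacle to be the vertex bookkeeping in Step 2, namely showing rigorously that every vertex other than the two listed is automatically nonnegative. I would first handle this by pairing each vertex with $a=b$ against the same vertex with $a=0$. The aim is to show the baseline gain $a(1/b-1/r)\ge0$ together with the monotone auxiliary term cannot make that vertex the worst case.
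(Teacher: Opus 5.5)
Your proposal is correct and follows the paper's proof essentially step for step. Both use the four-block count reduction, the minimization at $z=m-c$, and concavity of $u(c-u)/(u+m-c)$ forcing a corner minimum. Both identify the two nonautomatic corners $(0,q,0)$ and $(0,q,c-b-q)$, and both close with the derivative check for monotonicity in $q$.

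The vertex bookkeeping you flag as the main obstacle is immediate and needs no pairing argument. At $a=b$ the first two terms equal $(q-x)/r\ge0$, and the auxiliary term is nonnegative.
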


\begin{proof}
For an intersection $S$, let $a,x,y,z$ count its elements in
$B,Q,C\setminus R,C^c$, respectively, and put $n=a+x+y+z$.
The closure inequality is
\[
\frac{aq}{br}-\frac{x}{r}
+\frac{\alpha}{\beta c}\frac{(m-n)(a+x+y)}{n}\ge0.
\]
For fixed $a,x,y$, the left side is minimized at $z=m-c$.
Writing $u=a+x+y$, its nonlinear part is proportional to the concave
function $u(c-u)/(u+m-c)$. Minimizing over the remaining count box
therefore reduces to its corners. The only nonautomatic corners are
$(a,x,y)=(0,q,0)$ and $(0,q,c-b-q)$, giving
\eqref{eq:aconf-count1} and \eqref{eq:aconf-count2}.
When $c=m$, omit the empty intersection and use the continuous extension
at $u=0$ for this corner argument. Finally,
$\alpha(b+q)(c-q)-\beta c(q+m-c)$ decreases in $q$, whereas
$q/(b+q)$ increases.
\end{proof}

Find the largest feasible $q\in\{1,\ldots,c-b\}$ by binary search,
with $q=0$ always permitted, and add any $q$ elements of $C\setminus B$.
A natural choice is decreasing $s(X_j)$, breaking score ties by index;
this also resolves ties between equal conformal p-values.
After the p-values are sorted, the two BH counts require linear scans
and the augmentation requires $O(\log m)$ arithmetic. Thus testing and
anchoring together cost $O(m\log m)$ time and $O(m)$ memory after the
conformal p-values are formed. Forming all ranks costs
$O(N\log N+m\log(N+1))$ after score evaluation.
There is no refitting, permutation integration, or extra sample split
for the pilot anchor.

For a strict conformal-grid example, take $N=99$, $m=10$,
$\alpha=0.05$, $\beta=0.10$, and
\[
p_1=\cdots=p_8=0.01,\qquad p_9=p_{10}=0.06.
\]
These are possible ranks in \eqref{eq:aconf-pvalues}. The baseline
selects eight and the pilot selects ten. With $q=2$, the two conditions
are $4\ge2$ and $1/5\le2/5$, so the anchor selects all ten.

\paragraph{Simulation: a shared calibration sample.}
We use $m=100$, $\alpha=0.10$, $\beta=0.20$, and $N\in\{199,999\}$.
Calibration and null test scores are independent $N(0,1)$;
nonnull scores are independent $N(\mu,1)$. The score is the observation
itself, with no training or model selection. Every test p-value uses
the same calibration sample through \eqref{eq:aconf-pvalues}, so the
p-values are dependent even in this simple experiment.
The grid has $10$, $50$, or $90$ nonnulls and $\mu\in\{2,3\}$, with
5,000 repetitions per setting, plus two 10,000-repetition global nulls:
80,000 data sets across 14 settings. Table~\ref{tab:aconf-power}
reports the dense settings. Comparisons share the simulated data and
use a common nominal FDR level, not matched realized FDR.

\begin{table}[t]
\centering
\small
\caption{Conformal BH and its pilot anchor: $m=100$, 90 nonnulls,
$\alpha=0.10$, and $\beta=0.20$. Each row has 5,000 repetitions.
Power and empirical FDR are percentages; gains are percentage points.
This is a fixed-score split-conformal experiment, not an evaluation
of learned or model-selected ECOT scores.}
\label{tab:aconf-power}
\begin{tabular}{rrrrrrr}
\toprule
& & \multicolumn{2}{c}{Power (\%)} && \multicolumn{2}{c}{FDR (\%)}\\
\cmidrule(lr){3-4}\cmidrule(lr){6-7}
$N$ & $\mu$ & Conformal BH & Anchored & Gain & Conformal BH & Anchored\\
\midrule
199 & 2 & 65.78 & 72.12 & 6.34 & 0.97 & 1.41\\
199 & 3 & 94.44 & 98.03 & 3.59 & 1.00 & 1.97\\
999 & 2 & 67.69 & 74.91 & 7.22 & 0.98 & 1.47\\
999 & 3 & 94.88 & 98.13 & 3.25 & 1.00 & 2.00\\
\bottomrule
\end{tabular}
\end{table}

For $N=199$ and $\mu=2$, the paired power gain is 6.34 percentage
points, with pointwise 95\% Monte Carlo interval $[6.19,6.48]$.
There are no additions in any of the 40,000 evaluations with 10\%
or 50\% nonnulls. The global-null estimated FDRs are 0.35\% for
$N=199$ and 9.87\% for $N=999$, identical for baseline and anchor.
The finite-grid calibration can be conservative, and the full-intersection
certificate forces an empty output whenever the baseline is empty.
The method therefore does not remove calibration-resolution or sparse-signal
limitations. These are our illustrative simulations, not results reported
by \citet{BatesEtAl2023}. The count shortcut was checked against exhaustive
closure in 210 small cases using exact rational arithmetic; all 1,023
nonempty intersections were also checked in the strict example above.

\subsection{Anchored ECOT}
\label{app:anchored-ecot}

ECOT (Enhanced COnformal Testing) uses individualized scores and
permutation calibration to reuse labeled and unlabeled data
\citep{HuoWuZouRen2026}. Its p-values need not satisfy the split-conformal
PRDS argument. We instead use the coordinatewise conditional-calibration
bound, including the final pruning step. The result here concerns the
unrescaled ECOT procedure, not versions that already estimate the null
proportion and rescale the testing level.

\paragraph{A self-contained conditional-calibration formulation.}
Index test hypotheses by $[m]$, and let $\mathcal C$ be disjoint
calibration indices. Assume that calibration and null test observations
are exchangeable conditional on the remaining data. For each $j$, let
$\Omega_j$ permute $\mathcal C\cup\{j\}$ while fixing all other indices.
For a fitted score $s^{(j)}$, let $s_\sigma^{(j)}$ be the score refitted
after permutation $\sigma$, and define
\[
p_j=\frac1{|\Omega_j|}\sum_{\sigma\in\Omega_j}
\mathbf1\{s^{(j)}(X_j)\le s_\sigma^{(j)}(X_{\sigma(j)})\}.
\]
Set $\tilde p_j^{(j)}=0$ and, for $k\ne j$, take
\[
\tilde s^{(j)}(X_k)=
\operatorname{Median}_{\sigma\in\Omega_j}s_\sigma^{(j)}(X_k),
\qquad
\tilde p_k^{(j)}=\frac1{|\Omega_j|}\sum_{\sigma\in\Omega_j}
\mathbf1\{\tilde s^{(j)}(X_k)\le s_\sigma^{(j)}(X_{\sigma(j)})\}.
\]
Let $\Psi_j$ retain the unordered observations in
$\mathcal C\cup\{j\}$ and all remaining data. ECOT's Theorem~1 establishes,
under $H_j$,
\begin{equation}
\label{eq:aecot-invariance}
\Pr(p_j\le u\mid\Psi_j,H_j)\le u,
\qquad
\tilde p^{(j)}\text{ is }\Psi_j\text{-measurable}.
\end{equation}
An equivalent symmetry-based implementation may replace the full
permutation calculation when the score construction permits it.

At a testing level $\gamma$, put
\[
r_j(\gamma)=|\operatorname{BH}_\gamma(\tilde p^{(j)})|\ge1,
\qquad
I_\gamma=\{j:p_j\le\gamma r_j(\gamma)/m\}.
\]
If $I_\gamma=\varnothing$, return $A_\gamma=\varnothing$.
If $|I_\gamma|\ge r_j(\gamma)$ for every $j\in I_\gamma$, return
$A_\gamma=I_\gamma$. Otherwise draw independent
$U_j\sim\operatorname{Unif}(0,1)$, independent of the data, and let
$A_\gamma$ be BH at level one applied, within $I_\gamma$, to
$U_jr_j(\gamma)/|I_\gamma|$. This is ECOT's completed
conditional-calibration rule \citep{FithianLei2022,HuoWuZouRen2026}.

\paragraph{The contribution bound and the anchored guarantee.}
The pruning argument gives, conditional on all observed data,
\[
\mathbb E_U\left[
\frac{\mathbf1\{j\in A_\gamma\}}{|A_\gamma|\vee1}
\,\middle|\,\text{data}\right]
\le \frac{\mathbf1\{p_j\le\gamma r_j(\gamma)/m\}}{r_j(\gamma)}.
\]
The same inequality is pointwise when pruning is unnecessary.
Conditioning next on $\Psi_j$ and applying
\eqref{eq:aecot-invariance} yields
\begin{equation}
\label{eq:aecot-coordinate}
\mathbb E_{H_j}\left[
\frac{\mathbf1\{j\in A_\gamma\}}{|A_\gamma|\vee1}
\,\middle|\,\Psi_j\right]\le\frac\gamma m.
\end{equation}
This is the coordinatewise content of the proof of ECOT's Theorem~1,
especially its equation~(E.16), before summing over null hypotheses.

For $A=A_\alpha$, any valid local e-values $F_S$ therefore give
\begin{equation}
\label{eq:aecot-anchor}
E_S^{\mathrm{ECOT}}=
\frac{|A\cap S|}{\alpha(|A|\vee1)}
+\left(1-\frac{|S|}{m}\right)F_S.
\end{equation}
Summing \eqref{eq:aecot-coordinate} over $j\in S$ proves
$\mathbb E_{H_S}E_S^{\mathrm{ECOT}}\le1$.
Choosing any measurable superset $\bar A\supseteq A$ with
$\operatorname{FDP}_S(\bar A)\le\alpha E_S^{\mathrm{ECOT}}$ for all $S$
then gives
\[
\bar A\supseteq A_\alpha
\quad\text{for every realized data set and pruning outcome},
\qquad \operatorname{FDR}(\bar A)\le\alpha.
\]
The retained baseline is the final ECOT output, not its initial set
$I_\alpha$, which need not itself control FDR.

A fixed liberal run $C=A_\beta$ supplies the immediate choice
\begin{equation}
\label{eq:aecot-pilot}
x_j^{(\beta)}=
\frac{m\mathbf1\{j\in C\}}{\beta(|C|\vee1)},
\qquad
F_S=\frac1{|S|}\sum_{j\in S}x_j^{(\beta)}.
\end{equation}
Their validity follows from \eqref{eq:aecot-coordinate} at level $\beta$.
The two runs may share all observations; no independence between them
is needed. When the score-construction rule is fixed across testing
levels, its permutation calculations can be reused for the second run.
If fitting or score selection depends on the level, it must instead be
rerun in its prescribed permutation-valid manner.

\paragraph{Cheap closure even without nested ECOT outputs.}
In general $A\subseteq C$ need not hold. Nevertheless, the pilot
collection \eqref{eq:aecot-pilot} still has a count-based exact closure.
Write $b=|A|$, $c=|C|$, and $d=|A\cap C|$.
If $b=0$, $c=0$, or $C\subseteq A$, return $A$.
Otherwise every certified addition belongs to $C\setminus A$.
Consider $R=A\cup Q$ with $Q\subseteq C\setminus A$,
$|Q|=q$, and $r=b+q$.
For each $n=1,\ldots,m$, set
\[
\theta_n=\frac{\alpha(m-n)}{\beta cn},
\qquad
a_i(n,R)=\frac{\mathbf1\{i\in A\}}b
          -\frac{\mathbf1\{i\in R\}}r
          +\theta_n\mathbf1\{i\in C\}.
\]
All size-$n$ closure inequalities hold exactly when the sum of the
$n$ smallest values of $a_i(n,R)$ is nonnegative.
There are only five distinct values, with their multiplicities:
\[
\begin{array}{c|c}
\text{value}&\text{multiplicity}\\ \hline
1/b-1/r+\theta_n & d\\
1/b-1/r & b-d\\
\theta_n-1/r & q\\
\theta_n & c-d-q\\
0 & m-b-c+d.
\end{array}
\]
Sorting these five numbers and filling $n$ places takes constant
arithmetic per $n$. Checking every $q\in\{1,\ldots,c-d\}$ therefore
finds a largest certified superset in $O(m^2)$ additional arithmetic
and $O(m)$ memory, after the two ECOT outputs are available.
This is an exact reduction for the pilot collection; no nesting
or common p-value ordering is assumed. Select any $q$ pilot-only
hypotheses, for example those with smallest conformal p-values.
If $A\subseteq C$, the sharper two-count characterization
\eqref{eq:aconf-count1}--\eqref{eq:aconf-count2} applies instead.

\paragraph{A pilot without a second pruning step.}
The conditional bound also supplies
\[
\tilde x_j^{(\beta)}=
\frac{\mathbf1\{p_j\le t_j(\beta)\}}{t_j(\beta)},
\qquad t_j(\beta)=\frac{\beta r_j(\beta)}m,
\qquad
\mathbb E_{H_j}[\tilde x_j^{(\beta)}\mid\Psi_j]\le1.
\]
Their averages are another valid choice of $F_S$ in
\eqref{eq:aecot-anchor}. This avoids a second randomized pruning step,
although its unequal scores no longer have the five-count form above.
Uncalibrated products of conformal scores, naive Monte Carlo replacements
of permutation normalizers, and data-selected pilot levels are not
justified by these arguments.

When all test points use one common score invariant under permutations
of the pooled calibration and null observations, ECOT reduces to ordinary
BH on conformal p-values (\citealp{HuoWuZouRen2026}, Proposition~2.2),
so the first part
of this appendix and its sorting-time algorithm apply directly.
Table~\ref{tab:aconf-power} also evaluates this common-score ECOT
special case. It does not evaluate individualized learned scores,
level-dependent model selection, or random pruning. In those cases the
bounds above concern the cost of augmentation after ECOT's own
calibration; they do not make arbitrary full-permutation fitting cheap.

\section{Timing and reproducibility}\label{app:reproducibility}
\Cref{tab:timing} reports median elapsed times over seven repeats on one fixed input per dimension. Inputs have independent Gaussian noise, 90\% positive nonnull means equal to 3, and one-sided p-values. Times include sorting and score construction, but exclude first-time compilation. The examples often terminate early because of screening bounds; they are not empirical worst-case complexity estimates.
\begin{table}[ht]
\centering\small
\caption{Elapsed time in milliseconds. The general finite-profile bound is quadratic; the pilot bound is $O(m\log m)$.}\label{tab:timing}
\begin{tabular}{@{}rrrrr@{}}
\toprule
$m$ & BH & Pilot-BH anchor & Fixed-hinge anchor & Adaptive staircase\\
\midrule
100 &0.002&0.003&0.062&0.013\\
1,000&0.019&0.020&0.322&0.027\\
5,000&0.111&0.118&1.111&0.153\\
10,000&0.261&0.263&2.682&0.500\\
\bottomrule
\end{tabular}
\end{table}

\FloatBarrier
\subsection*{Ordered algorithms: application times and reproduction}\label{app:orderedtiming}
\Cref{tab:ordered-timing} shows timings from the ordered study, including score construction and conditional calibration but excluding first-time compilation. Each entry is the median of five calls on one fixed input per size. These are not worst-case inputs. The full recursive prefix is not asserted to have the quadratic cost of the raw-baseline anchor.
\begin{table}[htbp]
\centering\small
\caption{Median ordered-method elapsed times (milliseconds). The final structural scans are cheaper than the conditional calculations. A dash denotes an unmeasured dimension.}\label{tab:ordered-timing}
\begin{tabular}{@{}lrrrr@{}}
\toprule
Method & $m=100$ & $m=300$ & $m=1000$ & $m=5000$\\
\midrule
Adaptive SeqStep, same cutoff &0.160&0.943&10.21&217.00\\
Adaptive SeqStep, rectangle &0.175&1.107&10.52&221.59\\
SeqStep+, calibrated and recycled prefix &18.79&---&---&---\\
\bottomrule
\end{tabular}
\end{table}

\FloatBarrier

\end{document}